\newcommand{\MCO}[3]{[#1]^{#2}_{#3}}
\newcommand{\sqO}[2]{[#1]_{#2}}
\newcommand{\floor}[1]{\lfloor #1 \rfloor}
\newcommand{\ceil}[1]{\lceil #1 \rceil}
\newcommand{\abs}[1]{| #1 |}
\newcommand{\rpicl}{\Pi}
\newcommand{\rpi}[1]{\rpicl(#1)}
\newcommand{\rpisub}[1]{\rpicl_{#1}}
\newcommand{\rv}[2]{R_{#2}(#1)}
\newcommand{\rx}{R_{\hat{x}}}
\newcommand{\rpigatesub}[1]{\gate{\mathrm{\rpisub{#1}}}}
\newcommand\eye{\ensurestackMath{\stackinset{c}{}{c}{-.3pt}%
  {\bullet}{\scriptstyle\bigcirc}}}  
\newcommand\smallereye{\scalerel*{\eye}{x}}
\newcommand\eyeeyes{\ensurestackMath{\stackinset{c}{}{c}{-.0pt}%
  {\smallereye}{\scriptstyle\bigcirc}}}
  \newcommand\eyeeye{\scalerel*{\eyeeyes}{.2pt}}
\newcommand{\controlt}{*!<0em,.0em>-=-<.2em>{\eye}}
\newcommand{\controltt}{*!<0em,.0em>-=-<.5em>{\eyeeye}}
\newcommand{\target}{*!<0em,.0em>-=-<.2em>{\bigoplus}}
\newcommand{\ctrlt}[1]{\controlt \qwx[#1] \qw}
\newcommand{\ctrltt}[1]{\controltt \qwx[#1] \qw}
\newcommand*\circled[1]{\tikz[baseline=(char.base)]{
            \node[shape=circle,draw,inner sep=0.8pt] (char) {#1};}}
\newcommand{\qwl}{\ar @{-/} [0,-1]}
\newcommand{\ctrlp}[2]{\control \qwx[#1] \ar @{<-} [0,-1] _{#2}}
\newcommand{\ctrlpd}[2]{\control \qwx[#1] \ar @{<-} [0,-1] ^{#2}}
\newtheorem{lemma}{Lemma}
\newcommand{\eq}[1]{\ref{eq:#1}}
\renewcommand{\sec}[1]{Section~\hyperref[sec:#1]{\ref*{sec:#1}}}
\newcommand{\ssec}[1]{Section~\hyperref[ssec:#1]{\ref*{ssec:#1}}}
\newcommand{\fig}[1]{Figure~\hyperref[fig:#1]{\ref*{fig:#1}}}
\newcommand{\tab}[1]{Table~\hyperref[tab:#1]{\ref*{tab:#1}}}
\newcommand{\lem}[1]{Lemma~\hyperref[lem:#1]{\ref*{lem:#1}}}
\newcommand{\prop}[1]{\hyperref[prop:#1]{Proposition~\ref*{prop:#1}}}
\newcommand{\thm}[1]{Theorem~\hyperref[thm:#1]{\ref*{thm:#1}}}
\newcommand{\apx}[1]{Appendix~\hyperref[apx:#1]{\ref*{apx:#1}}}
\newcommand{\qc}[1]{Circuit~(\hyperref[qc:#1]{\ref*{qc:#1}})}
\newcounter{qcnum}
\newcommand{\qcref}[1]{\refstepcounter{qcnum}\tag{\theqcnum}\label{qc:#1}}
\font\myfont=cmr12 at 8pt
\title{Multi-Controlled Quantum Gates in Linear Nearest Neighbor} 
\author{Ben Zindorf and Sougato Bose}
\affil{\myfont Department of Physics and Astronomy, University College London,\\
\myfont Gower Street, WC1E 6BT London, United Kingdom}
\date{\today}
\begin{document}

\maketitle

\begin{abstract}
Multi-controlled single-target (MC) gates are some of the most crucial building blocks for varied quantum algorithms.
How to implement them optimally is thus a pivotal question. To answer this question in an architecture-independent manner, and to get a worst-case estimate, we should look at a linear nearest-neighbor (LNN) architecture, as this can be
embedded in almost any qubit connectivity.
Motivated by the above, here we describe a method which implements MC gates using no more than $\sim\! 4k+8n$ CNOT gates -- up-to $60\%$ reduction over state-of-the-art -- while allowing for complete flexibility to choose the locations of $n$ controls, the target, and a dirty ancilla out of $k$ qubits.
More strikingly, in case $k\approx n$, our upper bound is $\sim\!12n$ -- the best known for unrestricted connectivity -- and if $n=1$, our upper bound is $\sim\! 4k$ -- the best known for a single long-range CNOT gate over $k$ qubits --
therefore, if our upper bound can be reduced, then the cost of one or both of these simpler versions of MC gates will be immediately reduced accordingly.
In practice, our method provides circuits that tend to require fewer CNOT gates than our upper bound for almost any given instance of MC gates. 

\end{abstract}

\section{Introduction}

Multi-controlled single-target quantum gates are used as a fundamental building block for many quantum algorithms \cite{arrazola_universal_2022,de_carvalho_parametrized_2024,ni_progressive_2024,iten_quantum_2016, malvetti_quantum_2021,grinko_efficient_2023,tanasescu_distribution_2022,park_circuit-based_2019,ali_function_2018,grover_quantum_1997,lubasch_quantum_2025,suzuki_double-bracket_2025,sambasivam_tepid-adapt_2025,balaji_quantum_2025,gluza_double-bracket_2024,ortega_implementing_2025,bosch_quantum_2025,wang_improved_2025,anderson_solving_2025,nibbi_block_2024}, as they provide a convenient logical framework, being the quantum counterpart of the classical if-then-else statements \cite{lewis_matrix_2022}.
However, in most existing technologies, the MC gates must be implemented using a set of available operations which are constrained to be one or two qubit gates. The Clifford+T gate set \cite{bravyi_universal_2005,nielsen_quantum_2002,jones_logic_2013}, constituted of the Hadamard, CNOT and T gates, is a good example for such a set, as it is a prime candidate for future error-corrected fault-tolerant quantum computation, {\em and} it can be applied to near-term NISQ (Noisy Intermediate-Scale Quantum) devices as well.
Although not all NISQ devices can implement the CNOT directly, there is generally an available two-qubit gate \cite{abughanem_ibm_2024,abughanem_toffoli_2025,xue_quantum_2022} which can be easily converted to a CNOT using few single-qubit rotations which are considered less ''expensive'' due to their lower error rate and shorter execution time, compared to the two-qubit gates. On the contrary, in the fault-tolerant regime, the most expensive resource is the T gate, and consequently, single qubit rotations with arbitrary angles such as the $R_z$ gates, as these are approximated using a very large number of T gates, which depends on the chosen angle and the level of accuracy \cite{bravyi_universal_2005}.
In the quest of finding an efficient implementation of MC gates, one must consider its relevance both for the future and for near-term applications --  use the minimal number of CNOT, T, and arbitrary $R_z$ gates. 

As many quantum computers are constrained by restricted qubit interactions, an implementation of MC gates must be mapped to a given qubit connectivity. A linear nearest-neighbor (LNN) connectivity only allows to apply two-qubit gates on two nearest-neighboring qubits in a 1D array and is therefore considered to be one of the most restrictive options. Moreover, efficient LNN decompositions are highly useful in practice as these can be efficiently mapped to unrestricted all-to-all (ATA) connectivity, as well as to 2D qubit arrangements, which are widely used in existing devices. As there are many possible arrangements of qubits in a 2D structure, such as the square lattice used by Google, and the heavy-hex used by IBM, or ones that may arise in the future, it is useful to focus on LNN which can be then mapped to any of them.
If, on the other hand, one focused on another specific architecture, its applicability would be limited as it may not be efficiently mapped to other ones.
In the extreme, mapping from ATA to LNN could quadratically increase the cost due to many added SWAP gates \cite{pedram_layout_2016,saeedi_synthesis_2011,he_mapping_2019,khan_cost_2008,lukac_optimization_2021,ding_fast_2019,zhang_method_2022}. 
In fact, it was only recently discovered that MC gates in LNN can be implemented using a linear CNOT count \cite{zindorf_efficient_2024}, similarly to the ATA case \cite{barenco_elementary_1995,vale_decomposition_2023,iten_quantum_2016,he_decompositions_2017,balauca_efficient_2022,maslov_reversible_2011,kole_improved_2017,sasanian_ncv_2011,biswal_improving_2016,ali_quantum_2015,niemann_t-depth_2019,leng_decomposing_2024,abdessaied_technology_2016,khattar_rise_2024}, instead of a quadratic gate count \cite{arsoski_implementing_2024,arsoski_multi-controlled_2025,cheng_mapping_2018,chakrabarti_nearest_2007,miller_elementary_2011,li_quantum_2023,tan_multi-strategy_2018}. 
The approach taken in the methods which produce quadratic cost was to first assume a specific choice of the qubits of interest ($n$ controls, the target, and a dirty ancilla out of $k$ qubits), out of $\sim\!\binom{k}{n}$ possible combinations, and provide a linear cost implementation for this choice, then simply relocate the qubits to a chosen location. However, this relocation results in an overhead of CNOT gates that scales quadratically with the number of qubits. The linear-cost method only assumes the location of one qubit (target/ancilla), and allows to implement the MC gate in linear gate count for {\em any} choice of the controls, then a linear overhead is added in case the qubit with assumed location is located elsewhere. A first step in the attempt to bring the LNN cost as close as possible to the ATA one is to provide a structure which can be applied for any choice of all qubits of interest, with no assumptions, and thus requires no qubit relocation. 

In this paper, we focus on minimizing the gate count of CNOT,T and H (Clifford+T) required to implement the multi-controlled Toffoli gate (MCX) in LNN connectivity, using one dirty ancilla qubit, and the multi-controlled $SU(2)$ (MCSU2) gate, without any ancilla requirement. As the latter is a parameterized gate, it will require a number of arbitrary $R_z$ gates, which we also try to minimize. Our method allows to upper-bound the CNOT gate count as $4k+8n-16$ for MCX, and as $4k+8n-14$ for MCSU2.
When the qubit arrangement allows to further reduce the cost, using non-control qubits as dirty ancilla, we apply those reductions. This results in lower gate counts than the upper bound in many cases. In addition, we maximize the parallelization of our structure when it is possible to do so without increasing the gate count.
In the process of developing our structure, we provide an implementation of the MCX gate up to a relative phase (MCX-$\Delta$), which is also very useful for many quantum algorithms \cite{maslov_advantages_2016,oonishi_efficient_2022,kuroda_optimization_2022,selinger_quantum_2013}. As an aside, we also present a new gate, that generalizes the MCX-$\Delta$, which we find to be useful for circuit optimization.



\subsection{Notation}\label{sec:notation}
We introduce a set of notations which will be used throughout the paper, starting with Hermitian $\pi$-rotations \cite{zindorf_all_2025}, followed by multi-controlled gates \cite{zindorf_efficient_2024}.
These notations will help simplify the description of our methods. 

Any single-qubit $W\in SU(2)$ operator can be represented as a rotation about the Bloch sphere by an angle $\lambda$ about an axis $\hat{v}$ and written as $W=\rv{\lambda}{\hat{v}}$ \cite{nielsen_quantum_2010}.
We define a Hermitian $\pi$-rotation about an axis $\hat{v}=(\sin{\theta}\cos{\phi},\sin{\theta}\sin{\phi},\cos{\theta})$, using standard spherical coordinates, as follows.
\[
\rpi{\hat{v}}:=i\rv{\pi}{\hat{v}}=
\left(
\begin{matrix}
\cos{\theta} & e^{-i\phi}\sin{\theta}\\
e^{i\phi}\sin{\theta} & -\cos{\theta} 
\end{matrix}
\right)
\]
Any such $\Pi$ gate satisfies $\rpi{\hat{v}}\rpi{\hat{v}}=I$ and $\Pi(\hat{v})\Pi(-\hat{v})=-I$ for any axis $\hat{v}$.
It has been shown in [\cite{zindorf_all_2025}, Lemma 1] that any $SU(2)$ rotation can be decomposed using two $\Pi$ gates as follows.
\begin{lemma}\label{lem:2_rpi}
Any $\rv{\lambda}{\hat{v}}\in SU(2)$ operator can be implemented as $\rpi{\hat{v}_2}\rpi{\hat{v}_1}$ 
with $\hat{v}_1$ as any unit vector perpendicular to $\hat{v}$ (i.e., $\hat{v}_1 \perp \hat{v}$), and $\hat{v}_2 = \hat{R}_{\hat{v}}(\frac{\lambda}{2})\hat{v}_1$ with $\frac{\lambda}{2}\in (-\pi,\pi]$ ($\hat{v}_2 \perp \hat{v}$ as well).
\end{lemma}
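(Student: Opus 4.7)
The plan is to reduce the claim to the standard Pauli product rule. First, I would observe by direct inspection of the given matrix that $\rpi{\hat v}$ is precisely the ``Pauli vector'' $\hat v\cdot\vec\sigma$: in the stated spherical coordinates the matrix entries are exactly the coefficients one gets from $\sin\theta\cos\phi\,\sigma_x + \sin\theta\sin\phi\,\sigma_y + \cos\theta\,\sigma_z$. This turns each $\Pi$ gate into a traceless Hermitian involution with a clean algebraic form, which is the identification that does all the real work.

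Second, I would multiply two such operators using the well-known identity $(\vec a\cdot\vec\sigma)(\vec b\cdot\vec\sigma) = (\vec a\cdot\vec b)I + i(\vec a\times\vec b)\cdot\vec\sigma$ to obtain
\[
\rpi{\hat v_2}\rpi{\hat v_1} \;=\; (\hat v_1\cdot\hat v_2)\,I \;+\; i(\hat v_2\times\hat v_1)\cdot\vec\sigma,
\]
and compare with the standard axis--angle form $\rv{\lambda}{\hat v} = \cos(\lambda/2)\,I - i\sin(\lambda/2)(\hat v\cdot\vec\sigma)$. Matching the scalar and vector parts produces two conditions: $\hat v_1\cdot\hat v_2 = \cos(\lambda/2)$ and $\hat v_2\times\hat v_1 = -\sin(\lambda/2)\,\hat v$, the second of which forces both $\hat v_1$ and $\hat v_2$ to lie in the plane normal to $\hat v$.

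Third, I would verify these conditions under the prescription of the lemma. Since $\hat v_1\perp\hat v$ and $\hat R_{\hat v}(\lambda/2)$ is a rotation about $\hat v$, the image $\hat v_2 = \hat R_{\hat v}(\lambda/2)\hat v_1$ also lies in $\hat v^\perp$. Hence $\hat v_1$ and $\hat v_2$ are coplanar unit vectors in that plane separated by the signed angle $\lambda/2$ about $\hat v$, so elementary planar trigonometry gives $\hat v_1\cdot\hat v_2 = \cos(\lambda/2)$ and $\hat v_1\times\hat v_2 = \sin(\lambda/2)\,\hat v$, exactly matching the two conditions above.

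The main (and really only) subtle point is the orientation bookkeeping: one must check that the right-hand-rule convention used by $\hat R_{\hat v}(\lambda/2)$ in $SO(3)$ is consistent in sign with the $SU(2)$ convention $\rv{\lambda}{\hat v} = e^{-i\lambda\hat v\cdot\vec\sigma/2}$, and that letting $\lambda/2$ range over $(-\pi,\pi]$ together with arbitrary $\hat v_1\in\hat v^\perp$ reaches every axis--angle rotation in $SU(2)$, modulo the standard $\rv{\lambda}{\hat v} = \rv{-\lambda}{-\hat v}$ redundancy. Once those conventions are pinned down, the identification $\rpi{\hat v} = \hat v\cdot\vec\sigma$ plus the Pauli product rule does the rest.
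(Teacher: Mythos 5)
Your proof is correct. Note that the paper itself does not prove this lemma locally --- it imports it from Lemma 1 of \cite{zindorf_all_2025} --- so there is no in-text argument to compare against; your derivation stands as a valid, self-contained proof. The identification $\rpi{\hat{v}}=\hat{v}\cdot\vec{\sigma}$ is immediate either from the displayed matrix or from the definition $\rpi{\hat{v}}=i\rv{\pi}{\hat{v}}=i e^{-i\pi\hat{v}\cdot\vec{\sigma}/2}$, and the Pauli product identity then reduces the claim to the two conditions $\hat{v}_1\cdot\hat{v}_2=\cos(\lambda/2)$ and $\hat{v}_1\times\hat{v}_2=\sin(\lambda/2)\,\hat{v}$, which hold by construction for $\hat{v}_2=\hat{R}_{\hat{v}}(\lambda/2)\hat{v}_1$ with $\hat{v}_1\perp\hat{v}$. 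The sign bookkeeping you flag does come out right under the Nielsen--Chuang convention $\rv{\lambda}{\hat{v}}=e^{-i\lambda\hat{v}\cdot\vec{\sigma}/2}$ used here; a quick consistency check against the paper is its identity $\rv{\tfrac{\pi}{2}}{\hat{z}}=\Pi_S X$, which your prescription reproduces with $\hat{v}_1=\hat{x}$ and $\hat{v}_2=\hat{v}_S$. Finally, the restriction $\frac{\lambda}{2}\in(-\pi,\pi]$ requires no separate surjectivity argument: both $\hat{R}_{\hat{v}}(\lambda/2)$ and $\rv{\lambda}{\hat{v}}$ are $2\pi$-periodic in $\lambda/2$, so reducing $\lambda/2$ to that interval changes neither side of the identity, and every rotation about $\hat{v}$ in $SU(2)$ is already of the form $\rv{\lambda}{\hat{v}}$ with such a $\lambda$.
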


Here, $\hat{R}_{\hat{v}}(\frac{\lambda}{2})$ is the three-dimensional rotation matrix by angle $\frac{\lambda}{2}$ about the axis $\hat{v}$. We now define a notation for $\Pi$ gates about an axis on the planes $\bar{x},\bar{y},\bar{z}$, perpendicular to the $\hat{x},\hat{y}, \hat{z}$ axes. These are visualized in \fig{pi_sphere_angles}.

\[
\Pi^\theta_{\bar{x}}:=\Pi(\hat{v}^\theta_{\bar{x}}), \Pi^\theta_{\bar{y}}:=\Pi(\hat{v}^\theta_{\bar{y}}), \Pi^\phi_{\bar{z}}:=\Pi(\hat{v}^\phi_{\bar{z}}), \text{with }  \hat{v}^\theta_{\bar{x}}:=\hat{R}_{\hat{x}}(\theta)\hat{z}, \hat{v}^\theta_{\bar{y}}:=\hat{R}_{\hat{y}}(\theta)\hat{z}, \hat{v}^\phi_{\bar{z}}:=\hat{R}_{\hat{z}}(\phi)\hat{x}
\]

The Pauli operators $X,Y,Z$ are simply $\Pi$ gates about the axes $\hat{x},\hat{y},\hat{z}$, respectively. Similarly, the Hadamard gate $H=\Pi^{\pi/4}_{\bar{y}}$ is a $\Pi$ gate about the vector $\hat{v}_H:=\hat{v}^{\pi/4}_{\bar{y}} = (\hat{x}+\hat{z})/\sqrt{2}$ located in the middle between $\hat{x}$ and $\hat{z}$, as visualized in \fig{pi_sphere}. Similarly, we define the gates $\Pi_S:=\Pi^{\pi/4}_{\bar{z}}$ and $\Pi_V:=\Pi^{\pi/4}_{\bar{x}}$ about the axes $\hat{v}_S:=\hat{v}^{\pi/4}_{\bar{z}} = (\hat{x}+\hat{y})/\sqrt{2}$ and $\hat{v}_V:=\hat{v}^{\pi/4}_{\bar{x}} = (\hat{z}+(-\hat{y}))/\sqrt{2}$. We are naming these gates this way because of their relation to the $S=\sqrt{Z}=e^{i\tfrac{\pi}{4}} R_{\hat{z}}(\tfrac{\pi}{2})$ and $V=\sqrt{X} = e^{i\tfrac{\pi}{4}} R_{\hat{x}}(\tfrac{\pi}{2})$ gates, such that $R_{\hat{z}}(\tfrac{\pi}{2}) = \Pi_S X$ and $R_{\hat{x}}(\tfrac{\pi}{2}) = \Pi_V Z$ according to \lem{2_rpi}. 

The Pauli, H, $\Pi_S$ and $\Pi_V$ gates are all part of the Clifford group and therefore do not suffice for universal quantum computation. In order to add the missing {\em magic} \cite{bravyi_universal_2005}, we introduce the gate $\Pi_T:=\Pi^{\pi/8}_{\bar{z}}$ about the axis $\hat{v}_T:=\hat{v}^{\pi/8}_{\bar{z}} = \tfrac{\hat{x}+\hat{v}_S}{\abs{\hat{x}+\hat{v}_S}}$  which satisfies $R_{\hat{z}}(\tfrac{\pi}{4})=\Pi_T X$, and corresponds to the $T=\sqrt{S}$ gate. Similarly, we define the gate $\Pi_{Tx}:=\Pi^{\pi/8}_{\bar{x}}$ with $\hat{v}_{Tx}:=\hat{v}^{\pi/8}_{\bar{x}} = \tfrac{\hat{z}+\hat{v}_V}{\abs{\hat{z}+\hat{v}_V}}$ satisfying $R_{\hat{x}}(\tfrac{\pi}{4})=\Pi_{Tx} Z$, and corresponds to $T_x:=\sqrt{V}$. All of these gates are presented in \fig{pi_sphere}, which also demonstrates that $v_H=\tfrac{\hat{v}_S+\hat{v}_V}{\abs{\hat{v}_S+\hat{v}_V}}=\tfrac{\hat{v}_T+\hat{v}_{Tx}}{\abs{\hat{v}_T+\hat{v}_{Tx}}}$, which can be easily verified using vector arithmetic. 
As shown in \cite{zindorf_all_2025}, a Hermitian gate set composed of the $H$ and $\Pi_T$ gates -- implementable as $\pi$-rotations/pulses about $\hat{v}_H$ and $\hat{v}_T$ -- along with the CNOT gate form a Hermitian set which suffices for universal quantum computation.


\begin{figure}[H]
    \centering
    \begin{subfigure}{.49\linewidth}
    \centering
    \includegraphics[width=0.70\linewidth]{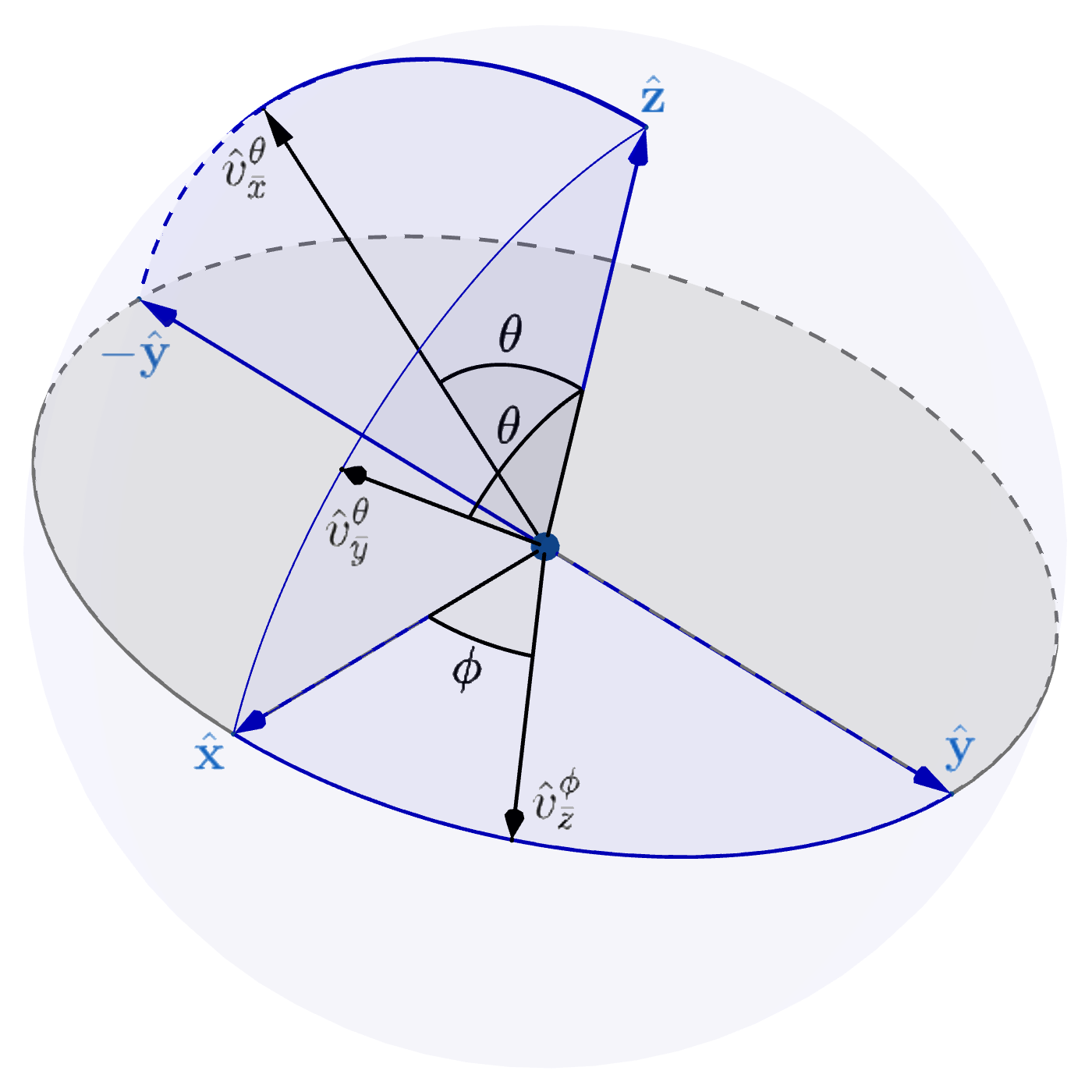}
    \caption{}
    \label{fig:pi_sphere_angles}
    \end{subfigure}
    \begin{subfigure}{.49\linewidth}
    \centering
    \includegraphics[width=0.70\linewidth]{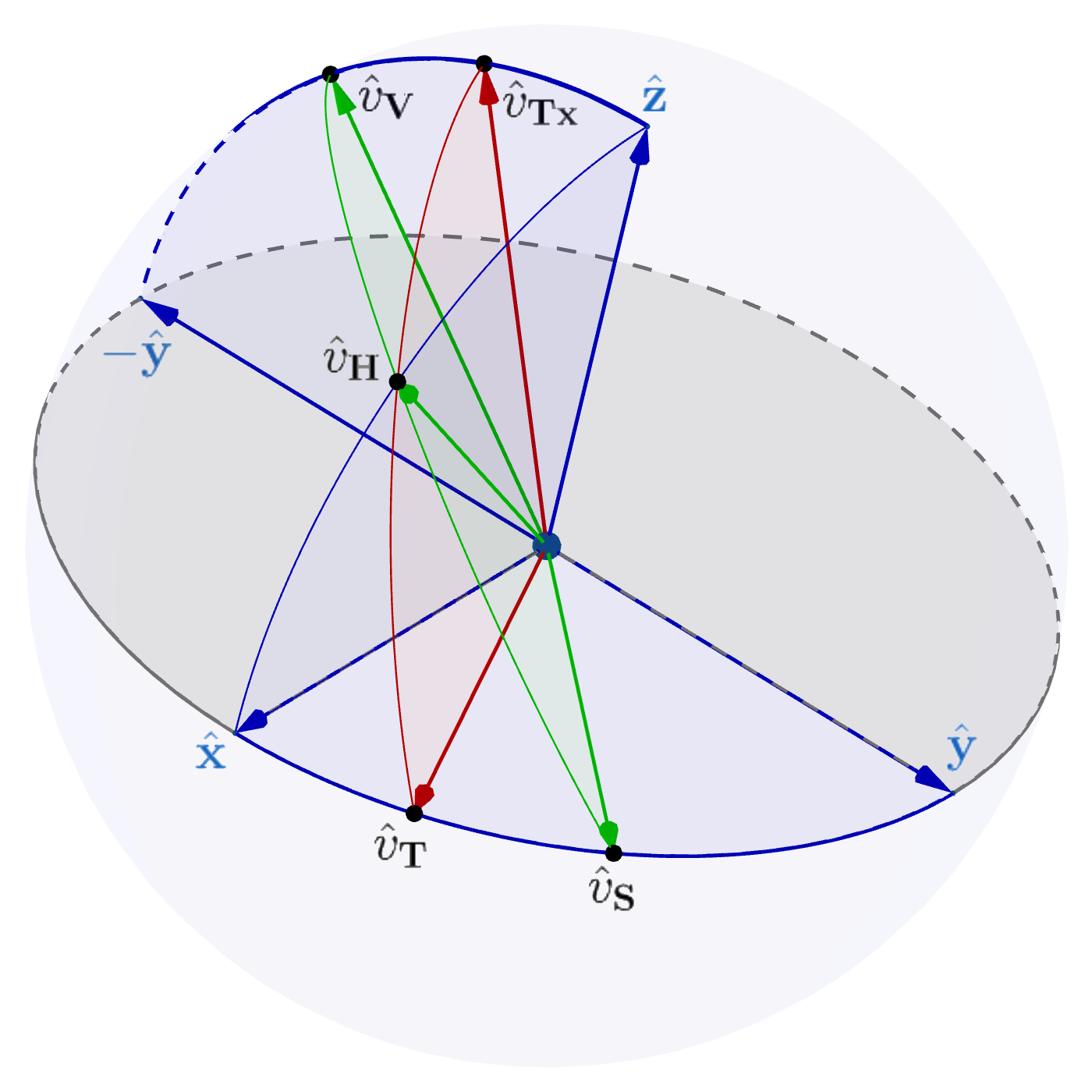}
    \caption{}
    \label{fig:pi_sphere}    
    \end{subfigure}
    \caption{A geometrical description of the axes mentioned in this section. (a) The parameterized axes $\hat{v}^\theta_{\bar{x}},\hat{v}^\theta_{\bar{y}}$ and $\hat{v}^\phi_{\bar{z}}$. (b) The fixed axes $\hat{v}_H,\hat{v}_S,\hat{v}_V,\hat{v}_T$ and $\hat{v}_{Tx}$.} 
    \label{fig:sphere_vecs}    
\end{figure}

The CNOT gate, a.k.a controlled $X$, is a two-qubit gate which applies the Pauli $X$ operator on a target qubit $q_t$ if a control qubit $q_c$ is in the state $\ket{1}$, and applies $I$ for state $\ket{0}$. The control qubit can be replaced by a control set $C=\{q_{c_1},q_{c_2},..,q_{c_n}\}$, holding $n$ qubits, to define a multi-controlled version of this gate which we mark as $\MCO{X}{C}{q_t}$ \cite{zindorf_efficient_2024}, such that if $n=1$, this gate is a CNOT, and if $n=2$, the gate is referred to as the Toffoli. In general, this gate is referred to as the multi-controlled Toffoli, MCT, or MCX and applies the Pauli $X$ on $q_t$ if $C$ is in the state $\ket{11..1}$, and the identity operator $I$ otherwise. 
Similarly, a multi-controlled version of any single qubit operator $O\in U(2)$ can be defined as $\MCO{O}{C}{q_t}$. In case $n=0$, we get a single-qubit gate operating on $q_t$ which we mark as $\sqO{O}{q_t}:=\MCO{O}{\emptyset}{q_t}$. 

If $O$ is a diagonal matrix of the form $P(\phi):=\left(
\begin{matrix}
1 & 0\\
0 & e^{i\phi}
\end{matrix}
\right)$, the phase $e^{i\phi}$ is only applied if the entire controls-target qubit set $\{C,q_t\}:=\{q_{c_1},..,q_{c_n},q_t\}$ is in state $\ket{11..1}$ and therefore it is unnecessary to specify the target qubit out of the controls-target set. In this case we can write $\MCO{P(\phi)}{\{C,q_t\}}{}:=\MCO{P(\phi)}{C}{q_t}$ as a multi-controlled phase gate (MCP). The multi-controlled Pauli Z gate (MCZ) is a special case of an MCP with $\phi=\pi$, and therefore has a unique visual representation. The MCZ, MCX and the general $\MCO{O}{C}{q_t}$ gates are represented as follows.
\[
\scalebox{0.7}{
  \Qcircuit @C=1.0em @R=0.38em @!R {
              \nghost{{C} :  } \\
	 	\nghost{{C} :  } & \lstick{{C} :  } & \qw {/^n}& \ctrl{3} & \qw \\
        \nghost{{C} :  } \\
        \nghost{{C} :  } \\
	 	\nghost{{q_t} :  } & \lstick{{q_t} :  } & \qw  & \control\qw & \qw \\
        \nghost{{C} :  }\\
 }
\hspace{5mm}\raisebox{-12mm}{:=}\hspace{0mm}
  \Qcircuit @C=1.0em @R=0em @!R {
              \nghost{{C} :  } \\
	 	\nghost{{C} :  } & \lstick{{C} :  } & \qw {/^n}& \ctrl{3} & \qw \\
        \nghost{{C} :  } \\
        \nghost{{C} :  } \\
	 	\nghost{{q_t} :  } & \lstick{{q_t} :  } & \qw  & \gate{\mathrm{Z}} & \qw \\
        \nghost{{C} :  } \\
 }
\hspace{5mm}\raisebox{-20mm}{;}\hspace{0mm}
  \Qcircuit @C=1.0em @R=0.38em @!R {
              \nghost{{C} :  } \\
	 	\nghost{{C} :  } & \lstick{{C} :  } & \qw {/^n}& \ctrl{3} & \qw \\
        \nghost{{C} :  } \\
        \nghost{{C} :  } \\
	 	\nghost{{q_t} :  } & \lstick{{q_t} :  } & \qw  & \targ & \qw \\
        \nghost{{C} :  }\\
 }
\hspace{5mm}\raisebox{-12mm}{:=}\hspace{0mm}
  \Qcircuit @C=1.0em @R=0em @!R {
              \nghost{{C} :  } \\
	 	\nghost{{C} :  } & \lstick{{C} :  } & \qw {/^n}& \ctrl{3} & \qw \\
        \nghost{{C} :  } \\
        \nghost{{C} :  } \\
	 	\nghost{{q_t} :  } & \lstick{{q_t} :  } & \qw  & \gate{\mathrm{X}} & \qw \\
        \nghost{{C} :  } \\
 }
\hspace{5mm}\raisebox{-20mm}{;}\hspace{0mm}
  \Qcircuit @C=1.0em @R=0em @!R {
              \nghost{{C} :  } \\
	 	\nghost{{C} :  } & \lstick{{C} :  } & \qw {/^n}& \ctrl{3} & \qw \\
        \nghost{{C} :  } \\
        \nghost{{C} :  } \\
	 	\nghost{{q_t} :  } & \lstick{{q_t} :  } & \qw  & \gate{\mathrm{O}} & \qw \\
        \nghost{{C} :  } \\
 }
\hspace{5mm}\raisebox{-12mm}{:=}\hspace{0mm}
\Qcircuit @C=1.0em @R=0.0em @!R { 
	 	\nghost{q_{{c}_{1}} :  } & \lstick{q_{{c}_{1}} :  } & \ctrl{1} & \qw \\
	 	\nghost{q_{{c}_{2}} :  } & \lstick{q_{{c}_{2}} :  } & \ctrl{1} & \qw \\
        \nghost{{q}_{1} :  } &  & \ar @{.} [1,0]   \\
        \nghost{{q}_{1} :  } &  &  \\
	 	\nghost{q_{{c}_{n}} :  } & \lstick{q_{{c}_{n}} :  } & \ctrl{1}\ar @{-} [-1,0] & \qw \\
	 	\nghost{{q_t} :  } & \lstick{{q_t} :  } & \gate{\mathrm{O}} & \qw\\
   }
   }
\]

These notations are useful when there is no restriction to the qubit connectivity, such as in all-to-all connected architectures. However, when the qubit connectivity is restricted, we must account for qubit ordering. In the case of LNN connectivity, a circuit is applied on an ordered set of qubits $Q=\{q_1,q_2,..,q_k\}$ ('LNN set'), such that two-qubit gates, e.g. CNOT, can only be applied on two nearest-neighboring qubits in a 1D array, i.e., on $\{q_j,q_{j+1}\}$ for $j\in [1,k-1]$. The $n+1$ indices $t,c_1,c_2,..,c_n$ can be freely chosen from the $k$ qubits in the circuit, with $k\binom{k-1}{n}$ possible combinations. As we wish to avoid drawing many circuits to account for every possible combination, we provide the notation in \qc{arrow_notation} to describe a multi-controlled gate, controlled by a set $C$ which is a subset of the LNN set $Q'\in Q\setminus q_t$ of size $k'\leq k-1$. 

\[
    \scalebox{0.7}{
  \Qcircuit @C=1.0em @R=0.em @!R {
  \\
	 	\nghost{{Q'} :  } & \lstick{{Q'} :  } & \qwl& \ctrlp{2}{C} & \qw \\
        \\
	 	\nghost{{q_t} :  } & \lstick{{q_t} :  } & \qw  & \gate{\mathrm{O}} & \qw \\
 }
 \hspace{5mm}\raisebox{-9mm}{:=}\hspace{0mm}
 \Qcircuit @C=0.5cm @R=0.cm @!R {
         \nghost{} & \lstick{} &  &  & & \qw & \nghost{} & \lstick{Q'\setminus C:} & \qw {/^{k'-n}} &  \qw \\
         \nghost{{Q'} :  } & \lstick{{Q'} :  } & \qw {/^{k'}} & \qw \link{-1}{1}\link{1}{1} &  & & & & & & \link{1}{-1}\link{-1}{-1} & \qw {/^{k'}} & \qw & \lstick{Q'} \\
         \nghost{} & \lstick{} & & & &  \qw  &  \lstick{C:} & \qw {/^n} & \ctrl{1}  &  \qw \\
         \nghost{{q_t} :  } & \lstick{{q_t} :  }  & \qw & \qw & \qw & \qw & \qw & \qw  & \gate{\mathrm{O}} & \qw & \qw  & \qw & \qw & \lstick{q_t}
     }
     }
     \qcref{arrow_notation}
\]

For a more general description and clarification, we can define $Q'_1=\{q_1,..,q_{t-1}\}$ and $Q'_2=\{q_{t+1},..,q_{k}\}$ such that $Q'_1 \cup Q'_2 =  Q\setminus q_t$. Similarly, the control set can be split into $C'_1\in Q'_1$ and $C'_2\in Q'_2$ such that $C'_1\cup C'_2 = C$. This allows us to account for any location of the target, as well as the controls. For example, if $Q'_2=\emptyset$, the target is located at the bottom of the circuit, and if $Q'_1,Q'_2\not=\emptyset$, then the target is somewhere in the middle. For a compact description, we use $Q_1=Q'_1\setminus q_{t-1}$, $Q_2=Q'_2\setminus q_{t+1}$, $c_-\in\{q_{t-1}\}$, $c_+\in\{q_{t+1}\}$, $C_1=C'_1\setminus c_-$, and $C_2=C'_2\setminus c_+$ in the following circuits.

\[
    \begin{gathered}
            \scalebox{0.7}{
  \Qcircuit @C=0.5em @R=0.6em @!R {
	 \nghost{{Q'_1} :  } & \lstick{{Q'_1} :  } & \qwl& \ctrlp{2}{C'_1} & \qw \\
     \\
	 	\nghost{{q_t} :  } & \lstick{{q_t} :  } & \qw  & \gate{\mathrm{O}} & \qw \\
        \\
	 \nghost{{Q'_2} :  } & \lstick{{Q'_2} :  } & \qwl& \ctrlpd{-2}{C'_2} & \qw \\
 }
     }
    \end{gathered}
\hspace{4mm}\raisebox{0.mm}{=}\hspace{0mm}
\begin{gathered}
\scalebox{0.7}{
 \Qcircuit @C=0.5em @R=1.3em @!R { 
  \nghost{{Q_1} :  } & \lstick{{Q_1} :  } & \qwl& \ctrlp{1}{C_1} & \qw \\
	 	\nghost{q_{t-1} :} & \lstick{q_{t-1} :} & \qw & \ctrlp{1}{c_-} & \qw \\
	 	\nghost{q_{t} :} & \lstick{q_{t} :} & \qw & \gate{\mathrm{O}}  & \qw \\
            \nghost{q_{t+1} :} & \lstick{q_{t+1} :} & \qw & \ctrlpd{-1}{c_+}  & \qw \\
  \nghost{{Q_2} :  } & \lstick{{Q_2} :  } & \qwl& \ctrlpd{-1}{C_2} & \qw \\
   }
 }
 \end{gathered}
\hspace{2mm}\raisebox{0.mm}{=}\hspace{2mm}
    \begin{tabular}{|c|c|c|}
    \hline
        \text{case} & $c_+\!=\!\{q_{t+1}\}$ & $c_+=\emptyset$ \\
         \hline
        $c_-\!=\!\{q_{t-1}\}$ & $\begin{gathered}
    \scalebox{0.6}{
 \Qcircuit @C=0.5em @R=0.em @!R { 
  \lstick{} & \qwl& \ctrlp{1}{C_1} & \qw \\
	 	\lstick{} & \qw & \ctrl{1} & \qw \\
	 	\lstick{} & \qw & \gate{\mathrm{O}}  & \qw \\
            \lstick{} & \qw & \ctrl{-1}  & \qw \\
  \lstick{} & \qwl& \ctrlpd{-1}{C_2} & \qw \\
   }
 }
    \end{gathered}$   &  $\begin{gathered}
    \scalebox{0.6}{
 \Qcircuit @C=0.5em @R=0.em @!R { 
  \lstick{} & \qwl& \ctrlp{1}{C_1} & \qw \\
	 	\lstick{} & \qw & \ctrl{1} & \qw \\
	 	\lstick{} & \qw & \gate{\mathrm{O}}  & \qw \\
            \lstick{} & \qw & \qw  & \qw \\
 \lstick{} & \qwl& \ctrlpd{-2}{C_2} & \qw \\
   }
 }
    \end{gathered}$\\
    \hline
        $c_-\!=\!\emptyset$ & $\begin{gathered}
    \scalebox{0.6}{
 \Qcircuit @C=0.5em @R=0.em @!R { 
  \lstick{} & \qwl& \ctrlp{2}{C_1} & \qw \\
	 	\lstick{} & \qw & \qw & \qw \\
	 	\lstick{} & \qw & \gate{\mathrm{O}}  & \qw \\
            \lstick{} & \qw & \ctrl{-1}  & \qw \\
  \lstick{} & \qwl& \ctrlpd{-1}{C_2} & \qw \\
   }
 }
    \end{gathered}$  & 
    $\begin{gathered}
    \scalebox{0.6}{
 \Qcircuit @C=0.5em @R=0.em @!R { 
  \lstick{} & \qwl& \ctrlp{2}{C_1} & \qw \\
	 	\lstick{} & \qw & \qw & \qw \\
	 	\lstick{} & \qw & \gate{\mathrm{O}}  & \qw \\
            \lstick{} & \qw & \qw  & \qw \\
 \lstick{} & \qwl& \ctrlpd{-2}{C_2} & \qw \\
   }
 }
    \end{gathered}$\\
    \hline
    \end{tabular}
    \qcref{small_arrow_not}
\]

While we develop our method using the $\pi$-rotation formalism, in many cases the $\Pi$ gates must be decomposed in terms of other available gates. We will provide the final decompositions in terms of Clifford+T and $R_{\hat{z}}$ gates with arbitrary angles. In order to keep track of all gate counts we define the following vectors:
\[
    L_{CX} = (1,0,0,0),
    L_{T} = (0,1,0,0),
    L_{H} = (0,0,1,0),
    L_{R_z} = (0,0,0,1)
    .
\]
Such that, for example, the cost of a CZ gate can be written as $L_{CZ} = L_{CX}+2L_{H} = (1,0,2,0)$, as it requires one CNOT and two Hadamard gates.
This allows to conveniently describe all gate counts using one vector.

{

\section{Multi-Controlled SU(2)} \label{sec:MCSU2_LNN}

In this section, we present a structure which can be used to efficiently implement any multi-controlled $SU(2)$ (MCSU2) gate, without any ancilla requirements, over a set $Q=\{q_1,..,q_k\}$ of $k\geq 2$ LNN connected qubits. 
The MCSU2 gate, $\MCO{R_{\hat{v}}(\lambda)}{C}{q_t}$ with $R_{\hat{v}}(\lambda)\in SU(2)$, is controlled by a qubit set $C\in\ Q\setminus q_t$ of size $n\in [1,k-1]$, such that $q_t$ is the target qubit.
The qubit set $Q$ is defined as the smallest LNN circuit that holds all the qubits on which the MCSU2 gate is applied, and the ones in between, ignoring additional qubits located above and below, i.e. $q_1,q_k\in\{C,q_t\}$.
Our structure is described using the $\Pi$ gate formalism, and will later be decomposed in terms of the standard Clifford+T and $R_{\hat{z}}$ rotations with arbitrary angles.

In order to simplify the problem, we use \lem{MC_transform} (Lemma 4 in\cite{zindorf_all_2025}). At the cost of two $\Pi$ gates, it allows us to focus on a subgroup of all $SU(2)$, holding rotations by arbitrary angles, about a specific axis.

\begin{lemma}\label{lem:MC_transform} 
$\MCO{\omega\rv{\lambda}{\hat{v}}}{C}{q_t}=\sqO{\rpi{\hat{v}_{M}}}{q_t}\MCO{\omega\rv{\lambda}{\hat{v}'}}{C}{q_t}\sqO{\rpi{\hat{v}_{M}}}{q_t}$ with $\omega = e^{i\psi}$
for any angles $\lambda,\psi$, and unit vectors $\hat{v},\hat{v}'$. The vector $\hat{v}_M$ is located in the middle between $\hat{v},\hat{v}'$ such that
$\hat{v}' = \hat{R}_{\hat{v}_M}(\pi)\hat{v}$.
\end{lemma}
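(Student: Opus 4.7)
The plan is to split the multi-controlled gate along its control-register block structure and reduce the claim to a single-qubit identity on $q_t$, which I then verify via the $SO(3)$ action of conjugation by a $\pi$-rotation.

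First I would write
\[
\MCO{\omega\rv{\lambda}{\hat{v}'}}{C}{q_t} = (I_C - P)\otimes I_{q_t} + P \otimes \omega\rv{\lambda}{\hat{v}'},
\]
where $P=\ket{1\ldots 1}\!\bra{1\ldots 1}$ on $C$. Conjugating by $\sqO{\rpi{\hat{v}_M}}{q_t}$ touches only the target register, so using $\rpi{\hat{v}_M}\rpi{\hat{v}_M}=I$ the non-trigger block is preserved and the trigger block becomes $P\otimes \rpi{\hat{v}_M}\,\omega \rv{\lambda}{\hat{v}'}\,\rpi{\hat{v}_M}$. Since the scalar $\omega$ commutes through $\rpi{\hat{v}_M}$, the lemma reduces to the purely single-qubit statement
\[
\rpi{\hat{v}_M}\,\rv{\lambda}{\hat{v}'}\,\rpi{\hat{v}_M} = \rv{\lambda}{\hat{v}}.
\]

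Next I would establish this using the standard $SU(2)$--$SO(3)$ correspondence. Writing $\rv{\lambda}{\hat{n}}=\exp(-i\tfrac{\lambda}{2}\hat{n}\cdot\vec{\sigma})$ and noting that $\rpi{\hat{v}_M}=i\,\rv{\pi}{\hat{v}_M}$ differs from $\hat{v}_M\cdot\vec{\sigma}$ by a scalar that cancels in the conjugation, it suffices to verify
\[
(\hat{v}_M\cdot\vec{\sigma})(\hat{n}\cdot\vec{\sigma})(\hat{v}_M\cdot\vec{\sigma}) = \bigl(\hat{R}_{\hat{v}_M}(\pi)\hat{n}\bigr)\cdot\vec{\sigma}
\]
for any unit $\hat{n}$, since exponentiation then gives $\rpi{\hat{v}_M}\,\rv{\lambda}{\hat{n}}\,\rpi{\hat{v}_M} = \rv{\lambda}{\hat{R}_{\hat{v}_M}(\pi)\hat{n}}$. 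This operator identity follows from $(\hat{u}\cdot\vec{\sigma})(\hat{v}\cdot\vec{\sigma}) = (\hat{u}\cdot\hat{v})I + i(\hat{u}\times\hat{v})\cdot\vec{\sigma}$ together with the reflection formula $\hat{R}_{\hat{u}}(\pi)\hat{v} = 2(\hat{u}\cdot\hat{v})\hat{u} - \hat{v}$. Setting $\hat{n} = \hat{v}'$ and using the hypothesis $\hat{v}' = \hat{R}_{\hat{v}_M}(\pi)\hat{v}$ with $\hat{R}_{\hat{v}_M}(\pi)^2 = I$ gives $\rv{\lambda}{\hat{R}_{\hat{v}_M}(\pi)\hat{v}'} = \rv{\lambda}{\hat{v}}$, closing the reduction.

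The main obstacle is mostly bookkeeping: the only non-trivial content is the conjugation-as-reflection identity, which is a routine Pauli algebra computation; everything else is block-diagonal manipulation of the controlled gate plus the observation that $\omega$ is a scalar. One small care point is the sign ambiguity in $\hat{v}_M$ (note that the paper records $\Pi(\hat{v})\Pi(-\hat{v})=-I$, i.e.\ $\Pi(-\hat{v}_M)=-\Pi(\hat{v}_M)$): flipping the sign negates $\rpi{\hat{v}_M}$, but because it appears twice in the conjugation the overall effect is unchanged, so the choice of representative for the ``middle'' axis is immaterial.
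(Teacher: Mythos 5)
Your proof is correct. Note, however, that the paper itself does not prove this lemma: it is imported verbatim as Lemma~4 of the cited companion work \cite{zindorf_all_2025}, so there is no in-paper argument to compare against; your write-up is in effect a self-contained replacement for that citation. The route you take is the natural one and every step checks out: the block decomposition $(I_C-P)\otimes I + P\otimes \omega\rv{\lambda}{\hat{v}'}$ reduces the claim to the single-qubit conjugation identity, and the Pauli computation $(\hat{v}_M\cdot\vec{\sigma})(\hat{n}\cdot\vec{\sigma})(\hat{v}_M\cdot\vec{\sigma})=\bigl(2(\hat{v}_M\cdot\hat{n})\hat{v}_M-\hat{n}\bigr)\cdot\vec{\sigma}=\bigl(\hat{R}_{\hat{v}_M}(\pi)\hat{n}\bigr)\cdot\vec{\sigma}$, exponentiated via $\Pi e^{A}\Pi^{\dagger}=e^{\Pi A\Pi^{\dagger}}$, gives exactly $\rpi{\hat{v}_M}\rv{\lambda}{\hat{v}'}\rpi{\hat{v}_M}=\rv{\lambda}{\hat{R}_{\hat{v}_M}(\pi)\hat{v}'}=\rv{\lambda}{\hat{v}}$, using $\hat{R}_{\hat{v}_M}(\pi)^2=I$. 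One small simplification: from the paper's definition, $\rpi{\hat{v}}=i\rv{\pi}{\hat{v}}$ is literally equal to $\hat{v}\cdot\vec{\sigma}$ (this is the displayed matrix in the notation section), so no ``scalar that cancels'' needs to be invoked; and in any case Hermiticity plus unitarity forces any such scalar to be $\pm1$, which squares away under the double conjugation, as does the sign ambiguity in $\hat{v}_M$ that you correctly flag.
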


Choosing $\hat{v}'=\hat{x}$ and $\psi=0$, we get that any MCSU2 gate $\MCO{\rv{\lambda}{\hat{v}}}{C}{q_t}$ can be implemented as a multi-controlled $R_x$ gate $\MCO{\rx(\lambda)}{C}{q_t}$, in addition to two single-qubit $\pi$-rotations, as \qc{mcsu2_to_rx_new}.
\[
\scalebox{0.7}{
\Qcircuit @C=0.5em @R=0.7em @!R { 
    \nghost{{C} :  } & \lstick{{C} :  } & \qw {/^n} & \ctrl{1} & \qw & \qw\\
    \nghost{{q_t} :  } & \lstick{{q_t} :  } & \qw & \gate{\mathrm{R_{\hat{v}}(\lambda)}} & \qw & \qw\\
}
\hspace{5mm}\raisebox{-5mm}{=}\hspace{0mm}
\Qcircuit @C=0.2em @R=0.7em @!R { 
    \nghost{{C} :  } & \lstick{{C} :  } & \qw {/^n} & \qw & \ctrl{1} & \qw & \qw & \qw\\
    \nghost{{q_t} :  } & \lstick{{q_t} :  } & \gate{\mathrm{\Pi(\hat{v}_M)}} & \qw & \gate{\mathrm{R_{\hat{x}}(\lambda)}} & \qw & \gate{\mathrm{\Pi(\hat{v}_M)}} & \qw\\
}
}
\qcref{mcsu2_to_rx_new}
\]

Now we can focus on implementing multi-controlled $R_x$ gates. 
We wish to avoid making any assumptions regarding the choice of the control qubits, as this may result in a quadratic overhead of SWAP gates \cite{cheng_mapping_2018}. Even assuming a specific location of the target qubit $q_t$ alone may require a communication overhead that scales linearly with $k$ \cite{zindorf_efficient_2024}.
 Using \lem{ccrv_reg} (Lemma 4 in \cite{zindorf_efficient_2024}), we provide the structure in \qc{mcsu2_new} which can be used for {\em any choice} of the target and control qubits. 

\begin{lemma}\label{lem:ccrv_reg}
Any $\MCO{\rv{\lambda}{\hat{v}'}}{C}{q_t}$ gate can be implemented as $\MCO{\rpi{\hat{v}_2}}{C_2}{q_t}
\MCO{\rpi{\hat{v}_1}}{C_1}{q_t}
\MCO{\rpi{\hat{v}_2}}{C_2}{q_t}
\MCO{\rpi{\hat{v}_1}}{C_1}{q_t}$, such that $\rv{\lambda}{\hat{v}'}\in SU(2)$ and $C_1\cup C_2 = C$. $\hat{v}_1$ can be chosen as any unit vector perpendicular to $\hat{v}'$, and $\hat{v}_2 = \hat{R}_{\hat{v}'}(\frac{\lambda}{4})\hat{v}_1$.
\end{lemma}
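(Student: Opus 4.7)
The plan is to reduce the lemma to \lem{2_rpi} and then lift the resulting uncontrolled identity to the multi-controlled setting by a basis-state analysis on the control register.

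First, I apply \lem{2_rpi} to the half-angle rotation $\rv{\lambda/2}{\hat{v}'}$: choosing $\hat{v}_1$ as any unit vector perpendicular to $\hat{v}'$ and setting $\hat{v}_2 = \hat{R}_{\hat{v}'}(\lambda/4)\,\hat{v}_1$, the lemma immediately gives $\rv{\lambda/2}{\hat{v}'} = \rpi{\hat{v}_2}\rpi{\hat{v}_1}$. Since rotations about a common axis add their angles, squaring this identity yields the uncontrolled four-$\Pi$ identity
\[
\rv{\lambda}{\hat{v}'} \;=\; \rv{\lambda/2}{\hat{v}'}\,\rv{\lambda/2}{\hat{v}'} \;=\; \rpi{\hat{v}_2}\,\rpi{\hat{v}_1}\,\rpi{\hat{v}_2}\,\rpi{\hat{v}_1}.
\]

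I then verify the multi-controlled version on computational-basis states of $C$. Let $A_j$ denote the event that every qubit of $C_j$ is in $\ket{1}$. The proposed circuit acts on $q_t$ as $\rpi{\hat{v}_2}^{[A_2]}\rpi{\hat{v}_1}^{[A_1]}\rpi{\hat{v}_2}^{[A_2]}\rpi{\hat{v}_1}^{[A_1]}$, where $[\cdot]$ is the Iverson bracket and $\rpi{\hat{v}}^{0}:=I$. When $A_1 \wedge A_2$ holds, equivalent to all of $C=C_1\cup C_2$ being in $\ket{1}$, all four $\Pi$ factors fire and the product equals $\rv{\lambda}{\hat{v}'}$ by the uncontrolled identity. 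In each of the three remaining cases, the $\Pi$s that do fire appear as an adjacent pair $\rpi{\hat{v}_j}\rpi{\hat{v}_j}=I$ using the involution property from \sec{notation}, so $q_t$ is left untouched, matching the definition of $\MCO{\rv{\lambda}{\hat{v}'}}{C}{q_t}$.

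The only subtlety worth checking explicitly is that this case analysis remains correct when $C_1$ and $C_2$ overlap or are not disjoint. This is automatic: each of the four bracketed exponents depends only on its own $C_j$, and the cancellation $\rpi{\hat{v}_j}^2=I$ holds pointwise on every basis state. Hence no disjointness assumption is required on $C_1, C_2$, which is precisely the flexibility needed later to split the control set freely between the two sides of the target in the LNN decomposition.
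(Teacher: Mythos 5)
Your proof is correct. Note that this paper does not actually prove \lem{ccrv_reg} itself: it imports it as Lemma~4 of \cite{zindorf_efficient_2024}, so there is no in-paper proof to compare against, and a self-contained argument like yours is exactly what is needed. Your route is the natural one and works: applying \lem{2_rpi} to the half-angle rotation gives $\rv{\lambda/2}{\hat{v}'}=\rpi{\hat{v}_2}\rpi{\hat{v}_1}$ with $\hat{v}_2=\hat{R}_{\hat{v}'}(\lambda/4)\hat{v}_1$, squaring yields the uncontrolled four-$\Pi$ identity, and the control-basis case analysis closes the argument because whenever not all of $C$ is set, the only $\Pi$ factors that fire occur as an adjacent pair and cancel by $\rpi{\hat{v}}\rpi{\hat{v}}=I$; the extension from control basis states to arbitrary states is immediate by linearity, since every factor is block-diagonal in the control register's computational basis. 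You are also right that no disjointness of $C_1,C_2$ is needed -- only $C_1\cup C_2=C$, which is what makes the later left/right split of the controls around the target legitimate. The only cosmetic remark is that the representative-range condition $\frac{\lambda}{2}\in(-\pi,\pi]$ appearing in \lem{2_rpi} (here $\lambda/4$) is merely a choice of branch for the rotation angle and does not affect the identity, so omitting it, as the statement of \lem{ccrv_reg} itself does, is harmless.
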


Since we focus on $\hat{v}'=\hat{x}$, we can choose $\hat{v}_1=\hat{z}$ as the perpendicular vector. For this choice, we get $\Pi(\hat{v}_2)=\Pi^{\lambda/4}_{\bar{x}}$ by definition.
We can add a few relative-phase diagonal gates marked as $\Delta$. These gates commute with any other relative-phase gates, such as the $\MCO{Z}{C_1}{q_t}$ gates in \qc{mcsu2_new}, and with the control lines of any multi-controlled gate. Therefore, and since these gates come in pairs of $\Delta,\Delta^\dagger$, these gates cancel out such that without them \qc{mcsu2_new} simply applies \lem{ccrv_reg}. As this is true for any choice of the diagonal $\Delta$ gates, we are free to choose the relative phase which allows to minimize the gate count.

\[
\scalebox{0.7}{
\Qcircuit @C=0.5em @R=0.7em @!R { 
    \nghost{{Q}_{1} :  } & \lstick{{Q}_{1} :  } & \qwl & \ctrlpd{1}{C_{1}} & \qw \\
    \nghost{{q}_{t-1} :  } & \lstick{{q}_{t-1} :  } & \qw & \ctrlpd{1}{c_-} & \qw \\
    \nghost{{q_t} :  } & \lstick{{q_t} :  } & \qw & \gate{\mathrm{R_{\hat{x}}(\lambda)}} & \qw\\
    \nghost{{q}_{t+1} :  } & \lstick{{q}_{t+1} :  } & \qw & \ctrlp{-1}{c_+} & \qw \\
    \nghost{{Q}_{2} :  } & \lstick{{Q}_{2}:  } & \qwl & \ctrlp{-1}{C_{2}} & \qw \\
}
\hspace{5mm}\raisebox{-17mm}{=}\hspace{0mm}
\Qcircuit @C=0.5em @R=0.45em @!R { 
    \nghost{{Q}_{1} :  } & \lstick{{Q}_{1} :  } & \qwl & \qw & \ctrlpd{2}{C_1\;\;\;\;\;\;} & \multigate{1} {\mathrm{\Delta_1}} & \qw  & \qw & \qw & \multigate{1} {\mathrm{\Delta_1^{\dagger}}} & \qw & \ctrlpd{2}{C_1}  & \qw  & \qw & \qw & \qw  & \qw \\ 
    \nghost{{q}_{t-1} :  } & \lstick{{q}_{t-1} :  } & \qw & \qw & \qw & \ghost{\mathrm{\Delta_1}} & \qw & \ctrlpd{1}{c_-} &\multigate{3}{\mathrm{\Delta_2}}   & \ghost{\mathrm{\Delta_1^{\dagger}}} & \qw & \qw  & \multigate{3}{\mathrm{\Delta_2^\dagger}} & \qw & \ctrlpd{1}{c_-}   & \qw  & \qw \\
    \nghost{{q_t} :  } & \lstick{{q_t} :  } & \qw & \qw  & \gate{\mathrm{Z}} & \qw & \qw & \gate{\mathrm{\Pi^{\lambda\!/\!4}_{\bar{x}}}} & \ghost{\mathrm{\Delta_2}}   & \qw & \qw & \gate{\mathrm{Z}} &  \ghost{\mathrm{\Delta_2}} & \qw & \gate{\mathrm{\Pi^{\lambda\!/\!4}_{\bar{x}}}} & \qw & \qw  \\
    \nghost{{q}_{t+1} :  } & \lstick{{q}_{t+1} :  } & \qw & \qw & \qw & \qw & \qw & \ctrlp{-1}{c_+} & \ghost{\mathrm{\Delta_2}}  & \qw & \qw & \qw &  \ghost{\mathrm{\Delta_2^\dagger}} & \qw  & \ctrlp{-1}{c_+}   & \qw & \qw  \\
        \nghost{{Q}_{2} :  } & \lstick{{Q}_{2} :  } & \qwl & \qw & \qw & \qw & \qw & \ctrlp{-1}{C_2} & \ghost{\mathrm{\Delta_2}}   & \qw & \qw & \qw & \ghost{\mathrm{\Delta_2^\dagger}} & \qw & \ctrlp{-1}{C_2} & \qw  & \qw 
    \gategroup{2}{7}{5}{9}{0.3em}{-} 
    \gategroup{2}{13}{5}{16}{0.3em}{-} 
    \gategroup{1}{10}{3}{12}{0.3em}{--} 
    \gategroup{1}{5}{3}{6}{0.3em}{--} 
}
}
\qcref{mcsu2_new}
\]
Here, $Q_1$ and $Q_2$ hold the first $k_1$ and last $k_2$ qubits in $Q$, respectively, such that $Q_1 \cup Q_2 = Q\setminus \{q_{t-1},q_t,q_{t+1}\}$ and $k_1+k_2=k-1-\abs{\{q_{t-1},q_{t+1}\}\cap Q}$. Similarly, $C_1,C_2\in C$ are of size $n_1,n_2$, respectively, such that $C_1\cup C_2 = C\setminus \{c_+,c_-\}$, and $n_1+n_2 =n-n_+-n_-$, with $n_{\pm}=\abs{c_\pm}\in \{0,1\}$. The value of $\abs{\{q_{t-1},q_{t+1}\}\cap Q}\in \{0,1,2\}$ is determined according to the definition of $Q$, satisfying $q_1,q_k\in\{C_,q_t\}$. For example, if $n_2=n_+=0$ then the target is located below all control qubits, which implies that $k_2=0$ and $q_{t+1}\not\in Q$.

As can be seen, our structure in \qc{mcsu2_new} is composed of four boxed gates. Two of which are MC$\Pi_{\bar{x}}$-$\Delta$ gates, comprised of a multi-controlled $\Pi^{\theta}_{\bar{x}}$ gate (MC$\Pi_{\bar{x}}$) and a $\Delta$ gate. We note that for the choice $\theta\in\{0,\pi,2\pi..\}$, we get $\Pi^{\theta}_{\bar{x}}=\pm Z$ and the entire MC$\Pi_{\bar{x}}$-$\Delta$ gate only applies a relative phase. The distinction between the $\Delta$ and the MC$\Pi_{\bar{x}}$ gate in this case can be maintained if the $\Delta$ gate does not apply on the target qubit. 
We refer to this special case as the MCZ-$\Delta$ gate, a multi-controlled Pauli Z gate (MCZ) and a relative phase gate $\Delta$ which does not apply on the target qubit.
The dashed boxes in \qc{mcsu2_new} hold MCZ-$\Delta$ gates for which the target is located below all control qubits, and the qubit neighboring above the target acts as a dirty quasi-ancilla, such that it is only affected by the $\Delta$ gate.

The MCSU2 implementation therefore relies on the ability to efficiently decompose these boxed gates without any assumption on the number of controls or their location, while for each of these gates, it suffices to find an implementation which can be applied here, i.e. for MCZ-$\Delta$ the target is at the bottom, and for MC$\Pi_{\bar{x}}$-$\Delta$, the target is either at the top or one below it.
As our goal is to decompose the MCSU2 in terms of Clifford+T and $R_z$ gates, we can define a function that holds all gate counts for every possible case.  So far we have the following.
\begin{equation}\label{eq:su2_count}  
L_{SU}(k_1,k_2,n_1,n_2,n_+,n_-)=
\begin{cases}
2L_{Z\Delta}(k_1,n_1) + 2L_{\Pi\Delta}(k_2,n_2,n_+,n_-) + 2L_{\Pi_M} & , n_1+n_2>0\\
L_{SU}(n_+,n_-) & , n_1+n_2=0
\end{cases}
\end{equation}
With $L_{Z\Delta}(\cdot)$ and $L_{\Pi\Delta}(\cdot)$ holding the gate count required for the MCZ-$\Delta$ and MC$\Pi_{\bar{x}}$-$\Delta$ gates, respectively, and $L_{\Pi_M}$ holds the gate count required for the axis-transforming $\Pi$ gates used in \qc{mcsu2_to_rx_new}. We will discuss the implementation of these gates and provide the required gate count in \sec{mcx_g_d}.
We note that in case $n_1=0$, the qubits in $Q$ can be labeled in reverse order. This means that for the case $n_1+n_2>0$, we can always guarantee that the MCZ-$\Delta$ gate has at least one control qubit. 

The $L_{SU}(n_+,n_-)$ holds the gate count of the MCSU2 gates for the case $n_1=n_2=0$. Since $n\geq 1$, we get that $n_++n_-\in\{1,2\}$.
For these cases, we can use the following circuits, such that \qc{CCsu2_mid_new}.a can be realized directly from \lem{ccrv_reg} with added $\Delta$ gates, and \qc{CCsu2_mid_new}.b from \lem{2_rpi}. The gate counts are given in \eq{su2_small_count}.
\[
\scalebox{0.7}{
\Qcircuit @C=0.5em @R=0.7em @!R { 
    \nghost{} & \lstick{} & \qw & \ctrl{1} & \qw \\
    \nghost{} & \lstick{} & \qw & \gate{\mathrm{R_{\hat{x}}(\lambda)}} & \qw\\
    \nghost{} & \lstick{} & \qw & \ctrl{-1} & \qw \\
}
\hspace{5mm}\raisebox{-9mm}{=}\hspace{0mm}
\Qcircuit @C=0.5em @R=0.45em @!R { 
    \nghost{} & \lstick{}   & \ctrl{1} &  \qw  &\qw & \ctrl{1}   &  \qw & \qw   & \qw   \\
    \nghost{} & \lstick{} &  \gate{\mathrm{Z}} &  \gate{\mathrm{\Pi^{\lambda\!/\!4}_{\bar{x}}}} & \multigate{1}{\mathrm{\Delta_2}}  & \gate{\mathrm{Z}} &  \multigate{1}{\mathrm{\Delta_2^\dagger}} & \gate{\mathrm{\Pi^{\lambda\!/\!4}_{\bar{x}}}} & \qw  \\
    \nghost{} & \lstick{} & \qw  & \ctrl{-1} & \ghost{\mathrm{\Delta_2}}  & \qw  &  \ghost{\mathrm{\Delta_2^\dagger}}  & \ctrl{-1}   & \qw
    \gategroup{2}{4}{3}{5}{0.3em}{-} 
    \gategroup{2}{7}{3}{8}{0.3em}{-} 
}
\hspace{5mm}\raisebox{0mm}{(a)}\hspace{0mm}
\hspace{5mm}\raisebox{-17mm}{;}\hspace{0mm}
\Qcircuit @C=0.5em @R=0.7em @!R { 
    \nghost{} & \lstick{} & \qw & \ctrl{1} & \qw \\
    \nghost{} & \lstick{} & \qw & \gate{\mathrm{R_{\hat{x}}(\lambda)}} & \qw\\
}
\hspace{5mm}\raisebox{-5mm}{=}\hspace{0mm}
\Qcircuit @C=0.5em @R=0.45em @!R { 
    \nghost{} & \lstick{}  & \ctrl{1} & \ctrl{1} & \qw \\
    \nghost{} & \lstick{} &  \gate{\mathrm{Z}} &  \gate{\mathrm{\Pi^{\lambda\!/\!2}_{\bar{x}}}} & \qw\\
    }
\hspace{5mm}\raisebox{0mm}{(b)}\hspace{0mm}
}
\qcref{CCsu2_mid_new}
\]
\begin{equation}\label{eq:su2_small_count}  
L_{SU}(n_+,n_-)=
\begin{cases}
L_{CZ}+L_{C\Pi} + 2L_{\Pi_M}& , n_-+n_+=1\\
2L_{CZ}+2L_{\Pi\Delta}(n_-=0,n_+=1) + 2L_{\Pi_M} & , n_-+n_+=2
\end{cases}
\end{equation}
With $L_{CZ}$ and $L_{C\Pi}$ hold the costs to implement the singly-controlled CZ and C$\Pi_{\bar{x}}$ gates exactly (not up-to a relative phase), and $L_{\Pi\Delta}(n_-,n_+) := L_{\Pi\Delta}(k_2=0,n_2=0,n_-,n_+)$ holds the the cost of a small MC$\Pi_{\bar{x}}$-$\Delta$. 

\section{Multi-controlled Toffoli}\label{sec:mcx_sec}

In this section, we implement the MCX gate in LNN connectivity, using one dirty ancilla. Since the MCX gate $\MCO{X}{C}{q_t}$ can be implemented using an MCZ gate and two Hadamard gates as $\sqO{H}{q_t}\MCO{Z}{C}{q_t}\sqO{H}{q_t}$, we will focus on implementing the MCZ gate instead. The target of the MCZ can be arbitrarily chosen from the controls-target qubit set $C'=\{C,q_t\}$ of size $n'=n+1$.

The MCZ gate $\MCO{Z}{C'}{}$ can be implemented as $\MCO{-I}{C'}{q_a}$ \cite{he_decompositions_2017,zindorf_efficient_2024}, a special case of the MCSU2 gate with $n'$ controls. The dirty ancilla qubit $q_a$ can be arbitrarily chosen from the qubit set $Q\setminus C'$, such that the size of $Q$ in this case is $k\geq n+2$. 
As a $SU(2)$ rotation by $2\pi$ about any chosen axis is equivalent to $-I$, we can choose the $\hat{x}$ axis and implement $\MCO{X}{C}{q_t} = \sqO{H}{q_t}\MCO{R_{\hat{x}}(2\pi)}{C'}{q_a}\sqO{H}{q_t}$ as in \qc{mcx_to_mcz_to_2pi}.a.
We can use the structure described in \qc{mcsu2_new} to implement the MC$R_{\hat{x}}$ gate with $\lambda=2\pi$. The MC$\Pi_{\bar{x}}$-$\Delta$ gates used in this structure will apply $\Pi_{\bar{x}}^{\pi/2}$ on the target qubit. This is a Hermitian $\pi$-rotation about the axis $-\hat{y}$, which is equivalent to $-Y=(iZ)X$. Therefore, this MC$\Pi_{\bar{x}}$-$\Delta$ gate is equivalent to MCX-$\Delta$, a relative-phase MCX gate, as shown in \qc{mcx_to_mcz_to_2pi}.b.

\[
\scalebox{0.7}{
\Qcircuit @C=0.5em @R=0.7em @!R { 
    \nghost{{C} :  } & \lstick{{C} :  } & \qw {/^n} & \ctrl{2} & \qw & \qw\\
    \nghost{{q_a} :  } & \lstick{{q_a} :  } & \qw & \qw & \qw & \qw\\
    \nghost{{q_t} :  } & \lstick{{q_t} :  } & \qw & \targ & \qw & \qw\\
}
\hspace{5mm}\raisebox{-6.5mm}{=}\hspace{0mm}
\Qcircuit @C=0.5em @R=0.32em @!R { 
    \nghost{{C} :  } & \lstick{{C} :  } & \qw {/^n} & \ctrl{2} & \qw & \qw\\
    \nghost{{q_a} :  } & \lstick{{q_a} :  } & \qw & \qw & \qw & \qw\\
    \nghost{{q_t} :  } & \lstick{{q_t} :  } & \gate{\mathrm{H}} & \control\qw & \gate{\mathrm{H}} & \qw\\
}
\hspace{5mm}\raisebox{-6.5mm}{=}\hspace{0mm}
\Qcircuit @C=0.5em @R=0.em @!R { 
    \nghost{{C} :  } & \lstick{{C} :  } & \qw {/^n} & \ctrl{1} & \qw & \qw\\
    \nghost{{q_a} :  } & \lstick{{q_a} :  } & \qw & \gate{\mathrm{R_{\hat{x}}(2\pi)}} & \qw & \qw\\
    \nghost{{q_t} :  } & \lstick{{q_t} :  } & \gate{\mathrm{H}} & \ctrl{-1} & \gate{\mathrm{H}} & \qw\\
}
\hspace{5mm}\raisebox{0mm}{(a)}\hspace{0mm}
\hspace{0mm}\raisebox{-12mm}{;}\hspace{0mm}
\Qcircuit @C=0.5em @R=2.3em @!R { 
    \nghost{} & \lstick{} & \qw {/} & \ctrl{1} & \multigate{1} {\mathrm{\Delta'}} & \qw\\
    \nghost{} & \lstick{} & \qw & \targ & \ghost {\mathrm{\Delta'}} & \qw\\
}
\hspace{5mm}\raisebox{-6.5mm}{=}\hspace{3mm}
\Qcircuit @C=0.5em @R=1.35em @!R { 
     \lstick{} & \qw {/} & \ctrl{1} & \multigate{1} {\mathrm{\Delta}} & \qw\\
     \lstick{} & \qw & \gate{\mathrm{\Pi^{\pi\!/\!2}_{\bar{x}}}} & \ghost {\mathrm{\Delta}} & \qw\\
}
\hspace{5mm}\raisebox{0mm}{(b)}\hspace{0mm}
}
\qcref{mcx_to_mcz_to_2pi}
\]

The MCZ gate can be implemented using
\qc{mcx_new}, which is achieved using the identities in \qc{mcsu2_new}, \qc{mcx_to_mcz_to_2pi}.a and \qc{mcx_to_mcz_to_2pi}.b.

\[
\scalebox{0.7}{
\Qcircuit @C=1.0em @R=1.4em @!R { 
    \nghost{{Q}_{1} :  } & \lstick{{Q}_{1} :  } & \qwl & \ctrlpd{1}{C_{1}} & \qw & \qw\\
    \nghost{{q}_{a-1} :  } & \lstick{{q}_{a-1} :  } & \qw & \ctrlpd{1}{c_-} & \qw & \qw\\
    \nghost{{q_a} :  } & \lstick{{q_a} :  } & \qw & \qw & \qw & \qw\\
    \nghost{{q}_{a+1} :  } & \lstick{{q}_{a+1} :  } & \qw & \ctrlp{-1}{c_+} & \qw & \qw\\
    \nghost{{Q}_{2} :  } & \lstick{{Q}_{2}:  } & \qwl & \ctrlp{-1}{C_{2}} & \qw & \qw\\
}
\hspace{5mm}\raisebox{-17mm}{=}\hspace{0mm}
\Qcircuit @C=0.5em @R=1.0em @!R { 
    \nghost{{Q}_{1} :  } & \lstick{{Q}_{1} :  } & \qwl & \qw & \ctrlpd{2}{C_1\;\;\;\;\;\;} & \multigate{1} {\mathrm{\Delta_1}} & \qw  & \qw & \qw & \multigate{1} {\mathrm{\Delta_1^{\dagger}}} & \qw & \ctrlpd{2}{C_1}  & \qw  & \qw & \qw & \qw  & \qw \\ 
    \nghost{{q}_{a-1} :  } & \lstick{{q}_{a-1} :  } & \qw & \qw & \qw & \ghost{\mathrm{\Delta_1}} & \qw & \ctrlpd{1}{c_-} &\multigate{3}{\mathrm{\Delta'_2}} & \ghost{\mathrm{\Delta_1^{\dagger}}}  & \qw & \qw   & \multigate{3}{\mathrm{{\Delta'_2}^\dagger}} & \qw & \ctrlpd{1}{c_-}   & \qw  & \qw \\
    \nghost{{q_a} :  } & \lstick{{q_a} :  } & \qw & \qw  & \gate{\mathrm{Z}} & \qw & \qw & \targ & \ghost{\mathrm{\Delta_2}}   & \qw & \qw & \gate{\mathrm{Z}} &  \ghost{\mathrm{{\Delta'_2}^\dagger}} & \qw & \targ & \qw & \qw  \\
    \nghost{{q}_{a+1} :  } & \lstick{{q}_{a+1} :  } & \qw & \qw & \qw & \qw & \qw & \ctrlp{-1}{c_+} & \ghost{\mathrm{\Delta_2}}  & \qw & \qw & \qw &  \ghost{\mathrm{{\Delta'_2}^\dagger}} & \qw  & \ctrlp{-1}{c_+}   & \qw & \qw  \\
        \nghost{{Q}_{2} :  } & \lstick{{Q}_{2} :  } & \qwl & \qw & \qw & \qw & \qw & \ctrlp{-1}{C_2} & \ghost{\mathrm{\Delta_2}}   & \qw & \qw & \qw & \ghost{\mathrm{{\Delta'_2}^\dagger}} & \qw & \ctrlp{-1}{C_2} & \qw  & \qw 
    \gategroup{2}{7}{5}{9}{0.3em}{-} 
    \gategroup{2}{13}{5}{16}{0.3em}{-} 
    \gategroup{1}{10}{3}{12}{0.3em}{--} 
    \gategroup{1}{5}{3}{6}{0.3em}{--} 
}
}
\qcref{mcx_new}
\]
with $Q_1,Q_2\in Q$ of size 
$k_1,k_2$, respectively, such that $Q_1 \cup Q_2 = Q\setminus \{q_{a-1},q_a,q_{a+1}\}$ and $k_1+k_2=k-1-\abs{\{q_{a-1},q_{a+1}\}\cap Q}$. Similarly, $C_1,C_2\in C'$ are of size $n_1,n_2$, respectively, such that $C_1\cup C_2 = C'\setminus \{c_+,c_-\}$,  and $n_1+n_2 =n'-n_+-n_-$. The value of $\abs{\{q_{a-1},q_{a+1}\}\cap Q}\in \{0,1,2\}$ is determined according to the definition of $Q$, satisfying $q_1,q_k\in\{C',q_a\}$. Here, $k\geq n+2$, and $n\geq1$. The gate count of the MCX gate can be expressed as \eq{MCX_small_count}, with $L_{X\Delta}(\cdot)$ holding the gate count required for the MCX-$\Delta$ gate, which will be discussed in \sec{mcx_g_d}.
\begin{equation}\label{eq:MCX_small_count}
L_{X}(k_1,k_2,n_1,n_2,n_+,n_-)=
\begin{cases}
    2L_{Z\Delta}(k_1,n_1) + 2L_{X\Delta}(k_2,n_2,n_+,n_-) + 2L_H &,n_1+n_2>0\\
    L_{X}(n_+=1,n_-=1) &,n_1+n_2=0
\end{cases}
.
\end{equation}
In the MCX case, as any qubit from the set $Q\setminus C'$ can be chosen as a dirty ancilla, we have a freedom to choose which qubit would be labeled as $q_a$. We could use this fact to check the gate count for each choice of $q_a$, and use the cheapest option; however, this might require a large classical processing time. 
In the task of providing an upper bound for any possible case, we simply choose $q_a$ such that at least one of its nearest neighbors is a qubit in $C'$. We can always guarantee that $n_-+n_+\in\{1,2\}$, and similarly to the MCSU2 case, we also have the freedom to reverse the order of qubits, and therefore guarantee that if $n_1+n_2>0$ then $n_1>0$, which means that we do not need to implement the MCZ-$\Delta$ gate without controls in this case as well. 

In case $n_1=n_2=0$, we get $n_-=n_+=1$, since $n'\geq2$. This case is covered by the known identity \cite{kumar_efficient_2013} in \qc{CNOT_mid_new}, which can also be achieved directly from \qc{CCsu2_mid_new}.a, using the identity in \qc{mcx_to_mcz_to_2pi}.a. The gate count is expressed in terms of $L_{CX}$, the cost of a single LNN CNOT gate, in \eq{MCX_small_small_count}.
\[
\scalebox{0.7}{
\Qcircuit @C=0.5em @R=0.9em @!R { 
    \nghost{} & \lstick{}  & \targ & \qw \\
    \nghost{} & \lstick{}  & \qw & \qw\\
    \nghost{} & \lstick{}  & \ctrl{-2} & \qw \\
}
\hspace{5mm}\raisebox{-7mm}{=}\hspace{0mm}
\Qcircuit @C=0.2em @R=0.2em @!R { 
    \nghost{} & \lstick{} & \gate{\mathrm{H}} & \ctrl{1} & \gate{\mathrm{H}} & \qw \\
    \nghost{} & \lstick{} & \qw & \gate{\mathrm{R_{\hat{x}}(2\pi)}} & \qw & \qw\\
    \nghost{} & \lstick{} & \qw & \ctrl{-1} & \qw & \qw \\
}
\hspace{5mm}\raisebox{-7mm}{=}\hspace{0mm}
\Qcircuit @C=0.5em @R=0.9em @!R { 
    \nghost{} & \lstick{}   & \targ &  \qw   & \targ    & \qw   & \qw   \\
    \nghost{} & \lstick{} &  \ctrl{-1} &  \targ  & \ctrl{-1}  & \targ & \qw  \\
    \nghost{} & \lstick{} & \qw  & \ctrl{-1}   & \qw  &  \ctrl{-1}   & \qw
}
}
\qcref{CNOT_mid_new}
\]

\begin{equation}\label{eq:MCX_small_small_count}
L_X(n_+=1,n_-=1) = 4L_{CX}
\end{equation}

We note that since $X=\Pi(\hat{x})$, the MCX structure can be used directly, with two additional single-qubit $\Pi_M$ gate, to implement {\em any} multi-controlled $\Pi$ (MC$\Pi$) gate using the axis transformation from \lem{MC_transform}. Clearly, for MCZ and MCY, the $\Pi_M$ gates are Clifford operators. For a multi-controlled Hadamard (MCH), which applies $H=\Pi_H$ on the target, the $\Pi_M$ gates are in Clifford+T and require only one T gate each, similarly to MC$\Pi_S$ and MC$\Pi_V$. 

\section{Multi-controlled relative-phase Toffoli (Three Different Ones)}\label{sec:mcx_g_d}
\subsection{MC$\mathbf{\Pi_{\bar{x}}-\Delta}$}\label{sec:mcxpidelta}
In this section, we implement the MC$\Pi_{\bar{x}}$-$\Delta$ gate in LNN connectivity without any ancilla requirements. We focus on the case which is used in \qc{mcsu2_new} for the MCSU2 implementation, i.e. $\sqO{\Delta}{Q\setminus Q_1}\MCO{\Pi_{\bar{x}}^\lambda}{\{C_2,c_-,c_+\}}{q_t}$, with all qubit sets defined as in \sec{MCSU2_LNN}. The definition of these qubit sets implies that we can assume that the location of the target qubit $q_t$ is either at the top of the circuit or neighboring below the top qubit. We cannot, however, make any assumptions regarding the choice of the control qubits and must account for the case without controls as well. 
As shown in \qc{mcx_to_mcz_to_2pi}.b, 
 the MC$\Pi_{\bar{x}}$-$\Delta$ gate is a generalized version of the MCX-$\Delta$ gate. We find this generalization to be useful for circuit optimization, and, in fact,
a version of this gate with three control qubits has been shown to be useful for the construction of the LNN four-controlled Toffoli gate without ancilla in \cite{zindorf_all_2025}.

We wish to use the MC$R_{\hat{x}}$ structure from \qc{mcsu2_new} to implement the MC$\Pi_{\bar{x}}$-$\Delta$ gate. First, according to \lem{2_rpi}, we know that
$\Pi^\theta_{\bar{x}}=ZR^\dagger_{\hat{x}}(2\theta)$. By adding a control set $C''=\{C_2,c_-,c_+\}$ to each of these gates and introducing $\Delta$ gates, we get
$\sqO{\Delta}{\{C'',q_t\}}\MCO{\Pi^\theta_{\bar{x}}}{C''}{q_t}=\sqO{\Delta}{\{C'',q_t\}}\MCO{Z}{C''}{q_t}\MCO{R^\dagger_{\hat{x}}(2\theta)}{C''}{q_t}$.
Setting $\sqO{\Delta'}{\{C'',q_t\}}=\sqO{\Delta}{\{C'',q_t\}}\MCO{Z}{C''}{q_t}$ provides the identity in \qc{mcpid_to_mcrx}. 

\[
\scalebox{0.7}{
\Qcircuit @C=0.5em @R=0.6em @!R { 
    \nghost{{C''} :  } & \lstick{{C''} :  } & \qw {/^{n''}} & \ctrl{1} & \multigate{1} {\mathrm{\Delta}} & \qw\\
    \nghost{{q_t} :  } & \lstick{{q_t} :  } & \qw & \gate{\mathrm{\Pi^{\theta}_{\bar{x}}}} & \ghost {\mathrm{\Delta}} & \qw\\
}
\hspace{5mm}\raisebox{-5mm}{=}\hspace{0mm}
\Qcircuit @C=0.5em @R=0.6em @!R { 
    \nghost{{C''} :  } & \lstick{{C''} :  } & \qw {/^{n''}} & \ctrl{1} & \multigate{1} {\mathrm{\Delta'}} & \qw\\
    \nghost{{q_t} :  } & \lstick{{q_t} :  } & \qw & \gate{\mathrm{R^\dagger_{\hat{x}}(2\theta)}} & \ghost {\mathrm{\Delta'}} & \qw\\
}
}
\qcref{mcpid_to_mcrx}
\]
We can therefore directly use the inverted version of \qc{mcsu2_new}, with a reversed qubit ordering, for this implementation. One MCZ-$\Delta$ gate which only applies a relative phase is removed to achieve \qc{mcpix_delta_new}. The cost of the MC$\Pi_{\bar{x}}$-$\Delta$ gate can therefore be written as \eq{MCpi_d_count}.
\[
\scalebox{0.7}{
\Qcircuit @C=0.5em @R=0.45em @!R { 
	 	\nghost{q_{t-1} :} & \lstick{q_{t-1} :} & \qw & \ctrlpd{1}{c_-}  & \ghost{\mathrm{\Delta}} & \qw \\
	 	\nghost{q_{t} :} & \lstick{q_{t} :} & \qw & \gate{\mathrm{\Pi^{\theta}_{\bar{x}}}}  & \ghost{\mathrm{\Delta}} & \qw \\
        \nghost{q_{t+1} :} & \lstick{q_{t+1} :} & \qw & \ctrlp{-1}{c_+}  & \ghost{\mathrm{\Delta}} & \qw \\
        \nghost{Q_2 :} & \lstick{Q_2 :} & \qwl & \ctrlp{-1}{C_2}  & \multigate{-3}{\mathrm{\Delta}} & \qw \\
 } 
\hspace{5mm}\raisebox{-12mm}{=}\hspace{0mm}
\Qcircuit @C=0.5em @R=0.3em @!R {  
    \nghost{{q}_{t-1} :  } & \lstick{{q}_{t-1} :  } & \qw  & \qw & \ctrlpd{1}{c_-} &\multigate{2}{\mathrm{\Delta_2}}  & \qw & \qw   & \multigate{2}{\mathrm{\Delta_2^\dagger}} & \qw & \ctrlpd{1}{c_-}   & \qw  & \qw \\
    \nghost{{q_t} :  } & \lstick{{q_t} :  } & \qw  & \qw & \gate{\mathrm{\Pi^{\theta\!/\!2}_{\bar{x}}}} & \ghost{\mathrm{\Delta_2}}  & \gate{\mathrm{Z}} & \qw &  \ghost{\mathrm{\Delta_2}} & \qw & \gate{\mathrm{\Pi^{\theta\!/\!2}_{\bar{x}}}} & \qw & \qw  \\
    \nghost{{q}_{t+1} :  } & \lstick{{q}_{t+1} :  } & \qw  & \qw & \ctrlp{-1}{c_+} & \ghost{\mathrm{\Delta_2}}  & \qw & \ghost{\mathrm{\Delta_1^{\dagger}}} &  \ghost{\mathrm{\Delta_2^\dagger}} & \qw  & \ctrlp{-1}{c_+}   & \qw & \qw  \\
    \nghost{{Q}_{2} :  } & \lstick{{Q}_{2} :  } & \qwl  & \qw  & \qw & \qw & \ctrlp{-2}{C_2\;\;} & \multigate{-1} {\mathrm{\Delta_1}} & \qw  & \qw & \qw & \qw  & \qw
    \gategroup{1}{4}{3}{6}{0.3em}{-} 
    \gategroup{1}{9}{3}{12}{0.3em}{-} 
    \gategroup{2}{7}{4}{8}{0.3em}{--} 
}
}
\qcref{mcpix_delta_new}
\]
\begin{equation}\label{eq:MCpi_d_count}
L_{\Pi\Delta}(k_2,n_2,n_+,n_-) = 
\begin{cases}
    L_{Z\Delta}(k_2,n_2) + 2L_{\Pi\Delta,0}(n_+,n_-) & n_2>0 \\
    L_{\Pi\Delta,1}(n_+,n_-) & n_2=0
\end{cases}
\end{equation}
with $L_{\Pi\Delta,1}(n_+,n_-)$ holding the cost of the MC$\Pi_{\bar{x}}$-$\Delta$ gate for every choice of $n_+,n_-$, in case $n_2=0$. We use $L_{\Pi\Delta,0}(n_+,n_-)$ to hold the cost of the boxed MC$\Pi_{\bar{x}}$-$\Delta$ gates  on the right hand side of \qc{mcpix_delta_new} which can always be implemented at a cost of $L_{\Pi\Delta,1}(n_+,n_-)$, however, a lower cost can be achieved in some cases.

We wish to decompose these small gates using $R_{\hat{z}}$ and Clifford+T gates. For an arbitrary choice of the $R_{\hat{z}}$ angle and level of accuracy, the added number of T gates might be very large and therefore it might be substantially the most expensive resource in the fault-tolerant regime. However, in near-term implementations, the CNOT gate is the most expensive. We will therefore provide a few ways to implement the MC$\Pi_{\bar{x}}$-$\Delta$ gates with up-to two controls, allowing for trade-offs between the CNOT and the $R_{\hat{z}}$ gates. 

\tab{ccpix_table} covers the implementations for any option of $n_-,n_+$, providing a few implementations for each case. 
It can be noted that all of these are constructed from a pair of axis-transforming gates located on both sides of a central gate, such that controls are added to either the pair, the central gate, or both. Moreover, the controlled gates are always chosen to be controlled $\Pi$ gates which are generally less expensive than controlled gates with an arbitrary angle of rotation (\cite{zindorf_all_2025}, Theorem 1).
In case \circled{6} of \tab{ccpix_table}, we use the notation in  \qc{mch_circ_circ}.a as a special case of \qc{mcpid_to_mcrx}.
In case \circled{3} we use the notation in \qc{mch_circ_circ}.b for a Hermitian gate (contingent on $\Delta''$ choice) which is equivalent to the two-controlled Hadamard (CCH) up to a relative phase and an additional CNOT gate, allowing for a cheaper implementation (a similar notation will be used shortly, which we give as \qc{mch_circ_circ}.c).
\[
\scalebox{0.7}{
\Qcircuit @C=1.0em @R=0.2em @!R { 
	 	 \lstick{}  & \ctrlt{1}  & \qw \\
	 	 \lstick{}  & \gate{\mathrm{\Pi_{\bar{x}}^{\theta}}}   & \qw \\
            \lstick{} & \ctrl{-1}  & \qw \\
   }
\hspace{5mm}\raisebox{-7mm}{:=}\hspace{0mm}
\Qcircuit @C=0.5em @R=0.05em @!R { 
    \nghost{} & \lstick{} & \qw  & \ctrl{1} & \qw & \qw\\
    \nghost{} & \lstick{} & \qw & \gate{\mathrm{R^\dagger_{\hat{x}}(2\theta)}} & \gate{\mathrm{Z}} & \qw\\
    \nghost{} & \lstick{} & \qw & \ctrl{-1} & \ctrl{-1} & \qw \\
}
 \hspace{5mm}\raisebox{-0mm}{(a)}\hspace{0mm}
 \hspace{0mm}\raisebox{-15mm}{;}\hspace{0mm}
\Qcircuit @C=0.5em @R=0.6em @!R { 
    \nghost{} & \lstick{}  & \ctrltt{1} & \qw\\
    \nghost{} & \lstick{}  & \gate{\mathrm{H}} & \qw\\
    \nghost{} & \lstick{}  & \ctrl{-1} & \qw\\
}
\hspace{5mm}\raisebox{-7mm}{:=}\hspace{0mm}
\Qcircuit @C=0.5em @R=0.6em @!R { 
    \nghost{} & \lstick{} & \qw & \qw  & \ctrl{1} & \multigate{2} {\mathrm{\Delta''}}  & \qw\\
    \nghost{} & \lstick{} & \qw & \targ  & \gate{\mathrm{H}} & \ghost {\mathrm{\Delta''}}   & \qw\\
    \nghost{} & \lstick{} & \qw & \ctrl{-1} & \ctrl{-1} & \ghost {\mathrm{\Delta''}}  & \qw
}
 \hspace{5mm}\raisebox{-0mm}{(b)}\hspace{0mm}
\hspace{0mm}\raisebox{-15mm}{;}\hspace{5mm}
\Qcircuit @C=0.5em @R=0.2em @!R { 
	 	 \lstick{}  & \ctrltt{1}  & \qw \\
	 	 \lstick{}  & \gate{\mathrm{\Pi_{\bar{x}}^{\theta}}}  & \qw \\
            \lstick{}  & \ctrl{-1} & \qw \\
   }
 \hspace{5mm}\raisebox{-7mm}{:=}\hspace{5mm}
  \Qcircuit @C=0.5em @R=0.2em @!R { 
	 	 \lstick{}  & \ctrl{1}  & \qw  & \qw  \\
	 	 \lstick{}  & \gate{\mathrm{i\Pi_{\bar{x}}^{\theta}}} & \targ  & \qw \\
            \lstick{}  & \ctrl{-1}  & \ctrl{-1} & \qw \\
   }    
\hspace{5mm}\raisebox{-0mm}{(c)}\hspace{0mm}
}
\qcref{mch_circ_circ}
\]
\begin{table}[H]
    \centering
\begin{tabular}{|c|c|c|c|}
    \hline
        \text{case} & $1R_{\hat{z}}(2\theta)$ & $2R_{\hat{z}}(\theta)$ & $4R_{\hat{z}}(\theta/2)$\\
         \hline
        $\begin{gathered}
    \scalebox{0.7}{
 \Qcircuit @C=0.2em @R=0.0em @!R { 
    \lstick{} & \gate{\mathrm{\Pi^{\theta}_{\bar{x}}}} & \gate{\mathrm{\Delta}} & \qw\\
}
\hspace{1mm}\raisebox{-0.8mm}{=}\hspace{0mm}
 }
  \vspace{0.6mm}
    \end{gathered}$
    & 
    $\begin{gathered}
    \scalebox{0.7}{
 \Qcircuit @C=0.2em @R=0.0em @!R { 
    \lstick{} & \gate{\mathrm{H}} & \gate{\mathrm{R^\dagger_{\hat{z}}(2\theta)}} & \gate{\mathrm{H}} & \qw\\
}
\hspace{1mm}\raisebox{0mm}{\circled{1}}
 }
 \vspace{0.5mm}
    \end{gathered}$  
    &  
    &
    \\
    \hline
        $\begin{gathered}
    \scalebox{0.7}{
 \Qcircuit @C=0.2em @R=0.0em @!R { 
    \lstick{} & \ctrl{1} & \multigate{1} {\mathrm{\Delta}} & \qw\\
    \lstick{} & \gate{\mathrm{\Pi^{\theta}_{\bar{x}}}} & \ghost {\mathrm{\Delta}} & \qw\\
}
\hspace{1mm}\raisebox{-4mm}{=}\hspace{0mm}
 } \vspace{0.6mm}
    \end{gathered}$ 
        & 
        $\begin{gathered}
    \scalebox{0.7}{
        \Qcircuit @C=0.2em @R=0em @!R { 
    \lstick{} & \ctrl{1} & \qw  & \ctrl{1} & \qw\\
    \lstick{} & \gate{\mathrm{H}} & \gate{\mathrm{R^\dagger_{\hat{z}}(2\theta)}} & \gate{\mathrm{H}} & \qw\\
}
\hspace{1mm}\raisebox{0mm}{\circled{2}}
} \vspace{0.5mm}
    \end{gathered}$  & 
$\begin{gathered}
    \scalebox{0.7}{
        \Qcircuit @C=0.2em @R=0em @!R { 
    \lstick{}  & \qw & \ctrl{1} & \qw & \qw\\
    \lstick{}  & \gate{\mathrm{H}} & \gate{\mathrm{\Pi^{\theta}_{\bar{z}}}} & \gate{\mathrm{H}} & \qw\\
}
\hspace{1mm}\raisebox{0mm}{\circled{4}}
}\vspace{0.6mm}
    \end{gathered}$ 
    &
    \\
    \hline
    $\begin{gathered}
    \scalebox{0.7}{
 \Qcircuit @C=0.2em @R=0.0em @!R { 
    \lstick{}  & \ctrl{1} & \multigate{2} {\mathrm{\Delta}} & \qw\\
    \lstick{} & \gate{\mathrm{\Pi^{\theta}_{\bar{x}}}} & \ghost {\mathrm{\Delta}} & \qw\\
    \lstick{}  & \ctrl{-1} & \ghost {\mathrm{\Delta}} & \qw\\
}
\hspace{1mm}\raisebox{-6.8mm}{=}\hspace{0mm}
 } \vspace{0.7mm}
    \end{gathered}$
    &
    $\begin{gathered}
    \scalebox{0.7}{
        \Qcircuit @C=0.2em @R=0em @!R { 
    \lstick{} & \ctrltt{1} & \qw  & \ctrltt{1} & \qw\\
    \lstick{} & \gate{\mathrm{H}} & \gate{\mathrm{R^\dagger_{\hat{z}}(2\theta)}} & \gate{\mathrm{H}} & \qw\\
    \lstick{} & \ctrl{-1} & \qw  & \ctrl{-1} & \qw\\
}
\hspace{1mm}\raisebox{0mm}{\circled{3}}
} \vspace{0.5mm}
    \end{gathered}$ 
    &
    $\begin{gathered}
    \scalebox{0.7}{
        \Qcircuit @C=0.2em @R=0.15em @!R { 
    \lstick{} & \qw & \ctrl{1} & \qw & \qw\\
    \lstick{} & \gate{\mathrm{H}} & \gate{\mathrm{i\Pi^{\theta}_{\bar{z}}}} & \gate{\mathrm{H}} & \qw\\
    \lstick{} & \qw & \ctrl{-1} & \qw & \qw\\
}
\hspace{1mm}\raisebox{0mm}{\circled{5}}
} \vspace{0.6mm}
    \end{gathered}$ 
    &
    $\begin{gathered}
    \scalebox{0.7}{
        \Qcircuit @C=0.2em @R=0.1em @!R { 
	 	 \lstick{}  & \ctrlt{1}  & \qw \\
	 	 \lstick{}  & \gate{\mathrm{\Pi_{\bar{x}}^{\theta}}}   & \qw \\
            \lstick{} & \ctrl{-1}  & \qw \\
   }
\hspace{1mm}\raisebox{0mm}{\circled{6}}
} \vspace{0.5mm}
    \end{gathered}$ 
    \\
    \hline
    \end{tabular}
    \caption{Implementations of MC$\Pi_{\bar{x}}$-$\Delta$ with up to two controls.}
    \label{tab:ccpix_table}
\end{table}
It can be appreciated that cases \circled{4},\circled{5} are correct by considering that the Hadamard gates transform $\hat{z}\rightarrow\hat{x}$, and therefore $\bar{z}\rightarrow\bar{x}$ as well. Cases \circled{1},\circled{2},\circled{3} utilize the same transformation to apply $R_{\hat{x}}$ rotations, applicable in this case due to the identity in \qc{mcpid_to_mcrx}. In the controlled cases \circled{2},\circled{3}, it can be noted that if the H gates are not applied, then a $R_{\hat{z}}$ rotation is applied on the target, adding a relative phase which can be assigned to the $\Delta$ gate. We show these in more detail in \apx{pi_smalls} --  \lem{MCpi_transform} and \lem{CHH_delta}. In case \circled{3}, given that the CCH variation in \qc{mch_circ_circ}.b is Hermitian (and Unitary so self-inverse), it is clear that the $\Delta''$ gates cancel out. The added CNOT gates are therefore applied on both sides of a CC$R_{\hat{x}}$-$\Delta$ gate, not affecting the CC$R_{\hat{x}}$ part, and changing the $\Delta$ part to apply another relative phase. 

We now provide the Clifford+T and $R_{\hat{z}}$ decompositions of these gates. We can use \lem{h_vsv_svs} to decompose the CH and the CCH variation in cases \circled{2},\circled{3}. The decomposition of the CH gate in \qc{mch_circ_circ_decom}.b is then immediately achieved from \lem{MC_transform} by adding a control. We prove the CCH decomposition in \apx{pi_smalls} -- \lem{dhx_vsv}. Here we simply note that if both controls are ''on'', then the decomposition in \qc{mch_circ_circ_decom}.a applies the Hadamard (with a $-1$ phase) from \lem{h_vsv_svs}, and in case the control of C$\Pi_S$ is ''off'' then $I$ is applied since the C$\Pi_V$ gates cancel out. Finally, if only the control of C$\Pi_S$ is ''on'', considering the effect of $\Pi_{\bar{z}}$ rotations, then effectively a CNOT gate and a relative phase are applied. The C$\Pi_V$ and C$\Pi_S$ are special cases of C$\Pi_{\bar{x}}$ and C$\Pi_{\bar{z}}$, respectively, and can be realized from case \circled{4}, which we address below.
\[
\scalebox{0.7}{
\Qcircuit @C=0.5em @R=0.6em @!R { 
    \nghost{} & \lstick{}  & \ctrltt{1} & \qw\\
    \nghost{} & \lstick{}  & \gate{\mathrm{H}} & \qw\\
    \nghost{} & \lstick{}  & \ctrl{-1} & \qw\\
}
\hspace{5mm}\raisebox{-7mm}{=}\hspace{0mm}
\Qcircuit @C=0.5em @R=0.45em @!R { 
    \nghost{} & \lstick{}   & \ctrl{1} & \qw & \ctrl{1} & \qw\\
    \nghost{} & \lstick{}  & \gate{\mathrm{\Pi_V}} & \gate{\mathrm{\Pi_S}} & \gate{\mathrm{\Pi_V}} & \qw\\
    \nghost{} & \lstick{}  & \qw & \ctrl{-1} & \qw & \qw\\
}
\hspace{5mm}\raisebox{-0mm}{(a)}\hspace{0mm}
\hspace{5mm}\raisebox{-15mm}{;}\hspace{0mm}
\Qcircuit @C=0.5em @R=0.6em @!R { 
    \nghost{} & \lstick{}  & \ctrl{1} & \qw\\
    \nghost{} & \lstick{}  & \gate{\mathrm{H}} & \qw\\
}
\hspace{5mm}\raisebox{-5mm}{=}\hspace{0mm}
\Qcircuit @C=0.5em @R=0.45em @!R { 
    \nghost{} & \lstick{}  & \qw & \ctrl{1} & \qw & \qw\\
    \nghost{} & \lstick{}  & \gate{\mathrm{\Pi_S}} & \gate{\mathrm{-\Pi_V}} & \gate{\mathrm{\Pi_S}} & \qw\\
}
\hspace{5mm}\raisebox{-0mm}{(b)}\hspace{0mm}
}
\qcref{mch_circ_circ_decom}
\]
\begin{lemma}\label{lem:h_vsv_svs}
    $H=\Pi_S(-\Pi_V)\Pi_S=\Pi_V(-\Pi_S)\Pi_V$
\end{lemma}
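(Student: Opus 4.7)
The plan is to prove the identity by the standard $\Pi$-gate conjugation calculation, exploiting the geometric fact, already recorded in the excerpt, that $\hat{v}_H$ is the angle bisector of $\hat{v}_S$ and $\hat{v}_V$.

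First I would establish a general conjugation lemma: for any two unit vectors $\hat{u},\hat{v}$,
\[
\Pi(\hat{u})\,\Pi(\hat{v})\,\Pi(\hat{u}) \;=\; \Pi\!\bigl(\hat{R}_{\hat{u}}(\pi)\,\hat{v}\bigr).
\]
This is just a careful bookkeeping of phases starting from $\Pi(\hat{v}) = i\,R_{\hat{v}}(\pi)$: the three factors of $i$ give $-i$, while in $SU(2)$ we have $R_{\hat{u}}(\pi)^{-1} = -R_{\hat{u}}(\pi)$, so that $R_{\hat{u}}(\pi)R_{\hat{v}}(\pi)R_{\hat{u}}(\pi) = -R_{\hat{R}_{\hat{u}}(\pi)\hat{v}}(\pi)$; the two signs cancel and leave $i\,R_{\hat{R}_{\hat{u}}(\pi)\hat{v}}(\pi)$, which is precisely the right-hand side. (Alternatively, one could quote \lem{MC_transform} with $n=0$ and $\omega=i$, after identifying the middle axis; the direct derivation is cleaner since we already need to identify the relevant reflected axis geometrically.)

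Next I would compute the reflected axis $\hat{R}_{\hat{v}_S}(\pi)\,\hat{v}_V$ using the standard formula $\hat{R}_{\hat{u}}(\pi)\hat{v} = 2(\hat{u}\cdot\hat{v})\hat{u} - \hat{v}$. From the coordinates given in \sec{notation}, $\hat{v}_S\cdot\hat{v}_V = \bigl((\hat{x}+\hat{y})\cdot(\hat{z}-\hat{y})\bigr)/2 = -\tfrac12$, whence
\[
\hat{R}_{\hat{v}_S}(\pi)\,\hat{v}_V \;=\; -\hat{v}_S - \hat{v}_V \;=\; -\hat{v}_H,
\]
where the last equality uses $\hat{v}_H = \hat{v}_S + \hat{v}_V$ (noted in the excerpt as $\hat{v}_H = (\hat{v}_S+\hat{v}_V)/|\hat{v}_S+\hat{v}_V|$, with $|\hat{v}_S+\hat{v}_V|=1$). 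Combining with the conjugation identity and the elementary fact $\Pi(-\hat{v}) = -\Pi(\hat{v})$ from \sec{notation} gives
\[
\Pi_S\,\Pi_V\,\Pi_S \;=\; \Pi(-\hat{v}_H) \;=\; -H,
\]
so that $\Pi_S(-\Pi_V)\Pi_S = H$, which is the first identity. The second identity $\Pi_V(-\Pi_S)\Pi_V = H$ follows by the symmetric calculation: $\hat{v}_V\cdot\hat{v}_S = -\tfrac12$ gives $\hat{R}_{\hat{v}_V}(\pi)\hat{v}_S = -\hat{v}_V-\hat{v}_S = -\hat{v}_H$ in exactly the same way.

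The only subtle part is the sign tracking: the factor from $\Pi = iR(\pi)$, the sign flip from $R(\pi)^{-1}=-R(\pi)$, and the sign from $\Pi(-\hat{v})=-\Pi(\hat{v})$ all conspire, and a careless pass would get $H$ instead of $-H$ (or vice versa) in the conjugated expression, which is exactly why the explicit $(-\Pi_V)$ and $(-\Pi_S)$ in the statement are necessary. Once these signs are handled, the proof is a two-line computation.
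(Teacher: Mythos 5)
Your proof is correct and follows essentially the same route as the paper: both reduce the operator identity to the geometric fact that the $\pi$-rotation about $\hat{v}_S$ (resp.\ $\hat{v}_V$) carries $-\hat{v}_V$ (resp.\ $-\hat{v}_S$) to $\hat{v}_H$, and verify that fact by elementary vector arithmetic with the axes defined in \sec{notation}. The only cosmetic difference is that you re-derive the conjugation rule $\Pi(\hat{u})\Pi(\hat{v})\Pi(\hat{u})=\Pi(\hat{R}_{\hat{u}}(\pi)\hat{v})$ from scratch and compute the reflected axis via $2(\hat{u}\cdot\hat{v})\hat{u}-\hat{v}$, whereas the paper invokes \lem{MC_transform} and the bisector identity $\hat{v}_S=\hat{v}_H+(-\hat{v}_V)$; your sign bookkeeping (including $\Pi(-\hat{v})=-\Pi(\hat{v})$) checks out.
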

\begin{proof}
    From \lem{MC_transform}, it suffices to show that $\hat{v}_H = \hat{R}_{\hat{v}_S}(\pi)(-\hat{v}_V)=\hat{R}_{\hat{v}_V}(\pi)(-\hat{v}_S)$, which means that $\hat{v}_S$ is located in the middle between $-\hat{v}_V$ and $\hat{v}_H$, and that $\hat{v}_V$ is located in the middle between $-\hat{v}_S$ and $\hat{v}_H$. This can be shown directly from the definition of these vectors as follows.
    $\hat{v}_S = (\hat{x}+\hat{y})/\sqrt{2}=(\hat{x}+\hat{z})/\sqrt{2}+(\hat{y}-\hat{z})/\sqrt{2} = \hat{v}_H+(-\hat{v}_V)$, and similarly, $\hat{v}_V = (\hat{z}-\hat{y})/\sqrt{2}=(\hat{x}+\hat{z})/\sqrt{2}+(-\hat{y}-\hat{x})/\sqrt{2} = \hat{v}_H+(-\hat{v}_S)$.
\end{proof}

In \qc{ccpi_two_OG}.a we provide two steps for the decomposition of case \circled{6}. The first step is achieved from its definition in \qc{mch_circ_circ}.a, using \qc{CCsu2_mid_new}.a to decompose the inverted CC$R_{\hat{x}}$ gate, removing the $\Delta$ gates and noting that two CZ gates cancel out. The second step is simply achieved by applying Hadamard gates to transform $\bar{z}\rightarrow\bar{x}$ as mentioned above. Finally, the C$\Pi_{\bar{z}}$ gates are decomposed using \lem{Cpi_zz_yy} (Lemma 12 in \cite{zindorf_all_2025}), with $\hat{\tau}=\hat{x}$ and $\hat{\sigma}=\hat{z}$, as \qc{ccpi_two_OG}.b, which provides the decomposition of case \circled{4} as well.
\[
\scalebox{0.7}{
 \Qcircuit @C=1.0em @R=0.37em @!R { 
	 	 \lstick{}  & \ctrlt{1}  & \qw \\
	 	 \lstick{}  & \gate{\mathrm{\Pi_{\bar{x}}^{\theta}}}   & \qw \\
            \lstick{} & \ctrl{-1}  & \qw \\
   }
\hspace{5mm}\raisebox{-8mm}{=}\hspace{0mm}
\Qcircuit @C=0.5em @R=0.22em @!R { 
    \nghost{} & \lstick{} & \qw  & \ctrl{1} & \qw & \ctrl{1} & \qw\\
    \nghost{} & \lstick{} & \qw & \gate{\mathrm{\Pi^{\theta\!/\!2}_{\bar{x}}}} & \gate{\mathrm{Z}} & \gate{\mathrm{\Pi^{\theta\!/\!2}_{\bar{x}}}} & \qw\\
    \nghost{} & \lstick{} & \qw & \qw & \ctrl{-1} & \qw & \qw\\
}
\hspace{5mm}\raisebox{-8mm}{=}\hspace{0mm}
\Qcircuit @C=0.5em @R=0.22em @!R { 
    \nghost{} & \lstick{} & \qw  & \ctrl{1} & \qw & \ctrl{1} & \qw & \qw\\
    \nghost{} & \lstick{} & \gate{\mathrm{H}} & \gate{\mathrm{\Pi^{\theta\!/\!2}_{\bar{z}}}} & \targ & \gate{\mathrm{\Pi^{\theta\!/\!2}_{\bar{z}}}} & \gate{\mathrm{H}} & \qw\\
    \nghost{} & \lstick{} & \qw & \qw & \ctrl{-1} & \qw & \qw & \qw\\
}
 \hspace{5mm}\raisebox{-0mm}{(a)}\hspace{0mm}
 \hspace{5mm}\raisebox{-15mm}{;}\hspace{0mm}
 \Qcircuit @C=0.5em @R=0.4em @!R { 
    \nghost{} & \lstick{} & \ctrl{1} & \qw\\
    \nghost{} & \lstick{}  & \gate{\mathrm{\Pi^{\theta}_{\bar{z}}}} & \qw\\
}
\hspace{5mm}\raisebox{-5mm}{=}\hspace{0mm}
\Qcircuit @C=0.5em @R=0.25em @!R { 
    \nghost{} & \lstick{}  & \qw & \ctrl{1} & \qw & \qw\\
    \nghost{} & \lstick{}  & \gate{\mathrm{R^\dagger_{\hat{z}}(\theta)}} & \targ & \gate{\mathrm{R_{\hat{z}}(\theta)}} & \qw\\
}
 \hspace{5mm}\raisebox{-0mm}{(b)}\hspace{0mm}
 }
 \qcref{ccpi_two_OG}
 \]

\begin{lemma}\label{lem:Cpi_zz_yy}
If $\hat{v}=\hat{R}_{\hat{\sigma}}(\theta)\hat{\tau}$ for any unit vectors $\hat{\tau},\hat{\sigma}$ s.t. $\hat{\tau}\perp \hat{\sigma}$, then $\MCO{\rpi{\hat{v}}}{C}{q_t}=\sqO{\rv{\theta}{\hat{\sigma}}}{q_t}\MCO{\rpi{\hat{\tau}}}{C}{q_t}\sqO{R^\dagger_{\hat{\sigma}}({\theta})}{q_t}$.
\end{lemma}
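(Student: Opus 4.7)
The plan is to first establish a single-qubit conjugation identity and then lift it to the multi-controlled setting by a control-sector case analysis.

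For the single-qubit step, I would invoke the standard SU(2) composition law: conjugating a rotation by $\pi$ about an axis $\hat{\tau}$ by any rotation $\rv{\theta}{\hat{\sigma}}$ yields a rotation by $\pi$ about the rotated axis $\hat{R}_{\hat{\sigma}}(\theta)\hat{\tau}$, which by hypothesis equals $\hat{v}$. Pulling out the global phase in the definition $\rpi{\hat{w}} = i\,\rv{\pi}{\hat{w}}$ from the notation section, this gives
\[
\rv{\theta}{\hat{\sigma}}\,\rpi{\hat{\tau}}\,R^\dagger_{\hat{\sigma}}(\theta) \;=\; i\,\rv{\theta}{\hat{\sigma}}\,\rv{\pi}{\hat{\tau}}\,R^\dagger_{\hat{\sigma}}(\theta) \;=\; i\,\rv{\pi}{\hat{v}} \;=\; \rpi{\hat{v}}.
\]
The perpendicularity hypothesis $\hat{\tau}\perp\hat{\sigma}$ is not strictly required for this algebraic identity; it is the geometric setup ensuring that $\hat{v}$ sweeps the great circle perpendicular to $\hat{\sigma}$, matching the intended use in \qc{ccpi_two_OG}.b where axes in the $\bar{z}$-plane are related by $\hat{x}$-rotations.

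For the multi-controlled step, I would note that the outer factors $\sqO{\rv{\theta}{\hat{\sigma}}}{q_t}$ and $\sqO{R^\dagger_{\hat{\sigma}}(\theta)}{q_t}$ act as the identity on the control register $C$, so they commute through its computational-basis decomposition. On the sector where $C$ is not in state $\ket{1\cdots 1}$, the middle factor $\MCO{\rpi{\hat{\tau}}}{C}{q_t}$ reduces to the identity on $q_t$, so the right-hand side collapses to $\rv{\theta}{\hat{\sigma}}\,R^\dagger_{\hat{\sigma}}(\theta) = I$ on $q_t$, matching the action of $\MCO{\rpi{\hat{v}}}{C}{q_t}$ on that sector. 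On the sector $\ket{1\cdots 1}$, the middle factor applies $\rpi{\hat{\tau}}$ and the full product evaluates by the single-qubit identity to $\rpi{\hat{v}}$, again matching. The two sides therefore agree on a basis of the control register.

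The only real obstacle is making the single-qubit conjugation law explicit so that the factor of $i$ in the definition of $\Pi$ flows through without ambiguity; once that is done, the extension to the controlled version is routine because the outer gates touch only the target qubit.
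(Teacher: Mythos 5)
Your argument is correct. Note that this paper does not actually prove the lemma here --- it is imported as Lemma 12 of \cite{zindorf_all_2025} --- but your route is the standard one and is surely the substance of the cited proof: conjugation by $R_{\hat{\sigma}}(\theta)$ maps the Pauli-type operator $\rpi{\hat{\tau}}$ to $\rpi{\hat{R}_{\hat{\sigma}}(\theta)\hat{\tau}}=\rpi{\hat{v}}$ (the factor $i$ in $\rpi{\hat{v}}=i\rv{\pi}{\hat{v}}$ passes through trivially), and the control-sector split settles the controlled version because the outer target rotations cancel whenever $C$ is not in $\ket{1\cdots1}$. Your side remarks are also sound: $\hat{\tau}\perp\hat{\sigma}$ is not needed for the identity itself, only to match the $\Pi^{\theta}_{\bar{x}},\Pi^{\phi}_{\bar{z}}$ usage; the only detail worth making explicit is that the sign of the induced $SO(3)$ rotation agrees with the paper's conventions, which is confirmed by consistency checks such as $R_{\hat{z}}(\tfrac{\pi}{2})=\Pi_S X$.
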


The CCi$\Pi_{\bar{z}}$ gate in case \circled{5} can be implemented using \lem{Cpi_zz_yy} as \qc{ccipi_neww}.a, noting that the $R_{\hat{z}}$ gates commute with the relative phase $i$.
The resulting CCiX gate, which is equivalent to CC$R^\dagger_{\hat{x}}(\pi)$, can be implemented as \qc{CCsu2_mid_new}.a and can be expressed as case \circled{6} with an additional CZ gate. 
When used to implement the boxed CC$\Pi_{\bar{x}}$ gates in \qc{mcpix_delta_new}, this additional CZ gate can be removed -- reducing the CNOT count by $1$. This is achieved by commuting the CZ gate with other gates and canceling it with its counterpart on the other side. The CZ is first commuted with an H gate, transforming it to a CNOT, then the CNOT is commuted with the MCZ-$\Delta$ gate while adding an MCZ gate with a different target \cite{zindorf_efficient_2024}, thus only changing the relative phase. \qc{ccipi_neww}.b presents this reduced version of case \circled{5} which applies a CCi$\Pi_{\bar{x}}$ gate up-to an additional CNOT gate, which we can use to decompose the boxed gates in \qc{mcpix_delta_new}.
\[
\scalebox{0.7}{
{
 \Qcircuit @C=1.0em @R=0.37em @!R { 
	 	 \lstick{}  & \ctrl{1}  & \qw \\
	 	 \lstick{}  & \gate{\mathrm{i\Pi_{\bar{z}}^{\theta}}}   & \qw \\
            \lstick{} & \ctrl{-1}  & \qw \\
   }
 \hspace{2mm}\raisebox{-8mm}{=}\hspace{2mm}
 \Qcircuit @C=0.2em @R=0.2em @!R { 
	 	 \lstick{} & \qw & \ctrl{1}  & \qw  & \qw  \\
	 	 \lstick{} & \gate{\mathrm{R^\dagger_{\hat{z}}(\theta)}}  & \gate{\mathrm{iX}} & \gate{\mathrm{R_{\hat{z}}(\theta)}}  & \qw \\
            \lstick{} & \qw & \ctrl{-1}  & \qw & \qw \\
   }
   \hspace{2mm}\raisebox{0mm}{(a)}\hspace{0mm}
   \hspace{0mm}\raisebox{-15mm}{;}\hspace{5mm}
 \Qcircuit @C=0.2em @R=0.37em @!R { 
	 	 \lstick{}  & \ctrltt{1}  & \qw \\
	 	 \lstick{}  & \gate{\mathrm{\Pi_{\bar{x}}^{\theta}}}  & \qw \\
            \lstick{}  & \ctrl{-1} & \qw \\
   }
 \hspace{2mm}\raisebox{-8mm}{=}\hspace{2mm}
   \Qcircuit @C=0.2em @R=0.2em @!R{ 
	 	\lstick{} & \qw & \qw & \ctrlt{1} & \qw & \qw  & \qw \\
	 	\lstick{} & \gate{\mathrm{H}} & \gate{\mathrm{R_z^\dagger(\theta)}} & \gate{\mathrm{\Pi_{\bar{x}}^{\pi\!/\!2}}} & \gate{\mathrm{R_z(\theta)}} & \gate{\mathrm{H}}   & \qw \\
        \lstick{} & \qw & \qw & \ctrl{-1} & \qw & \qw & \qw \\
 }
 \hspace{2mm}\raisebox{0mm}{(b)}\hspace{0mm}
}
}
\qcref{ccipi_neww}
\]

 The following summarizes the gate counts of the above implementations, for all options of $L_{\Pi\Delta,\alpha}(n_+,n_-)$ with $\alpha\in\{0,1\}$, as required for \eq{MCpi_d_count}. 
\begin{equation}\label{eq:MCpi_d_small_count}
L_{\Pi\Delta,\alpha}(n_+,n_-)
=
\begin{cases}
    \circled{1} (0,0,2,1) & n_-+ n_+=0\\
    \circled{2} (2,4,4,1)\;\;\text{, or }\circled{4}(1,0,2,2) & n_-+ n_+=1\\
    \circled{3} (6,12,8,1)\text{, or }\circled{5}(3+\alpha,4,4,2)\text{, or } \circled{6}(3,0,2,4) & n_-+ n_+=2 \\
\end{cases}
\end{equation}
{
\subsection{MCX$\mathbf{-\Delta}$}

Here we present our LNN implementation, with no ancilla requirements, for the MCX-$\Delta$ -- a multi-controlled Toffoli gate up-to a relative phase which can be written as $\sqO{\Delta}{Q\setminus Q_1}\MCO{X}{\{C_2,c_-,c_+\}}{q_t}$, with all qubit sets defined as in \sec{MCSU2_LNN}. As shown in \qc{mcx_to_mcz_to_2pi}.b, this is a special case of the MC$\Pi_{\bar{x}}$-$\Delta$ gate, and thus can be implemented using \qc{mcpix_delta_new} with $\theta=\tfrac{\pi}{2}$. 
We address this special case as it is known to be useful on its own as a replacement of the MCX gate, in some cases \cite{maslov_advantages_2016,oonishi_efficient_2022,kuroda_optimization_2022,selinger_quantum_2013}, with no ancilla requirements and a lower Clifford+T gate count. 
 Recalling that $\Pi_V=\Pi^{\pi/4}_{\bar{x}}$, we get the implementation in \qc{mcx_d_new}. This can be used for any choice of the control qubits, given that the target is located at the top or one below the top.
The cost of the MCX-$\Delta$ gate can therefore be written as \eq{MCX_d_count}.
\[
\scalebox{0.7}{
\Qcircuit @C=0.5em @R=0.95em @!R { 
	 	\nghost{q_{t-1} :} & \lstick{q_{t-1} :} & \qw & \ctrlpd{1}{c_-}  & \ghost{\mathrm{\Delta}} & \qw \\
	 	\nghost{q_{t} :} & \lstick{q_{t} :} & \qw & \targ  & \ghost{\mathrm{\Delta}} & \qw \\
        \nghost{q_{t+1} :} & \lstick{q_{t+1} :} & \qw & \ctrlp{-1}{c_+}  & \ghost{\mathrm{\Delta}} & \qw \\
        \nghost{Q_2 :} & \lstick{Q_2 :} & \qwl & \ctrlp{-1}{C_2}  & \multigate{-3}{\mathrm{\Delta}} & \qw \\
 } 
\hspace{5mm}\raisebox{-10mm}{=}\hspace{0mm}
\Qcircuit @C=0.5em @R=0.45em @!R {  
    \nghost{{q}_{t-1} :  } & \lstick{{q}_{t-1} :  } & \qw  & \qw & \ctrlpd{1}{c_-} &\multigate{2}{\mathrm{\Delta_2}}  & \qw & \qw   & \multigate{2}{\mathrm{\Delta_2^\dagger}} & \qw & \ctrlpd{1}{c_-}   & \qw  & \qw \\
    \nghost{{q_t} :  } & \lstick{{q_t} :  } & \qw  & \qw & \gate{\mathrm{\Pi_V}} & \ghost{\mathrm{\Delta_2}}  & \gate{\mathrm{Z}} & \qw &  \ghost{\mathrm{\Delta_2}} & \qw & \gate{\mathrm{\Pi_V}} & \qw & \qw  \\
    \nghost{{q}_{t+1} :  } & \lstick{{q}_{t+1} :  } & \qw  & \qw & \ctrlp{-1}{c_+} & \ghost{\mathrm{\Delta_2}}  & \qw & \ghost{\mathrm{\Delta_1^{\dagger}}} &  \ghost{\mathrm{\Delta_2^\dagger}} & \qw  & \ctrlp{-1}{c_+}   & \qw & \qw  \\
    \nghost{{Q}_{2} :  } & \lstick{{Q}_{2} :  } & \qwl  & \qw  & \qw & \qw & \ctrlp{-2}{C_2\;\;} & \multigate{-1} {\mathrm{\Delta_1}} & \qw  & \qw & \qw & \qw  & \qw
    \gategroup{1}{4}{3}{6}{0.3em}{-} 
    \gategroup{1}{9}{3}{12}{0.3em}{-} 
    \gategroup{2}{7}{4}{8}{0.3em}{--} 
}
}
\qcref{mcx_d_new}
\]
\begin{equation}\label{eq:MCX_d_count}
L_{X\Delta}(k_2,n_2,n_+,n_-) = 
\begin{cases}
    L_{Z\Delta}(k_2,n_2) + 2L_{\Pi_{V}\Delta,0}(n_+,n_-) & n_2>0 \\
    L_{X\Delta}(n_+,n_-) & n_2=0
\end{cases}
\end{equation}
with $L_{X\Delta}(n_+,n_-)$ holding the cost of the MCX-$\Delta$ gate for every choice of $n_+,n_-$, in case $n_2=0$. $L_{\Pi_{V}\Delta,0}(n_+,n_-)$ is analogous to $L_{\Pi\Delta,0}(n_+,n_-)$, such that it holds the cost of the boxed $\Pi_V$ gates in \qc{mcx_d_new}. 
As mentioned, these can be implemented with an reduced CNOT count as \qc{ccipi_neww}. Since the MCX-$\Delta$ implements a specific rotation, arbitrary $R_{\hat{z}}$ gates used in \eq{MCpi_d_small_count} have a fixed angle in this case. We can therefore choose the option which provides the lowest Clifford+T gate count.
 
\tab{ccx_d_table} provides the implementations for both gates for each choice of $n_-,n_+$. When $n_-=n_+=0$, the $\Pi_V$ gate can be replaced with a Hadamard to transform the MCZ-$\Delta$ gate to a MCX-$\Delta$ gate.
In this unique case, the relative phase does not apply on the target qubit which means that the MCX-$\Delta$ gate commutes with $\rx$ gates applied on the target, however, it is only achieved when one dirty quasi-ancilla is available.

\begin{table}[H]
    \centering
\begin{tabular}{|c|c|c|}
    \hline
        \text{case} & 
        $\begin{gathered}
    \scalebox{0.7}{
 \Qcircuit @C=0.0em @R=0.0em @!R { 
    \lstick{} & \gate{\mathrm{O}}
}
\hspace{1mm}\raisebox{-0.8mm}{=}\hspace{0mm}
 \Qcircuit @C=0.0em @R=0.0em @!R { 
    \lstick{} & \target
}
 }
  \vspace{0.6mm}
    \end{gathered}$ 
        & 
        $\begin{gathered}
    \scalebox{0.7}{
 \Qcircuit @C=0.0em @R=0.0em @!R { 
    \lstick{} & \gate{\mathrm{O}}
}
\hspace{1mm}\raisebox{-0.8mm}{=}\hspace{0mm}
 \Qcircuit @C=0.0em @R=0.0em @!R { 
    \lstick{} & \gate{\mathrm{\Pi_V}}
}
 }
  \vspace{0.6mm}
    \end{gathered}$ \\
         \hline
        $\begin{gathered}
    \scalebox{0.7}{
 \Qcircuit @C=0.2em @R=0.0em @!R { 
    \lstick{} & \gate{\mathrm{O}} & \gate{\mathrm{\Delta}} & \qw\\
}
\hspace{1mm}\raisebox{-0.8mm}{$\Rightarrow$}\hspace{0mm}
 }
  \vspace{0.6mm}
    \end{gathered}$
    & 
    $\begin{gathered}
    \scalebox{0.7}{
 \Qcircuit @C=0.2em @R=0.0em @!R { 
    \lstick{}  & \gate{\mathrm{X}}  & \qw\\
}
 }
 \vspace{0.5mm}
    \end{gathered}$  
    & 
    $\begin{gathered}
    \scalebox{0.7}{
 \Qcircuit @C=0.2em @R=0.0em @!R { 
    \lstick{}  & \gate{\mathrm{H}}  & \qw\\
}
 }
 \vspace{0.5mm}
    \end{gathered}$ 
    \\
    \hline
        $\begin{gathered}
    \scalebox{0.7}{
 \Qcircuit @C=0.2em @R=0.0em @!R { 
    \lstick{} & \ctrl{1} & \multigate{1} {\mathrm{\Delta}} & \qw\\
    \lstick{} & \gate{\mathrm{O}} & \ghost {\mathrm{\Delta}} & \qw\\
}
\hspace{1mm}\raisebox{-3mm}{=}\hspace{0mm}
 } \vspace{0.6mm}
    \end{gathered}$ 
        & 
        $\begin{gathered}
    \scalebox{0.7}{
        \Qcircuit @C=0.2em @R=0.65em @!R { 
    \lstick{} & \ctrl{1} & \qw\\
    \lstick{} &  \targ  & \qw\\
}} \vspace{1.4mm}
    \end{gathered}$  & 
$\begin{gathered}
    \scalebox{0.7}{
        \Qcircuit @C=0.2em @R=0em @!R { 
	 	 \lstick{}  & \ctrl{1}  & \qw \\
	 	 \lstick{}  & \gate{\mathrm{\Pi_{V}}}  & \qw \\
   }}\vspace{0.6mm}
    \end{gathered}$ 
    \\
    \hline
    $\begin{gathered}
    \scalebox{0.7}{
 \Qcircuit @C=0.2em @R=0.0em @!R { 
    \lstick{}  & \ctrl{1} & \multigate{2} {\mathrm{\Delta}} & \qw\\
    \lstick{} & \gate{\mathrm{O}} & \ghost {\mathrm{\Delta}} & \qw\\
    \lstick{}  & \ctrl{-1} & \ghost {\mathrm{\Delta}} & \qw\\
}
\hspace{1mm}\raisebox{-5.5mm}{$\Rightarrow$}\hspace{0mm}
 } \vspace{0.7mm}
    \end{gathered}$
    &
    $\begin{gathered}
    \scalebox{0.7}{
        \Qcircuit @C=0.2em @R=0.75em @!R { 
	 	 \lstick{} & \ctrlt{1} & \qw \\
	 	\lstick{} & \targ & \qw \\
	 	 \lstick{} & \ctrl{-1} & \qw \\
 }} \vspace{0.5mm}
    \end{gathered}$ 
    &
    $\begin{gathered}
    \scalebox{0.7}{
        \Qcircuit @C=0.2em @R=0.15em @!R { 
	 	 \lstick{}  & \ctrltt{1}  & \qw \\
	 	 \lstick{}  & \gate{\mathrm{\Pi_{V}}}  & \qw \\
            \lstick{}  & \ctrl{-1} & \qw \\
   }} \vspace{0.6mm}
    \end{gathered}$ 
    \\
    \hline
    \end{tabular}
    \caption{Implementations of MCX-$\Delta$ and MC$\Pi_V$-$\Delta$ with up to two controls. The MC$\Pi_V$-$\Delta$ implementations are not exact, but can be used for the construction in \qc{mcx_d_new}.}
    \label{tab:ccx_d_table}
\end{table}
As these are special cases of the MC$\Pi_{\bar{x}}$-$\Delta$ gates, we already have all of the implementations from \sec{mcxpidelta}. We simply replace the $R_{\hat{z}}(\tfrac{\pi}{4})$ gates with T, as these are equivalent up to a relative phase. This provides the decompositions in \qc{rftoff_ccv_cv} which only require Clifford+T gates. For example, \qc{rftoff_ccv_cv}.a presents the known implementation of the relative phase Toffoli gate from \cite{maslov_advantages_2016,zindorf_efficient_2024}, as a special case of the CC$\Pi_{\bar{x}}$-$\Delta$ gate. We summarize the cost of these gates in \eq{MCX_d_small_count}. 
\[
\scalebox{0.62}{
{
\Qcircuit @C=0.2em @R=1.35em @!R { 
	 	\nghost{} & \lstick{} & \ctrlt{1} & \qw \\
	 	\nghost{} & \lstick{} & \targ & \qw \\
	 	\nghost{} & \lstick{} & \ctrl{-1} & \qw \\
 }
 \hspace{2mm}\raisebox{-8.5mm}{=}\hspace{2mm}
\Qcircuit @C=0.5em @R=0.4em @!R { 
	 	 \lstick{}  & \ctrlt{1}  & \qw \\
	 	 \lstick{}  & \gate{\mathrm{\Pi_{\bar{x}}^{\pi\!/\!2}}}   & \qw \\
            \lstick{} & \ctrl{-1}  & \qw \\
   }
 \hspace{2mm}\raisebox{-8.5mm}{=}\hspace{2mm}
 \Qcircuit @C=0.2em @R=0.82em @!R { 
	 	 \lstick{} & \qw & \qw & \ctrl{1} & \qw & \qw & \qw & \ctrl{1} & \qw & \qw & \qw \\
	 	 \lstick{} & \gate{\mathrm{H}} & \gate{\mathrm{T^\dagger}} & \targ  & \gate{\mathrm{T}} & \targ & \gate{\mathrm{T^\dagger}} & \targ & \gate{\mathrm{T}} & \gate{\mathrm{H}} & \qw \\
	 	 \lstick{} & \qw & \qw & \qw & \qw & \ctrl{-1}  & \qw & \qw & \qw & \qw & \qw \\
 }
 \hspace{5mm}\raisebox{-0mm}{(a)}\hspace{0mm}
 \hspace{0mm}\raisebox{-15mm}{;}\hspace{5mm}
 \Qcircuit @C=0.2em @R=0.8em @!R { 
	 	 \lstick{}  & \ctrltt{1}  & \qw \\
	 	 \lstick{}  & \gate{\mathrm{\Pi_{V}}}  & \qw \\
            \lstick{}  & \ctrl{-1} & \qw \\
   }
 \hspace{2mm}\raisebox{-8.5mm}{=}\hspace{2mm}
  \Qcircuit @C=0.2em @R=0.37em @!R { 
	 	 \lstick{}  & \ctrltt{1}  & \qw \\
	 	 \lstick{}  & \gate{\mathrm{\Pi_{\bar{x}}^{\pi\!/\!4}}}  & \qw \\
            \lstick{}  & \ctrl{-1} & \qw \\
   }   
 \hspace{2mm}\raisebox{-8.5mm}{=}\hspace{2mm}
   \Qcircuit @C=0.2em @R=0.78em @!R{ 
	 	\lstick{} & \qw & \qw & \ctrlt{1} & \qw & \qw  & \qw \\
	 	\lstick{} & \gate{\mathrm{H}} & \gate{\mathrm{T^\dagger}} & \targ & \gate{\mathrm{T}} & \gate{\mathrm{H}}   & \qw \\
        \lstick{} & \qw & \qw & \ctrl{-1} & \qw & \qw & \qw \\
 }
}
\hspace{5mm}\raisebox{-0mm}{(b)}\hspace{0mm}
\hspace{0mm}\raisebox{-15mm}{;}\hspace{5mm}
{
 \Qcircuit @C=0.2em @R=0.8em @!R { 
	 	 \lstick{}  & \ctrl{1}  & \qw \\
	 	 \lstick{}  & \gate{\mathrm{\Pi_{V}}}  & \qw \\
   }
 \hspace{2mm}\raisebox{-5mm}{=}\hspace{2mm}
   \Qcircuit @C=0.2em @R=0.78em @!R{ 
	 	\lstick{} & \qw & \qw & \ctrl{1} & \qw & \qw  & \qw \\
	 	\lstick{} & \gate{\mathrm{H}} & \gate{\mathrm{T^\dagger}} & \targ & \gate{\mathrm{T}} & \gate{\mathrm{H}}   & \qw \\
 }
 \hspace{5mm}\raisebox{-0mm}{(c)}\hspace{0mm}
}
\qcref{rftoff_ccv_cv}
}
\]
\begin{equation}\label{eq:MCX_d_small_count}
L_{X\Delta}(n_+,n_-)
=
\begin{cases}
    (0,0,0,0) & n_-+ n_+=0\\
    (1,0,0,0)& n_-+ n_+=1\\
    (3,4,2,0) & n_-+ n_+=2 \\
\end{cases}
\text{\;\;\;\;\;, and\;\;\;\;}
L_{\Pi_V\Delta,0}(n_+,n_-)
=
\begin{cases}
    (0,0,1,0) & n_-+ n_+=0\\
    (1,2,2,0)& n_-+ n_+=1\\
    (3,6,4,0) & n_-+ n_+=2 \\
\end{cases}
\end{equation}

}

\subsection{MCZ$\mathbf{-\Delta}$}
Finally we present our Clifford+T decomposition for the LNN MCZ-$\Delta$ gate which is used to construct the gates discussed in the previous sections. We can focus on the case in which the MCZ-$\Delta$ is controlled by a qubit set $C_1$ which is located above the target qubit, and simply apply our resulting structure upside-down in case the control set is $C_2$ which is located below the target.

The MCZ-$\Delta$ gate which we use can be written as $\sqO{\Delta}{\{Q_{1},q_{k_1+1}\}} \MCO{Z}{C_1}{q_{k_1+2}}$, such that $Q_1=\{q_1,..,q_{k_1}\}$, the control qubit set $C_1\in Q_1$ is of size $n_1\in[1,k_1]$, the target qubit $q_{k_1+2}$ is unaffected by the relative phase $\Delta$ gate, and the dirty quasi-ancilla qubit $q_{k_1+1}$ is unaffected by the MCZ gate. 

The control qubits can be chosen freely out of $Q_1$ while satisfying $q_1\in C$, due to the definition of $Q$ being the smallest LNN qubit set.
We wish to develop a decomposition of this gate which provides low values for $L_{Z\Delta}(k_1,n_1)$ which holds the Clifford+T gate count of the MCZ-$\Delta$ gate. 

As we would like to build our structure using a recursive formula, we use an index $j\in [3,k_1+2]$ and define 
$ \{Z\}^j_{C_1,Q_1} := \sqO{\Delta^{j}}{Q_1^{j+1}} \MCO{Z}{C_1^j}{q_j}$ with a cost function $L_{\{Z\Delta\}}(Q_1,C_1,j)$, such that $Q_1^{j}:= \{q_1,..,q_{j-2}\}$ and $C^{j}_1:=C_1 \cap Q^{j}_1$ holding all qubits and all {\em control} qubits above $q_{j-1}$ respectively. 
We note that for the choice $j=k_1+2$ we get $\{Z\}^{k_1+2}_{C_1,Q_1} = \sqO{\Delta}{\{Q_{1},q_{k_1+1}\}} \MCO{Z}{C_1}{q_{k_1+2}}$, and therefore $L_{Z\Delta}(k_1,n_1)=L_{\{Z\Delta\}}(Q_1,C_1,k_1+2)$.

Using the inverted version of \qc{mcx_new}, with $n_2=0$ and $n_+=1$, removing one MCZ-$\Delta$ gate as it only applies a relative phase to achieve \qc{mcz_d_new}.

\[
\scalebox{0.7}{
\Qcircuit @C=0.5em @R=1.0em @!R { \\ 
    \nghost{{Q}^{j}_1 :  } & \lstick{{Q}^{j}_1 :  } & \qwl & \ctrlp{2}{C^{j}_1} & \multigate{1} {\mathrm{\Delta^{j}}} & \qw & \qw\\
    \nghost{{q_{j-1}} :  } & \lstick{{q_{j-1}} :  } & \qw & \qw & \ghost {\mathrm{\Delta^{j}}} & \qw\\
    \nghost{{q}_{j} :  } & \lstick{{q}_{j} :  } & \qw & \gate{\mathrm{Z}} & \qw & \qw\\
}
\hspace{5mm}\raisebox{-17mm}{=}\hspace{0mm}
\Qcircuit @C=0.5em @R=1.0em @!R { 
    \nghost{{Q}^{j-1}_1 :  } & \lstick{{Q}^{j-1}_1 :  } & \qwl & \ctrlp{1}{C^{j-1}_1} & \multigate{2} {\mathrm{\Delta^{j}}} & \qw & \qw\\
    \nghost{{q}_{j-2} :  } & \lstick{{q}_{j-2} :  } & \qw & \ctrlpd{2}{c^j_-} & \ghost {\mathrm{\Delta^{j}}} & \qw\\
    \nghost{{q_{j-1}} :  } & \lstick{{q_{j-1}} :  } & \qw & \qw & \ghost {\mathrm{\Delta^{j}}} & \qw\\
    \nghost{{q}_{j} :  } & \lstick{{q}_{j} :  } & \qw & \gate{\mathrm{Z}} & \qw & \qw\\
}
\hspace{5mm}\raisebox{-17mm}{=}\hspace{0mm}
\Qcircuit @C=0.5em @R=1.0em @!R { 
    \nghost{{Q}^{j-1}_1 :  } & \lstick{{Q}^{j-1}_1 :  } & \qwl & \qw  & \qw  & \qw & \qw & \ctrlp{2}{C^{j-1}_1\;\;\;\;\;} & \multigate{1} {\mathrm{\Delta^{j-1}}}   & \qw  & \qw & \qw & \qw  & \qw \\ 
    \nghost{{q}_{j-2} :  } & \lstick{{q}_{j-2} :  } & \qw & \qw  & \qw & \ctrlpd{1}{c^j_-} &\multigate{2}{\mathrm{\Delta'}} & \qw & \ghost{\mathrm{\Delta^{j-1}}}      & \multigate{2}{\mathrm{{\Delta'}^\dagger}} & \qw & \ctrlpd{1}{c^j_-}   & \qw  & \qw \\
    \nghost{{q_{j-1}} :  } & \lstick{{q_{j-1}} :  } & \qw & \qw  & \qw & \targ & \ghost{\mathrm{\Delta'}}  & \gate{\mathrm{Z}} & \qw &  \ghost{\mathrm{{\Delta'}^\dagger}} & \qw & \targ & \qw & \qw  \\
    \nghost{{q}_{j} :  } & \lstick{{q}_{j} :  } & \qw & \qw  & \qw & \ctrl{-1} & \ghost{\mathrm{\Delta'}}  & \qw & \qw &  \ghost{\mathrm{{\Delta'}^\dagger}} & \qw  & \ctrl{-1}   & \qw & \qw 
    \gategroup{2}{5}{4}{7}{0.3em}{-} 
    \gategroup{2}{10}{4}{13}{0.3em}{-} 
    \gategroup{1}{8}{3}{9}{0.3em}{--} 
}
}
\qcref{mcz_d_new}
\]
Here, $c^j_-=C_1^{j}\setminus C_1^{j-1}$ is either empty or holds $q_{j-2}$. \lem{recursive_formula} simply follows, and the recursive formula in \eq{MCZ_d_recursive_count} therefore provides the cost of the MCZ-$\Delta$ gate.
 \begin{lemma} \label{lem:recursive_formula}
$ \{Z\}^{j}_{C_1,Q_1} = \MCO{X}{\{q_{j},c^j_-\}}{q_{j-1}} \{Z\}^{j-1}_{C_1,Q_1} \MCO{X}{\{q_{j},c^j_-\}}{q_{j-1}} $ for $j\in[4,k_1+2]$, and
$c^j_- := 
    \begin{cases}
    \emptyset & q_{j-2} \not\in C_1\\
    q_{j-2}  & q_{j-2} \in C_1
    \end{cases}$.
\end{lemma}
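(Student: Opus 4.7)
The plan is to derive the lemma directly from \qc{mcz_d_new}, whose three successive equalities are assembled in the text immediately preceding the statement. I would break the argument into three steps.

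First, I would rewrite the MCZ portion of $\{Z\}^j_{C_1,Q_1}$ by isolating the potential control at $q_{j-2}$. By the definition $c^j_- = C_1^j \setminus C_1^{j-1}$, this set is either empty or equal to $\{q_{j-2}\}$, so $C_1^j = C_1^{j-1} \cup c^j_-$. This is the first equality of \qc{mcz_d_new} and is purely a notational regrouping.

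Second, I would invoke the inverted form of \qc{mcx_new} with $n_2=0$ and $n_+=1$, identifying the dirty ancilla $q_a$ with $q_{j-1}$ and the neighbouring rows $q_{a-1},q_{a+1}$ with $q_{j-2},q_j$. In this configuration the lower portion of \qc{mcx_new} collapses and one of its boxed MCZ-$\Delta$ factors is left without any controls, contributing only an unconditional diagonal phase; as the paragraph above the lemma states, this factor is absorbed away. What remains is exactly the third circuit in \qc{mcz_d_new}: a pair of outer MCX-$\Delta'$ blocks on $\{q_{j-2},q_{j-1},q_j\}$, each with target $q_{j-1}$ and controls $\{q_j,c^j_-\}$, enclosing one instance of the smaller gate $\{Z\}^{j-1}_{C_1,Q_1}$.

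The final step, which is the only one requiring a direct computation, is to show that the $\Delta'$ and $\Delta'^{\dagger}$ decorations on the outer MCX-$\Delta'$ blocks fold into the relative phase $\Delta^{j}$ already allowed in the definition of $\{Z\}^j_{C_1,Q_1}$, leaving the outer gates as pure $\MCO{X}{\{q_j,c^j_-\}}{q_{j-1}}$. A computational-basis check yields the pure-gate identity $\MCO{X}{\{q_j,c^j_-\}}{q_{j-1}} \MCO{Z}{C_1^{j-1}}{q_{j-1}} \MCO{X}{\{q_j,c^j_-\}}{q_{j-1}} = \MCO{Z}{C_1^j}{q_j}\cdot\MCO{Z}{C_1^{j-1}\cup\{q_{j-1}\}}{}$; the second factor on the right is supported only on qubits in $Q_1^{j+1}$ and so is admissible as part of $\Delta^{j}$. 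Since $\Delta^{j-1}$ is diagonal and supported above $q_{j-1}$, it commutes with the outer MCX conjugation, so the full conjugation of $\{Z\}^{j-1}_{C_1,Q_1}$ yields $\Delta^{j-1}\cdot\MCO{Z}{C_1^{j-1}\cup\{q_{j-1}\}}{}\cdot\MCO{Z}{C_1^j}{q_j}$, which is $\{Z\}^{j}_{C_1,Q_1}$ for the appropriate choice of $\Delta^{j}$.

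The main point of care is the last identity: one must verify that the residual MCZ produced by the conjugation is genuinely supported on $Q_1^{j+1}$ (i.e.\ never touches the target $q_j$), so it can be absorbed into $\Delta^{j}$. This reduces to a short case analysis on the bit-values of $(x_{C_1^{j-1}},x_{q_{j-2}},x_{q_{j-1}},x_{q_j})$, split according to whether $c^j_-=\emptyset$ or $c^j_-=\{q_{j-2}\}$, and presents no obstacle beyond the bookkeeping of the $\Delta$ factors.
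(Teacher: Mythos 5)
Your proposal is correct, and its decisive content is your step~3, which is more self-contained than the paper's own argument. The paper obtains the lemma by pointing at \qc{mcz_d_new}, itself derived from the inverted \qc{mcx_new} (hence ultimately from the MC$R_{\hat{v}}$ machinery of \lem{ccrv_reg}), and states that the lemma ``simply follows.'' You instead verify directly in the computational basis that $\MCO{X}{\{q_{j},c^j_-\}}{q_{j-1}}\,\MCO{Z}{C_1^{j-1}}{q_{j-1}}\,\MCO{X}{\{q_{j},c^j_-\}}{q_{j-1}}=\MCO{Z}{C_1^{j}}{q_j}\cdot\MCO{Z}{C_1^{j-1}\cup\{q_{j-1}\}}{}$, observe that the residual MCZ and $\Delta^{j-1}$ are diagonal and supported on $Q_1^{j+1}$ (so they are absorbed into the free $\Delta^{j}$), and observe that $\Delta^{j-1}$ commutes with the conjugating MCX because the latter touches its support only through a control. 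All of these checks are right, and this direct computation proves the lemma on its own; your steps~1--2, which retrace the paper's derivation of \qc{mcz_d_new}, become logically optional. What your route buys is independence from the relative-phase bookkeeping of \qc{mcx_new}; what the paper's route buys is uniformity, since the same structural move also produces \qc{mcpix_delta_new} and feeds directly into the cost accounting.

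One small inaccuracy in your step~2: the MCZ-$\Delta$ factor removed from the inverted \qc{mcx_new} is not ``left without any controls''; it is the copy controlled by $C_1^{j-1}$ with its Z acting on the quasi-ancilla $q_{j-1}$. It is removable not because it is an unconditional phase, but because it is a diagonal gate supported entirely on $Q_1^{j+1}$, i.e.\ away from the target $q_j$, so it is precisely the kind of relative phase that the definition of $\{Z\}^{j}_{C_1,Q_1}$ allows to be folded into $\Delta^{j}$. Since your step~3 does not rely on this description, the slip does not affect the validity of your proof.
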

\begin{equation}\label{eq:MCZ_d_recursive_count}
L_{\{Z\Delta\}}(Q_1,C_1,j) = 
\begin{cases}
    L_{\{Z\Delta\}}(Q_1,C_1,j-1) + 2L_{X\Delta}(n_+=1,n_-=1) & j\geq4 \text{, \;\;\;\;\;} q_{j-2}\in C_1 \\
    L_{\{Z\Delta\}}(Q_1,C_1,j-1) + 2L_{X\Delta}(n_+=1,n_-=0) & j\geq4 \text{, \;\;\;\;\;} q_{j-2}\not\in C_1 \\
    L_{\{Z\Delta\}}(Q_1,C_1,3) & j=3 
\end{cases}
\end{equation}
By repeatedly applying \lem{recursive_formula} we achieve \lem{VZ_chain}. The resulting structure can be described by \qc{V_chain_Z}, in which we omit the $\Delta$ gates associated with the MCX-$\Delta$ gates for a compact visualization. We remove the control qubit labels as well, noting that all arrowed controls on a line $j$ would be marked as $c_-^{j+2}$. We also exemplify this structure for a specific choice of control qubits in \qc{V_chain_Z_examples}.  
\begin{lemma}\label{lem:VZ_chain}
$\{ Z \}^{k_1+2}_{C_1,Q_1}=
\left(\prod_{j=4}^{k_1+2}\MCO{X}{\{q_{j},c^j_-\}}{q_{j-1}}\right)^{\dagger}
\{ Z \}^{3}_{C,Q}
\left(\prod_{j=4}^{k_1+2}\MCO{X}{\{q_{j},c^j_-\}}{q_{j-1}}\right)$ with $c^j_- := 
    \begin{cases}
    \emptyset & q_{j-2} \not\in C_1\\
    q_{j-2}  & q_{j-2} \in C_1
    \end{cases}$.
\end{lemma}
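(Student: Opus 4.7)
The plan is to prove the lemma by straightforward induction on $j$ in the range $[3,k_1+2]$, using \lem{recursive_formula} as the single inductive step. Writing $M_j := \MCO{X}{\{q_j,c^j_-\}}{q_{j-1}}$, the base case $j=3$ is trivial since both products are empty and the equation reduces to $\{Z\}^3_{C_1,Q_1}=\{Z\}^3_{C_1,Q_1}$. For the inductive step, \lem{recursive_formula} gives $\{Z\}^{j}_{C_1,Q_1}=M_j\,\{Z\}^{j-1}_{C_1,Q_1}\,M_j$, and substituting the inductive hypothesis for $\{Z\}^{j-1}_{C_1,Q_1}$ simply sandwiches one extra copy of $M_j$ onto each side of the growing conjugation. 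Iterating down from $j=k_1+2$ to $j=3$ therefore produces
\[
\{Z\}^{k_1+2}_{C_1,Q_1}=M_{k_1+2}M_{k_1+1}\cdots M_4\;\{Z\}^{3}_{C_1,Q_1}\;M_4 M_5\cdots M_{k_1+2}.
\]

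To match the stated form, I would then fix the convention $\prod_{j=4}^{k_1+2}M_j := M_4 M_5\cdots M_{k_1+2}$ (increasing $j$, left-to-right), and use the fact that each $M_j$ is Hermitian: $X$ is Hermitian and self-inverse, and placing controls on a Hermitian single-qubit gate preserves both properties, so $M_j^\dagger=M_j$. Consequently
\[
\left(\prod_{j=4}^{k_1+2}M_j\right)^{\!\dagger}=M_{k_1+2}^{\dagger}\cdots M_5^{\dagger}M_4^{\dagger}=M_{k_1+2}\cdots M_5 M_4,
\]
which is exactly the left factor obtained from unrolling the recursion. This finishes the identification.

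The proof is essentially bookkeeping, and the only point that deserves care is the ordering convention for the product, since consecutive $M_j$ and $M_{j+1}$ act on the overlapping qubit triples $\{q_{j-2},q_{j-1},q_j\}$ and $\{q_{j-1},q_j,q_{j+1}\}$ and thus do not commute in general. Fixing the convention up front, as above, and invoking Hermiticity of each $M_j$ is the main (and only) obstacle; no further case analysis on whether $q_{j-2}\in C_1$ is needed, because the dependence on that distinction has already been absorbed into the definition of $c^j_-$ and hence of $M_j$, and \lem{recursive_formula} handles both cases uniformly.
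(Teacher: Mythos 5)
Your proof is correct and follows exactly the route the paper takes: the paper obtains \lem{VZ_chain} simply ``by repeatedly applying \lem{recursive_formula},'' which is precisely your induction, and your extra care about the product-ordering convention and the Hermiticity of each $\MCO{X}{\{q_{j},c^j_-\}}{q_{j-1}}$ (so that the daggered product is just the reversed product) is a sound and welcome piece of bookkeeping that the paper leaves implicit.
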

\[
\scalebox{0.7}{
\Qcircuit @C=1.0em @R=0em @!R {
	 	\nghost{} & \lstick{} & \ctrl{1} & \multigate{10}{\mathrm{\Delta}} & \qw\\
	 	\nghost{} & \lstick{} & \ctrlp{1}{} & \ghost{\mathrm{\Delta}} & \qw\\
	 	\nghost{} & \lstick{} & \ctrlp{1}{} & \ghost{\mathrm{\Delta}} & \qw\\
	 	\nghost{} & \lstick{} & \ctrlp{1}{} & \ghost{\mathrm{\Delta}} & \qw\\
	 	\nghost{} & \lstick{} & \ctrlp{1}{} & \ghost{\mathrm{\Delta}} & \qw\\
	 	\nghost{} & \lstick{} & \ar @{.} [2,0] &  & \\
	 	\nghost{} & \lstick{} &  &  & \\
            \nghost{} & \lstick{} &  &  & \\
	 	\nghost{} & \lstick{} & \ctrlp{1}{} \ar @{-} [-1,0] & \ghost{\mathrm{\Delta}} & \qw\\
	 	\nghost{} & \lstick{} & \ctrlp{2}{} & \ghost{\mathrm{\Delta}} & \qw\\
	   \nghost{} & \lstick{} & \qw & \ghost{\mathrm{\Delta}} & \qw\\
        \nghost{} & \lstick{} & \gate{\mathrm{Z}} & \qw & \qw\\
 }
 \hspace{5mm}\raisebox{-22mm}{=}\hspace{0mm}
 \Qcircuit @C=0.3em @R=0.38em @!R {
	 	\nghost{} & \lstick{} & \qw & \qw & \qw & \qw & \qw & \qw & \qw & \qw & \qw & \ctrl{2} & \multigate{1}{\mathrm{\Delta^3}} & \qw & \qw & \qw & \qw & \qw & \qw & \qw & \qw & \qw & \qw \\
	 	\nghost{} & \lstick{} & \qw & \qw & \qw & \qw & \qw & \qw & \qw & \ctrlp{1}{{}} & \qw & \qw & \ghost{\mathrm{\Delta^3}} & \qw & \ctrlp{1}{{}} & \qw & \qw & \qw & \qw & \qw & \qw & \qw & \qw\\
	 	\nghost{} & \lstick{} & \qw & \qw & \qw & \qw & \qw & \qw & \ctrlp{1}{{}} & \targ & \qw & \control\qw & \qw & \qw & \targ & \ctrlp{1}{{}} & \qw & \qw & \qw & \qw & \qw & \qw & \qw \\
	 	\nghost{} & \lstick{} & \qw & \qw & \qw & \qw & \qw & \ctrlp{1}{{}} & \targ & \ctrl{-1} & \qw & \qw & \qw & \qw & \ctrl{-1} & \targ & \ctrlp{1}{{}} & \qw & \qw & \qw & \qw & \qw & \qw\\
	 	\nghost{} & \lstick{} & \qw & \qw & \qw & \qw & \ctrlp{1}{{}} & \targ \ar @{-} [1,0]& \ctrl{-1} & \qw & \qw & \qw & \qw & \qw & \qw & \ctrl{-1} & \targ \ar @{-} [1,0]& \ctrlp{1}{{}} & \qw & \qw & \qw & \qw & \qw \\
	 	\nghost{} & \lstick{} &  &  & &  & \nghost{} \ar @{.} [2,-2] &  \nghost{} \ar @{.} [2,-2]&  &  &  &  &  &  &  &  & \nghost{} \ar @{.} [2,2]& \nghost{} \ar @{.} [2,2]&  &  &  &  & \\
	 	\nghost{} & \lstick{} &  &  & &  &  &  &  &  &  &  &  &  &  &  &  &  &  &  &  &  & \\
        \nghost{} & \lstick{} &  &  & \nghost{}& \nghost{} &  &  &  &  &  &  &  &  &  &  &  &  & \nghost{} & \nghost{} &  &  & \\
	 	\nghost{} & \lstick{} & \qw & \ctrlp{1}{{}} & \targ \ar @{-} [-1,0]& \ctrl{-1} & \qw & \qw & \qw & \qw & \qw & \qw & \qw & \qw & \qw & \qw & \qw & \qw & \ctrl{-1} & \targ \ar @{-} [-1,0] & \ctrlp{1}{{}} & \qw & \qw \\
	 	\nghost{} & \lstick{} & \ctrlp{1}{{}} & \targ & \ctrl{-1} & \qw & \qw & \qw & \qw & \qw & \qw & \qw & \qw & \qw & \qw & \qw & \qw & \qw & \qw & \ctrl{-1} & \targ & \ctrlp{1}{{}} & \qw \\
	 	\nghost{} & \lstick{} & \targ & \ctrl{-1} & \qw & \qw & \qw & \qw & \qw & \qw & \qw & \qw & \qw & \qw & \qw & \qw & \qw & \qw & \qw & \qw & \ctrl{-1} & \targ & \qw \\ 
        \nghost{} & \lstick{} & \ctrl{-1} & \qw  & \qw & \qw & \qw & \qw & \qw & \qw & \qw & \qw & \qw & \qw & \qw & \qw & \qw & \qw & \qw & \qw & \qw & \ctrl{-1}  & \qw
        \gategroup{1}{12}{3}{13}{0.3em}{--} 
 }
 }
 \qcref{V_chain_Z}
\]
One notable feature of our structure is that the target of each LNN relative-phase Toffoli gate is located between its controls, thus allowing each Toffoli to be implemented using \qc{rftoff_ccv_cv}.a with 3 CNOT, 4 T and 2 H gates, i.e. with no overhead to the state-of-the-art implementation in unrestricted connectivity \cite{zindorf_efficient_2024,maslov_advantages_2016}. 

Repeatedly applying the recursive rules in \eq{MCZ_d_recursive_count} provides \eq{MCZ_d_count}, noting that the total number of \\$L_{X\Delta}(n_+=1,n_-)$ used is $2(k_1-1)$. Out of these, $2(n_1-1)$ are $L_{X\Delta}(n_+=1,n_-=1)$, since there are $n_1-1$ qubits that satisfy $q_{j-2}\in C_1$ for $j\in [4,k_1+2]$.
\begin{equation}\label{eq:MCZ_d_count}
L_{Z\Delta}(k_1,n_1) = L_{Z\Delta}(1,1)+ 2((n_1-1)L_{X\Delta}(n_+\!=\!1,n_-\!=\!1) + (k_1-n_1)L_{X\Delta}(n_+\!=\!1,n_-\!=\!0))
.
\end{equation}
We already know the cost of $L_{X\Delta}(n_+,n_-)$ from \eq{MCX_d_small_count}, and the cost of $L_{Z\Delta}(1,1)=L_{\{Z\Delta\}}(Q_1,C_1,3)$ is $(3,0,2,0)$ using the following implementation \cite{maslov_depth_2022}.
\[
\scalebox{0.75}{
 \Qcircuit @C=1.0em @R=0.4em @!R { 
	 	\nghost{} & \lstick{}  & \ctrl{2} & \multigate{1}{\mathrm{\Delta^3}} & \qw \\
	 	\nghost{} & \lstick{} & \qw & \ghost{\mathrm{\Delta^3}}  & \qw \\
	 	\nghost{} & \lstick{} &  \control\qw & \qw & \qw \\
 }
   \hspace{5mm}\raisebox{-4mm}{=}\hspace{0mm}
\Qcircuit @C=0.5em @R=0em @!R { 
	 	\nghost{} & \lstick{} & \gate{\mathrm{H}} & \targ & \gate{\mathrm{H}}& \qw \\
	 	\nghost{} & \lstick{} & \targ & \ctrl{-1} & \targ & \qw \\
	 	\nghost{} & \lstick{}  & \ctrl{-1} & \qw &  \ctrl{-1} & \qw \\
 }
  \qcref{cz_cz_d}
}
\]
Finally we reach the gate count required to implement the MCZ-$\Delta$ gate.
\begin{equation}\label{eq:MCZ_d_final_count}
L_{Z\Delta}(k_1,n_1) =
(2k_1+4n_1-3,8n_1-8,4n_1-2,0)
\end{equation}

\section{Results}

Now that we have found the cost of the MCZ-$\Delta$ gate, it can be used to find all other costs. 
The cost of the MCX-$\Delta$ gate is stated in \eq{MCX_d_final_costs}, achieved directly from \eq{MCX_d_count} and \eq{MCZ_d_final_count}.
\begin{equation}\label{eq:MCX_d_final_costs}
L_{X\Delta}(k_2,n_2,n_+,n_-) = 
\begin{cases}
    (2k_2+4n_2-3+2(n_\pm)_3,8n_2-8+4(n_\pm)_3,4n_2+2(n_\pm)_3,0) & n_2>0 \\
    ((n_\pm)_3,4(n_\pm)_1,2(n_\pm)_1,0) & n_2=0
\end{cases}
\end{equation}
using the new convenient notations $(n_\pm)_1:=n_+n_-$ and $(n_\pm)_3:=(n_-+n_++(n_\pm)_1)$ satisfying $0\leq(n_\pm)_j\leq j$, and
noting that the costs listed in \eq{MCX_d_small_count} can be written as
\[
L_{X\Delta}(n_+,n_-)=
((n_\pm)_3,4(n_\pm)_1,2(n_\pm)_1,0) \text{, and }
L_{\Pi_V\Delta,0}(n_+,n_-)=
((n_\pm)_3,2(n_\pm)_3,1+(n_\pm)_3,0)
.
\]

Next, the cost of MCX can be achieved using \eq{MCX_small_count}, \eq{MCZ_d_final_count}, \eq{MCX_small_small_count} and \eq{MCX_d_final_costs}, recalling that in the MCX case, $n_1+n_2 =(n+1-n_+-n_-)$, and $k_1+k_2$
equals $(k-3)$ if $n_1,n_2>0$, and $(k-2-n_+)$ if $n_1>0,n_2=0$ as stated in \sec{mcx_sec}. The cost is stated in \eq{MCX_final_count_n1n2k1k2} using the notations $(\overline{n_\pm})_1 = (n_++n_--(n_\pm)_1)$
$(\overline{n_\pm})_3=(2(n_++n_-)-(n_\pm)_1)$,
and $(\overline{n_\pm})_5=(3(n_++n_-)-(n_\pm)_1)$.
\begin{equation}\label{eq:MCX_final_count_n1n2k1k2}
\begin{aligned}
L_{X}&(k_1,k_2,n_1,n_2,n_+,n_-)=\\
&\begin{cases}
    (
    {\color{purple}4k\!+\!8n}\!-\!16\!-\!4(\overline{n_\pm})_1,   
    {\color{purple}16n}\!-\!16\!-\!8(\overline{n_\pm})_1,
    {\color{purple}8n}\!+\!8\!-\!4(\overline{n_\pm})_1,
    0
    )
    &n_1>0,n_2>0\\      
    ({\color{purple}4k\!+\!8n}\!-\!6\!-\!2(\overline{n_\pm})_5\!-\!4n_+,
    {\color{purple}16n}\!-\!8(\overline{n_\pm})_3,
    {\color{purple}8n}\!+\!8\!-\!4(\overline{n_\pm})_3,
    0)    
    &n_1>0,n_2=0\\    
    (4,0,0,0) &n_1=0,n_2=0
\end{cases}
\end{aligned}
\end{equation}

As can be seen, the CNOT gate count of the MCX gate scales as $4k+8n+O(1)$. For the simple case $n_1=n_2=0$, $n=1$ and $k=3$, an upper bound of less than $4$ CNOT gates is not known, and therefore we cannot upper bound all cases in less than $4k+8n-16$. Unfortunately, if $n_1>0,n_2=0$, we get a higher cost, unless we can guarantee that $n_+=1$.  
As mentioned in \sec{mcx_sec}, the dirty ancilla qubit $q_a$ can be freely chosen from the set $Q\setminus \{C,q_t\}$. Clearly, it is always possible to choose $q_a$ so that it neighbors a qubit from $\{C,q_t\}$, which provides $n_-+n_+\in\{1,2\}$. Therefore, we can always guarantee that $(\overline{n_\pm})_1=1$, $(\overline{n_\pm})_3\in\{2,3\}$, and $(\overline{n_\pm})_5\in\{3,5\}$. This brings us closer to the ideal value, yet one more step is required.

To reach the lowest possible constant term, we can set another rule for the choice of $q_a$ - only choose the bottom qubit as the dirty ancilla if it is the only option. 
This simple rule allows us to make a useful guarantee for the case in which the ancilla qubit is at the bottom, that is, in which $n_1>0,n_2=0$ and $n_+=0$. We know that if this case occurs, it means that $n_-=1$, and $n_1=n=k_1=k-2$, as all qubits above $q_a$ must be in the controls-target set. Now, if $n=1$, we get the trivial case of a single CNOT gate connecting two neighboring qubits. Since we deal with circuits over $k\geq3$, the CNOT gate count of $1$ can be upper bounded by $4k+8n-19$.
For $n\geq 2$, if the bottom qubit is swapped with the one above it, the resulting cost is achieved by \eq{MCX_final_count_n1n2k1k2} with $n_2=0, n_1>0$ and $n_-=n_+=1$. In addition, two SWAP gates are required to swap the qubits; however, these only increase the CNOT count by $4$, as a pair of CNOT gates commute with the MCX gate and cancel out. We reach the total CNOT cost of $4k+8n-16$ for this case, which is the largest gate count, considering all possible options. We note that this can be further reduced using methods from \apx{const_reduc}, however, such reductions will not improve the upper bound, unless a CNOT connecting two next-nearest-neighbors ($k=3,n=1$) can be implemented using less than $4$ CNOT gates. 
Therefore, taking the worst case for each gate count separately, we get the following upper bound which we report in \tab{LNN_costs}.
\[
L_X(k,n)\leq (4k+8n-16,16n-16,8n+4)
\]
Now for the parameterized gates, we start with the MC$\Pi_{\bar{x}}$-$\Delta$. We prioratize achieveing the lowest count of arbitrary $R_{\hat{z}}$ gates, so we use the implementations from \eq{MCpi_d_small_count} that require only one such gate (cases \circled{1},\circled{2},\circled{3}), noting that trade-offs are available for other choices. 
For this choice, we can write
\[
L_{\Pi\Delta,\alpha}(n_+,n_-) = (2(n_\pm)_3,4(n_\pm)_3,2(1+(n_\pm)_3),1)
\]
The gate count can be computed for any case using \eq{MCpi_d_count} and \eq{MCZ_d_final_count} to achieve \eq{MCpi_d_final_count}. 
\begin{equation}\label{eq:MCpi_d_final_count}
L_{\Pi\Delta}(k_2,n_2,n_+,n_-) = 
\begin{cases}
    (2k_2+4n_2-3+4(n_\pm)_3,8n_2-8+8(n_\pm)_3,4n_2+2+4(n_\pm)_3,2)
    & n_2>0 \\
    (2(n_\pm)_3,4(n_\pm)_3,2(1+(n_\pm)_3),1) & n_2=0
\end{cases}
\end{equation}

Next, the MCSU2 cost is achieved from \eq{su2_count}, \eq{MCZ_d_final_count} and \eq{MCpi_d_final_count}. The cost of the axis-transforming $\Pi_{M}$ gates can be written as $L_{\Pi_M}=(0,0,2,2)$ when used for our purposes, as these require one $R_{\hat{z}}$ and one $R_{\hat{x}}$ gate,
as shown in [\cite{zindorf_efficient_2024} Lemma 5]. This is achieved by decomposing the $\Pi_M$ gates using the $XZX$ Euler angles (and inverse, since $\Pi_M=\Pi_M^\dagger$) and canceling two $R_{\hat{x}}$ gates that commute with the target of the MC$R_{\hat{x}}$.
 As mentioned in \sec{mcx_sec}, for the MCSU2 case we can write $n_1+n_2 =(n-n_+-n_-)$, and $k_1+k_2$
equals $(k-3)$ if $n_1,n_2>0$, and $(k-2-n_+)$ if $n_1>0,n_2=0$. \eq{su2_final_count} follows.

\begin{equation}\label{eq:su2_final_count} 
\begin{aligned}
L_{SU}&(k_1,k_2,n_1,n_2,n_+,n_-)=\\
&\begin{cases}
(
{\color{purple}4k+8n}-24+8(n_\pm)_1,
{\color{purple}16n}-32+16(n_\pm)_1,
{\color{purple}8n}+4+8(n_\pm)_1,
8
) 
& , n_1>0,n_2>0\\
(
{\color{purple}4k+8n}-14-4((\overline{n_\pm})_1+n_+),
{\color{purple}16n}-16-8(\overline{n_\pm})_1,
{\color{purple}8n}-4(\overline{n_\pm})_1,
6
)
& , n_1>0,n_2=0\\
L_{SU}(n_+,n_-) & , n_1=0,n_2=0
\end{cases}
\end{aligned}
\end{equation}

The case $n_1=0,n_2=0$ is given by \eq{su2_small_count} as follows, noting that since $L_{C\Pi}$ holds the gate count of a C$\Pi_{\bar{x}}$ (without a relative phase), it must be implemented exactly as case \circled{4} in \eq{MCpi_d_small_count}.
\[
L_{SU}(n_+,n_-)=
\begin{cases}
(2,0,8,6) & , n_-+n_+=1\\
(6,8,14,6) & , n_-+n_+=2
\end{cases}
.
\]
We get the following worst case which is reported in \tab{LNN_costs}.
\[
L_{SU}(k,n)\leq(
4k-8n-14,16n-16,8n+12,8
)
\]
In case $n_2=0$, we find that the number of $R_{\hat{z}}$ gates is reduced to 6, without increasing the other overheads. This means that in ATA connectivity, which is agnostic to qubit ordering, an MCSU2 can always be implemented using only 6 $R_{\hat{z}}$ gates without ancilla, with the other costs easily obtained by setting $k=n+1$. The resulting Clifford+T cost scales similarly to the best known decomposition that uses $8$ $R_{\hat{z}}$ gates \cite{zindorf_efficient_2024}, as reported in \tab{LNN_costs}. In LNN, since the reduction from $8$ to $6$ is only achieved when $n_2=0$, in order to guarantee this count, we must assume that the target qubit is located at the bottom of the circuit. This introduces a CNOT overhead of $\sim 2k$ in the worst case, if the target is located elsewhere, as covered in \cite{zindorf_efficient_2024}.

\begin{table}[H]
    \centering
    \resizebox{0.9\linewidth}{!}{
    \begin{tabular}{|c|c|c|c|c|c|c|}
     \hline
    \multicolumn{7}{|c|}{LNN} \\
 \hline
 \multicolumn{3}{|c|}{Gate} & \multicolumn{4}{c|}{Cost}  \\
  \hline
   Type & Ancilla & Source & CNOT & T & H & $R_{\hat{z}}$\\
 \hline
         MCX  & One dirty  & \cellcolor{lightgray!60} \cite{cheng_mapping_2018,chakrabarti_nearest_2007,miller_elementary_2011,li_quantum_2023,tan_multi-strategy_2018}       & \cellcolor{lightgray!60}$O(nk)$    &  \cellcolor{lightgray!60}$O(n)$  &  \cellcolor{lightgray!60}$O(n)$ & \cellcolor{lightgray!60}$0$    \\
        & ($k\geq n\!+\!2$)       & \cellcolor{lightgray!20} \cite{zindorf_efficient_2024} $(n\geq5)$ & \cellcolor{lightgray!20}$8k+14n-34$      &\cellcolor{lightgray!20}$16n-16$& \cellcolor{lightgray!20}$8n+4$ & \cellcolor{lightgray!20}$0$ \\ 
         & & Current $(n\geq1)$ & $4k+8n-16$ & $16n-16$ & $8n+4$ & $0$\\
 \hline
            MCSU2  & None  & \cellcolor{lightgray!60} \cite{cheng_mapping_2018,chakrabarti_nearest_2007,miller_elementary_2011,li_quantum_2023,tan_multi-strategy_2018}       & \cellcolor{lightgray!60}$O(nk)$      & \cellcolor{lightgray!60}$O(n)$ &  \cellcolor{lightgray!60}$O(n)$ & \cellcolor{lightgray!60}$O(1)$ \\
        &  ($k\geq n\!+\!1$)   &  \cellcolor{lightgray!20} \cite{zindorf_efficient_2024} $(n\geq6)$      & \cellcolor{lightgray!20}$10k+12n-50$    &  
 \cellcolor{lightgray!20}$16n-32$  & \cellcolor{lightgray!20}$8n+2$& \cellcolor{lightgray!20}$8$ \\
         & & Current $(n\geq1)$ & $4k+8n-14$ & $16n-16$ & $8n+12$ & 8\\
 \hline
  \hline
         \multicolumn{7}{|c|}{LNN --  $R_{\hat{z}}$/CNOT tradeoffs}\\
 \hline
             MCSU2  & None & Current& $6k+8n+O(1)$ & $16n+O(1)$ & $8n+O(1)$ & 6\\
    \hline
    MCSU2  & One dirty & Current& $6k+10n+O(1)$ & $16n+O(1)$ & $
 8n+O(1)$ & 5\\
    \hline
        \hline
         \multicolumn{7}{|c|}{ATA -- $R_{\hat{z}}$ reduction}\\
 \hline
            MCSU2 & None &\cellcolor{lightgray!20}\cite{zindorf_efficient_2024}      & \cellcolor{lightgray!20}$12n+O(1)$    &  
 \cellcolor{lightgray!20}$16n+O(1)$  & \cellcolor{lightgray!20}$8n+O(1)$& \cellcolor{lightgray!20}$8$ \\
              &  & Current & $12n+O(1)$ & $ 16n+O(1)$ & $  8n+O(1)$ & 6\\
    \hline
    MCSU2 & One dirty &\cellcolor{lightgray!20}\cite{zindorf_efficient_2024}       & \cellcolor{lightgray!20}$12n+O(1)$    &  
 \cellcolor{lightgray!20}$16n+O(1)$  & \cellcolor{lightgray!20}$8n+O(1)$& \cellcolor{lightgray!20}$8$ \\
      &  & Current  & $12n+O(1)$ & $16n+O(1)$ & $ 
 8n+O(1)$ & 5\\
    \hline
    \end{tabular}
    }
    \caption{A summary of our LNN upper bound results compared to state of the art. Reductions of $R_{\hat{z}}$ gates are provided in LNN with the maximal CNOT overhead, and ATA costs are provided as well.}
    \label{tab:LNN_costs}
\end{table}
In case $n_2=0$, and $n_+=n_-=0$, i.e., the target is at the bottom, and the qubit above it is a dirty ancilla, we can reduce the $R_{\hat{z}}$ count to $5$ without increasing any other cost, however, with a larger overhead in case the target and ancilla are placed elsewhere. As we show in \apx{const_reduc}, the CNOT overhead is no larger than $\sim 2k+2n$. When applied in ATA, no overhead is required.

In \apx{MCZd_reduce_depth} we present depth reductions which can be achieved in any case, allowing to upper bound the depth of our circuits as $(4k+6n,8n,6n,0)+O(1)$. The technique also allows to maximize the use of dirty ancilla qubits in order to reduce the gate count. As we show in \apx{useful_ancill}, in case the qubit ordering is favorable, our method provides a cost of $(4k+4n,8n,4n,0)+O(1)$ if $k\geq 2n+O(1)$, or $(12n,24n-8k,12n-4k,0)+O(1)$ for smaller values of $k$.

We have created a software which implements our methods and provides the decomposition for all mentioned gates. It applies the above reductions when applicable, along with ones which provide reductions by a constant term, as we discuss in \apx{const_reduc}. 
The graphs in \fig{avg_cnot_costs} present our CNOT cost upper bound and the average CNOT count of the circuits produced by our software for the implementation of LNN MCX gates. As our lower bound is missing a constant term, we compare our results to simpler gates which scales similarly - the gate count required to implement an MCX gate in ATA connectivity, when only one dirty ancilla qubit in available, and a single long-range CNOT gate implemented over $k+n$ qubits. Moreover, we note that our upper bound is smaller than the cost of a single long-range CNOT gate implemented over $k+2n$ qubits.

\begin{figure}[H]
    \centering
    \begin{subfigure}{.49\linewidth}
    \centering
    \includegraphics[width=0.99\linewidth]{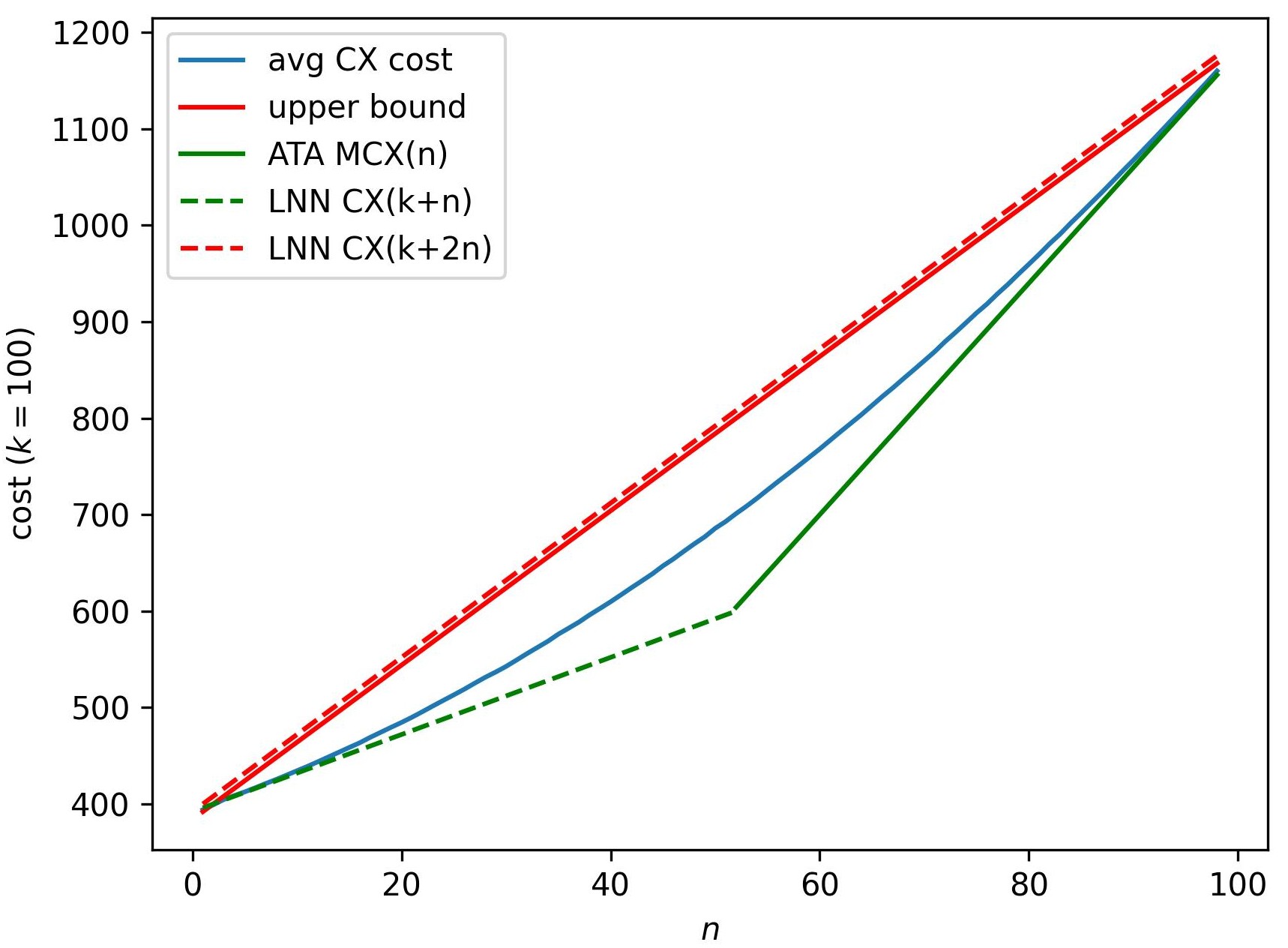}
    \caption{}
    \label{fig:avg_cnot_cost_n}
    \end{subfigure}
    \begin{subfigure}{.49\linewidth}
    \centering
    \includegraphics[width=0.99\linewidth]{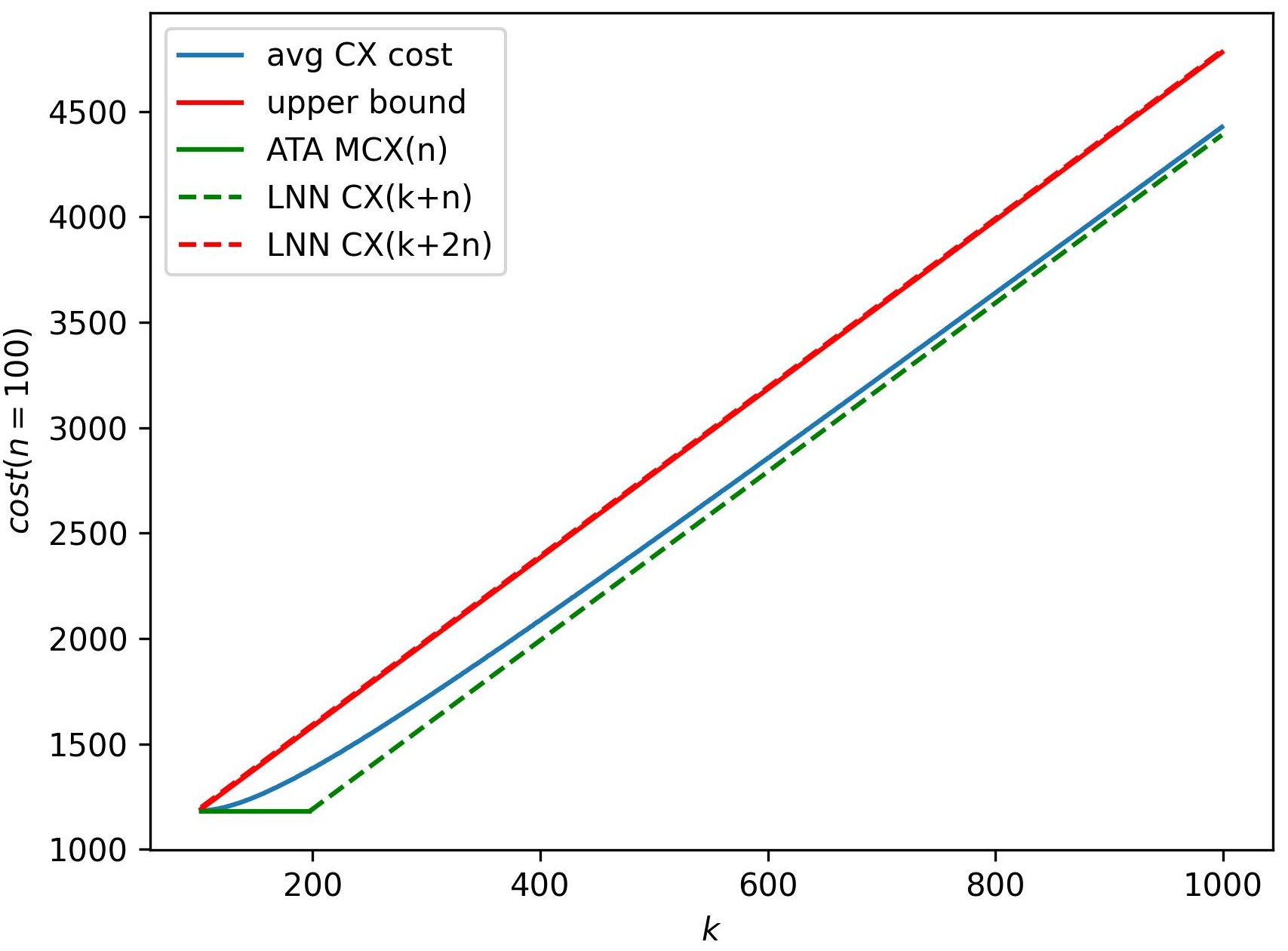}
    \caption{}
    \label{fig:avg_cnot_cost_k}    
    \end{subfigure}
    \caption{Our upper bound and the average CNOT count of the LNN MCX gate with $n$ controls over $k$ qubits produced by our software, averaged over $10^3$ random choices of the locations of the target, dirty ancilla, and control qubits. 
    We plot the ATA MCX cost, along with the cost of a single long-range CNOT over $k+n$ LNN qubits as these scale as our best case, and the cost of such a CNOT over $k+2n$ which scales as our worst case.
    (a) The costs as a function of $n\in[1,k-2]$ for a fixed number of qubits $k=100$ (b) The costs as a function of $k\in[n+2,10^3]$ for a fixed number of control qubits $n=100$.     
    It is noticeable that the average cost approaches the best case when the number of unused qubits $k-n$ approaches $\approx k$, and $\approx 0$.} 
    \label{fig:avg_cnot_costs}    
\end{figure}



\section{Conclusion}
In this paper, we presented new methods for the decomposition of multi-controlled {(MC)} gates in LNN connectivity. Specifically, we focused on implementing the MCX with one dirty ancilla, and the MCSU2 without ancilla. Our methods provide decompositions of these gates for any choice of the control, target and dirty ancilla qubits. The count of T and H gates scale as $\sim16n$ and $\sim8n$, respectively, as in the best known ATA methods \cite{zindorf_efficient_2024,maslov_advantages_2016}, and the CNOT count requires the minimal overhead of $\sim4(k-n)$ which is required even for the smallest MCX gate - a long-range CNOT gate over $k$ qubits \cite{kutin_computation_2007}.
Our CNOT count upper bounds: $4k+8n-14$ and $4k+8n-16$ for MCSU2 and MCX, respectively, provide the minimal results, not only for large $k,n$ instances, but also for small choices, for example -- the smallest version of an MCSU2 (k=2,n=1) requires two CNOT gates as precisely given by our result.

The MCSU2 requires $R_z$ gates with arbitrary angles as well. We guarantee that only $8$ of these are required, the same as the best known ATA implementation \cite{zindorf_efficient_2024}, and in favorable qubit arrangements, only $6$ are required. In less favorable arrangements, a CNOT overhead of up to $\approx2k$ is required to achieve this. As ATA implementations are agnostic to the qubit arrangement, no overhead is required and therefore, we show for the first time that without ancilla qubit, any MCSU2 gate can be implemented in ATA connectivity using $6$ arbitrary $R_z$ gates, while maintaining the state-of-the-art cost of Clifford+T gates ($\sim12n$ CNOT, $\sim16n$ T and $\sim8n$ H gates). 
Moreover, we show in the Appendix that our results can be further improved by using dirty ancilla qubits, if favorably placed,
demonstrating the practical value of our method for arbitrary cases beyond predicting  analytical upper bounds for the worst cases. 

Since many quantum algorithms rely on multi-controlled gates, our results lead to more efficient implementations by significantly reducing the number of basic gates required, which in turn naturally lowers the error rate. For perspective, our MC gate CNOT count is bounded between the cost of a {\em single} long-range CNOT gate over $k+n$ qubits, and such a CNOT over $k+2n$ qubits. 

\section*{Acknowledgements}
This work was supported by the Engineering and Physical Sciences Research Council [grant numbers EP/R513143/1, EP/T517793/1].
\section*{Competing Interests}
BZ and SB declare a relevant patent application:
United Kingdom Patent Application No. 2507156.4


\appendix

\section*{Appendix}

{
\section{Constant reductions}\label{apx:const_reduc}

While the main goal of this paper is to provide the lowest upper bound for the cost of  multi controlled gates in LNN connectivity, we find it important to to try and reduce the cost or depth when possible, instead of using the highest values for every case.

In this section we provide such reductions which can be achieved in many cases. These only reduce the constant terms of the costs reported in \tab{LNN_costs}, and therefore are more significant when the MC gates are small (in terms of $n,k$). Small MC gates have been extensively studied \cite{nemkov_efficient_2023,gwinner_benchmarking_2021,nakanishi_quantum-gate_2021,nakanishi_decompositions_2024,cruz_shallow_2023,zindorf_all_2025,barenco_elementary_1995} due to their vast use for quantum computation, and as these are likely to be used many times in a given circuit, such that a constant reduction of each of those will directly translate to a scaling reduction in the full circuit.

We start with a simple case which we do not believe have been stated previously. 
We provide a cheaper decomposition for the MCSU2 gate with two controls, such that one of the controls neighbors both the target qubit, and the second control qubit. Along with \qc{CCsu2_mid_new}.a, this will cover all cases for $n=2,k=3$.

We simply use \qc{CCsu2_mid_new}.a with two SWAP gates to relocate the target qubit. Two CNOT gates can commute with the CC$R_x$ gate and cancel out, providing the middle implementation in \qc{CCsu2_top_new}. The implementation on the right hand side is simply achieved by merging a CNOT gate and a CZ to one controlled $iY$ gate.

\[
\scalebox{0.7}{
\Qcircuit @C=0.5em @R=0.7em @!R { 
    \nghost{} & \lstick{} & \qw & \gate{\mathrm{R_{\hat{x}}(\lambda)}} & \qw\\
     \nghost{} & \lstick{} & \qw & \ctrl{-1} & \qw \\
     \nghost{} & \lstick{} & \qw & \ctrl{-1} & \qw \\
}
\hspace{5mm}\raisebox{-13mm}{=}\hspace{0mm}
\Qcircuit @C=0.5em @R=0.7em @!R { 
 \nghost{} & \lstick{} & \ctrl{1} & \targ  & \ctrl{1} & \targ & \ctrl{1} & \qw \\
    \nghost{} & \lstick{} & \targ & \ctrl{-1}  & \gate{\mathrm{R_{\hat{x}}(\lambda)}} & \ctrl{-1}  & \targ & \qw\\
     \nghost{} & \lstick{} & \qw & \qw & \ctrl{-1} & \qw & \qw & \qw \\
}
\hspace{5mm}\raisebox{-13mm}{=}\hspace{0mm}
\Qcircuit @C=0.5em @R=0.55em @!R { 
    \nghost{} & \lstick{} & \ctrl{1}  & \gate{\mathrm{iY}} &  \qw  &\qw & \ctrl{1}   &  \qw & \qw  & \targ & \ctrl{1}  & \qw   \\
    \nghost{} & \lstick{} & \targ &  \ctrl{-1} &  \gate{\mathrm{\Pi^{\lambda\!/\!4}_{\bar{x}}}} & \multigate{1}{\mathrm{\Delta_2}}  & \gate{\mathrm{Z}} &  \multigate{1}{\mathrm{\Delta_2^\dagger}} & \gate{\mathrm{\Pi^{\lambda\!/\!4}_{\bar{x}}}}  & \ctrl{-1} & \targ & \qw  \\
    \nghost{} & \lstick{} & \qw & \qw  & \ctrl{-1} & \ghost{\mathrm{\Delta_2}}  & \qw  &  \ghost{\mathrm{\Delta_2^\dagger}}  & \ctrl{-1}   & \qw & \qw & \qw
    \gategroup{2}{5}{3}{6}{0.3em}{-} 
    \gategroup{2}{8}{3}{9}{0.3em}{-} 
}
}
\qcref{CCsu2_top_new}
\]

Any MCSU2 gate with two controls can therefore be implemented over three LNN qubits using no more than $7$ CNOT gates and $8$ $R_{\hat{z}}$ rotations, or $9$ CNOT gates and $6$ $R_{\hat{z}}$ rotations, depending on the chosen implementation of the $C\Pi$-$\Delta$ gates as \circled{2} or \circled{4} in \tab{ccpix_table}.

For the chosen axes $\hat{z}$, and $\lambda=-\pi$, this provides an implementation of the CCiZ gate using $7$ CNOT gates, and $4$ T gates, choosing \circled{2}.  

This gate can be used to replace the middle part of the MCZ-$\Delta$ V-chain structure as follows, in case the top two control qubits are neighboring.

\[
\scalebox{0.75}{
 \Qcircuit @C=1.0em @R=0.38em @!R { 
	 	\nghost{} & \lstick{}  & \ctrl{1}  & \multigate{2}{\mathrm{{\Delta^4}'}} & \qw \\
	 	\nghost{} & \lstick{}  & \ctrl{2} & \ghost{\mathrm{{\Delta^4}'}} & \qw \\
	 	\nghost{} & \lstick{}  & \qw  & \ghost{\mathrm{{\Delta^4}'}} & \qw \\
	 	\nghost{} & \lstick{} & \control\qw  & \qw & \qw \\
 }
     \hspace{5mm}\raisebox{-7mm}{=}\hspace{0mm}
 \Qcircuit @C=0.5em @R=0.4em @!R { 
	 	\nghost{} & \lstick{} & \qw & \ctrl{2} & \multigate{1}{\mathrm{\Delta^3}} & \qw & \qw \\
	 	\nghost{} & \lstick{} &\ctrl{1} & \qw & \ghost{\mathrm{\Delta_3}}   &\ctrl{1} & \qw\\
	 	\nghost{} & \lstick{} & \targ &  \control\qw & \qw & \targ & \qw \\
        \nghost{} & \lstick{} &\ctrl{-1} & \qw & \qw &\ctrl{-1} & \qw 
        \gategroup{1}{4}{3}{5}{0.3em}{--}
 }
    \hspace{5mm}\raisebox{-7mm}{$\Rightarrow$}\hspace{0mm}
 \Qcircuit @C=1.0em @R=0.38em @!R { 
	 	\nghost{} & \lstick{}  & \ctrl{1}  & \multigate{2}{\mathrm{\Delta^4}} & \qw \\
	 	\nghost{} & \lstick{}  & \ctrl{2} & \ghost{\mathrm{\Delta_4}} & \qw \\
	 	\nghost{} & \lstick{}  & \qw  & \ghost{\mathrm{\Delta_4}} & \qw \\
	 	\nghost{} & \lstick{} & \control\qw  & \qw & \qw \\
 }
     \hspace{5mm}\raisebox{-7mm}{=}\hspace{0mm}
\Qcircuit @C=.5em @R=0.0em @!R { 
	 	\nghost{} & \lstick{} & \qw & \ctrl{1} & \qw & \qw \\
	 	\nghost{} & \lstick{} & \qw & \ctrl{1} & \qw & \qw \\
	 	\nghost{} & \lstick{} & \targ & \gate{\mathrm{iZ}} & \targ & \qw \\
	 	\nghost{} & \lstick{}  & \ctrl{-1} & \qw &  \ctrl{-1} & \qw \\
 }
  \qcref{cz_cciz_d}
}
\]

For example, this can be used in the following case.

\[
\scalebox{0.75}{
\Qcircuit @C=1.0em @R=0.2em @!R {
	 	\nghost{} & \lstick{} & \ctrl{1} & \multigate{10}{\mathrm{\Delta}} & \qw\\
	 	\nghost{} & \lstick{} & \ctrl{1} & \ghost{\mathrm{\Delta}} & \qw\\
	 	\nghost{} & \lstick{} & \ctrl{3} & \ghost{\mathrm{\Delta}} & \qw\\
	 	\nghost{} & \lstick{} & \qw & \ghost{\mathrm{\Delta}} & \qw\\
	 	\nghost{} & \lstick{} & \qw & \ghost{\mathrm{\Delta}} & \qw\\
	 	\nghost{} & \lstick{} & \ctrl{1} & \ghost{\mathrm{\Delta}} & \qw\\
	 	\nghost{} & \lstick{} & \ctrl{2} & \ghost{\mathrm{\Delta}} & \qw\\
	 	\nghost{} & \lstick{} & \qw & \ghost{\mathrm{\Delta}} & \qw\\
	 	\nghost{} & \lstick{} & \ctrl{1} & \ghost{\mathrm{\Delta}} & \qw\\
	 	\nghost{} & \lstick{} & \ctrl{2} & \ghost{\mathrm{\Delta}} & \qw\\
	   \nghost{} & \lstick{} & \qw & \ghost{\mathrm{\Delta}} & \qw\\
        \nghost{} & \lstick{} & \control\qw & \qw & \qw\\
 }
 \hspace{5mm}\raisebox{-22mm}{=}\hspace{0mm}
 \Qcircuit @C=0.1em @R=0.2em @!R {
	 	\nghost{} & \lstick{} & \qw  & \qw & \qw & \qw & \qw & \qw & \qw & \qw & \qw & \ctrl{2} & \multigate{1}{\mathrm{\Delta^3}} & \qw & \qw & \qw & \qw & \qw & \qw & \qw & \qw & \qw & \qw \\
	 	\nghost{} & \lstick{} & \qw  & \qw & \qw & \qw & \qw & \qw & \qw & \qw & \ctrl{1} & \qw & \ghost{\mathrm{\Delta_3}} & \ctrl{1} & \qw & \qw & \qw & \qw & \qw & \qw & \qw & \qw & \qw \\
	 	\nghost{} & \lstick{} & \qw  & \qw & \qw & \qw & \qw & \qw & \qw & \ctrl{1} & \targ & \control\qw & \qw & \targ & \ctrl{1} & \qw & \qw & \qw & \qw & \qw & \qw & \qw & \qw \\
	 	\nghost{} & \lstick{} & \qw  & \qw & \qw & \qw & \qw & \qw & \qw & \targ & \ctrl{-1} & \qw & \qw & \ctrl{-1} & \targ & \qw & \qw & \qw & \qw & \qw & \qw & \qw & \qw \\
	 	\nghost{} & \lstick{} & \qw  & \qw & \qw & \qw & \qw & \qw & \targ & \ctrl{-1} & \qw & \qw & \qw & \qw & \ctrl{-1} & \targ & \qw & \qw & \qw & \qw & \qw & \qw & \qw \\
	 	\nghost{} & \lstick{} & \qw  & \qw & \qw & \qw & \ctrl{1} & \targ & \ctrl{-1} & \qw & \qw & \qw & \qw & \qw & \qw & \ctrl{-1} & \targ & \ctrl{1} & \qw & \qw & \qw & \qw & \qw \\
	 	\nghost{} & \lstick{} & \qw  & \qw & \qw & \ctrl{1} & \targ & \ctrl{-1} & \qw & \qw & \qw & \qw & \qw & \qw & \qw & \qw & \ctrl{-1} & \targ & \ctrl{1} & \qw & \qw & \qw & \qw\\
	 	\nghost{} & \lstick{} & \qw  & \qw & \qw & \targ & \ctrl{-1} & \qw & \qw & \qw & \qw & \qw & \qw & \qw & \qw & \qw & \qw & \ctrl{-1} & \targ & \qw & \qw & \qw & \qw \\
	 	\nghost{} & \lstick{} & \qw  & \ctrl{1} & \targ & \ctrl{-1} & \qw & \qw & \qw & \qw & \qw & \qw & \qw & \qw & \qw & \qw & \qw & \qw & \ctrl{-1} & \targ & \ctrl{1} & \qw & \qw \\
	 	\nghost{} & \lstick{} & \ctrl{1} & \targ & \ctrl{-1} & \qw & \qw & \qw & \qw & \qw & \qw & \qw & \qw & \qw & \qw & \qw & \qw & \qw & \qw & \ctrl{-1} & \targ & \ctrl{1} & \qw \\
	 	\nghost{} & \lstick{} & \targ  & \ctrl{-1} & \qw & \qw & \qw & \qw & \qw & \qw & \qw & \qw & \qw & \qw & \qw & \qw & \qw & \qw & \qw & \qw & \ctrl{-1} & \targ & \qw \\
        \nghost{} & \lstick{} & \ctrl{-1} & \qw   & \qw & \qw & \qw & \qw & \qw & \qw & \qw & \qw & \qw & \qw & \qw & \qw & \qw & \qw & \qw & \qw  & \qw & \ctrl{-1} & \qw 
        \gategroup{1}{12}{3}{13}{0.3em}{--} 
 }
 \hspace{5mm}\raisebox{-22mm}{=}\hspace{0mm}
 \Qcircuit @C=0.1em @R=0.2em @!R {
	 	\nghost{} & \lstick{} & \qw & \qw & \qw & \qw & \qw & \qw & \qw & \qw & \qw & \ctrl{1} & \multigate{2}{\mathrm{{\Delta^4}'}} & \qw & \qw & \qw & \qw & \qw & \qw & \qw & \qw & \qw & \qw \\
	 	\nghost{} & \lstick{} & \qw & \qw & \qw & \qw & \qw & \qw & \qw & \qw & \qw & \ctrl{2} & \ghost{\mathrm{{\Delta^4}'}} & \qw & \qw & \qw & \qw & \qw & \qw & \qw & \qw & \qw & \qw \\
	 	\nghost{} & \lstick{} & \qw & \qw & \qw & \qw & \qw & \qw & \qw & \ctrl{1} & \qw & \qw & \ghost{\mathrm{{\Delta^4}'}} & \qw & \ctrl{1} & \qw & \qw & \qw & \qw & \qw & \qw & \qw & \qw\\
	 	\nghost{} & \lstick{} & \qw & \qw & \qw & \qw & \qw & \qw & \qw & \targ & \qw & \control\qw & \qw & \qw & \targ & \qw & \qw & \qw & \qw & \qw & \qw & \qw & \qw \\
	 	\nghost{} & \lstick{} & \qw & \qw & \qw & \qw & \qw & \qw & \targ & \ctrl{-1} & \qw & \qw & \qw & \qw & \ctrl{-1} & \targ & \qw & \qw & \qw & \qw & \qw & \qw & \qw\\
	 	\nghost{} & \lstick{} & \qw & \qw & \qw & \qw & \ctrl{1} & \targ & \ctrl{-1} & \qw & \qw & \qw & \qw & \qw & \qw & \ctrl{-1} & \targ & \ctrl{1} & \qw & \qw & \qw & \qw & \qw \\
	 	\nghost{} & \lstick{} & \qw & \qw & \qw & \ctrl{1} & \targ & \ctrl{-1} & \qw & \qw & \qw & \qw & \qw & \qw & \qw & \qw & \ctrl{-1} & \targ & \ctrl{1} & \qw & \qw & \qw & \qw\\
	 	\nghost{} & \lstick{} & \qw & \qw & \qw & \targ & \ctrl{-1} & \qw & \qw & \qw & \qw & \qw & \qw & \qw & \qw & \qw & \qw & \ctrl{-1} & \targ & \qw & \qw & \qw & \qw \\
	 	\nghost{} & \lstick{} & \qw & \ctrl{1} & \targ & \ctrl{-1} & \qw & \qw & \qw & \qw & \qw & \qw & \qw & \qw & \qw & \qw & \qw & \qw & \ctrl{-1} & \targ & \ctrl{1} & \qw & \qw \\
	 	\nghost{} & \lstick{} & \ctrl{1} & \targ & \ctrl{-1} & \qw & \qw & \qw & \qw & \qw & \qw & \qw & \qw & \qw & \qw & \qw & \qw & \qw & \qw & \ctrl{-1} & \targ & \ctrl{1} & \qw \\
	 	\nghost{} & \lstick{} & \targ & \ctrl{-1} & \qw & \qw & \qw & \qw & \qw & \qw & \qw & \qw & \qw & \qw & \qw & \qw & \qw & \qw & \qw & \qw & \ctrl{-1} & \targ & \qw \\ 
        \nghost{} & \lstick{} & \ctrl{-1} & \qw  & \qw & \qw & \qw & \qw & \qw & \qw & \qw & \qw & \qw & \qw & \qw & \qw & \qw & \qw & \qw & \qw & \qw & \ctrl{-1}  & \qw
        \gategroup{1}{12}{4}{13}{0.3em}{--} 
 }
 }
 \qcref{V_chain_Z_examples}
\]

In case the MCZ-$\Delta$ is used for the construction of MCX or MCSU2, an additional reduction is achieved by implementing the CCiZ gate up to an additional SWAP gate on the control qubits, as the SWAP gates can cancel out, as shown in \cite{zindorf_efficient_2024}.

\[
\scalebox{0.75}{
\Qcircuit @C=0.5em @R=0.3em @!R { 
	 	\nghost{} & \lstick{} & \ctrl{1} & \qswap & \qw \\
	 	\nghost{} & \lstick{} & \ctrl{1} & \qswap \qwx[-1] & \qw \\
	 	\nghost{} & \lstick{} & \gate{\mathrm{iZ}} & \qw & \qw \\
 }
 \hspace{5mm}\raisebox{-6mm}{=}\hspace{0mm}
\Qcircuit @C=0.5em @R=0.3em @!R { 
	 	\nghost{} & \lstick{} & \qswap & \ctrl{1}  & \qw \\
	 	\nghost{} & \lstick{}  & \qswap \qwx[-1] & \ctrl{1} & \qw \\
	 	\nghost{} & \lstick{} & \qw & \gate{\mathrm{iZ}}  & \qw \\
 }
\hspace{5mm}\raisebox{-6mm}{=}\hspace{0mm}
\Qcircuit @C=0.5em @R=0.2em @!R { 
	 	\nghost{} & \lstick{} & \qw & \targ & \qw & \qw & \ctrl{1} & \gate{\mathrm{T^\dagger}} & \targ & \qw & \qw  \\
	 	\nghost{} & \lstick{} & \targ & \ctrl{-1} & \gate{\mathrm{T}} & \targ & \targ & \gate{\mathrm{T}} & \ctrl{-1} & \targ & \qw \\
	 	\nghost{} & \lstick{} & \ctrl{-1} & \qw & \gate{\mathrm{T^\dagger}}  & \ctrl{-1} & \qw & \qw & \qw & \ctrl{-1} & \qw \\
 }
 }
 \qcref{rz_toffoli_LNN}
\]

In case the CCiZ is used to construct a stand-alone MCZ-$\Delta$ gate, or in any case in which the SWAPs do not cancel out, we can implement this gate using \qc{CCsu2_top_new}. However, the depth of the implementation can be reduced as follows. 
The CCiZ gate can be implemented using 4 T gates in depth 2, and 7 CNOT gates as \qc{cciz_deco}. The leftmost CNOT gate is executed in parallel with the CNOT to its left when this decomposition is applied to \qc{cz_cciz_d}.

\[
\scalebox{0.75}{
\Qcircuit @C=1.0em @R=0.2em @!R { 
	 	\nghost{} & \lstick{} & \ctrl{1} & \qw \\
	 	\nghost{} & \lstick{} & \ctrl{1} & \qw \\
	 	\nghost{} & \lstick{} & \gate{\mathrm{iZ}} & \qw \\
 }
 \hspace{5mm}\raisebox{-6mm}{=}\hspace{0mm}
 \Qcircuit @C=0.7em @R=0.2em @!R { 
	 	\nghost{} & \lstick{} & \ctrl{1} & \qswap & \ctrl{1} & \qswap & \qw \\
	 	\nghost{} & \lstick{} & \control\qw & \qswap \qwx[-1] & \ctrl{1} & \qswap \qwx[-1] & \qw \\
	 	\nghost{} & \lstick{} & \qw & \qw & \gate{\mathrm{-iZ}} & \qw & \qw 
   \gategroup{1}{5}{3}{6}{0.5em}{--} 
 }
 \hspace{5mm}\raisebox{-6mm}{=}\hspace{0mm}
\Qcircuit @C=0.5em @R=0.2em @!R { 
	 	\nghost{} & \lstick{} & \qw & \targ & \qw  & \gate{\mathrm{S}} & \ctrl{1} & \gate{\mathrm{H}} & \ctrl{1} & \qw  & \gate{\mathrm{T}} & \targ & \qw & \qw \\
	 	\nghost{} & \lstick{} & \gate{\mathrm{H}} & \ctrl{-1} & \targ  & \gate{\mathrm{S^\dagger}} & \targ & \gate{\mathrm{T}} & \targ & \targ  & \gate{\mathrm{T^\dagger}} & \ctrl{-1} & \targ & \qw\\
	 	\nghost{} & \lstick{} & \qw & \qw & \ctrl{-1} & \qw  & \qw & \gate{\mathrm{T}} & \qw & \ctrl{-1} & \qw & \qw  & \ctrl{-1} & \qw \\
 }
 }
 \qcref{cciz_deco}
\]
\qc{cciz_deco} was achieved by using the inverted version of \qc{rz_toffoli_LNN} from \cite{zindorf_efficient_2024} to decompose the boxed gates, and applying the following identities:

\[
\scalebox{0.75}{
\Qcircuit @C=1.0em @R=0.6em @!R { 
	 	\nghost{} & \lstick{} & \ctrl{1} & \qswap & \qw \\
	 	\nghost{} & \lstick{} & \control\qw & \qswap \qwx[-1] & \qw\\
 }
 \hspace{5mm}\raisebox{-3.5mm}{=}\hspace{0mm}
\Qcircuit @C=0,5em @R=0.2em @!R { 
	 	\nghost{} & \lstick{} & \qw & \targ & \ctrl{1} & \gate{\mathrm{H}} & \qw \\
	 	\nghost{} & \lstick{} & \gate{\mathrm{H}} & \ctrl{-1} & \targ & \qw & \qw \\
 }
  \hspace{7mm}\raisebox{-6mm}{\text{  and }}\hspace{2mm}
  \Qcircuit @C=0.5em @R=0.2em @!R { 
	 	\nghost{} & \lstick{} & \ctrl{1} & \gate{\mathrm{H}} & \targ & \qw & \qw \\
	 	\nghost{} & \lstick{} & \targ & \qw & \ctrl{-1} & \gate{\mathrm{T^\dagger}} & \qw \\
 }
 \hspace{5mm}\raisebox{-3.5mm}{=}\hspace{0mm}
\Qcircuit @C=0.5em @R=0.2em @!R { 
	 	\nghost{} & \lstick{} & \gate{\mathrm{S}} & \ctrl{1} & \gate{\mathrm{H}} & \qw \\
	 	\nghost{} & \lstick{} & \gate{\mathrm{S^\dagger}} & \targ & \gate{\mathrm{T}} & \qw \\
 }
 }
\]

}

As mentioned, the CCiZ gate with or without a SWAP gate can only be used if the two control qubits which are farthest away from the target of the MCZ-$\Delta$ gate are neighboring. 
We want to get the resulting cost reductions whenever is possible. In case these two qubits are not neighbouring, the top control qubit can be swapped with the qubit below it, so that it is closer to the nearest control, however, this comes at a cost of $4$ CNOT gates required to implement a pair of SWAP gates (after canceling two CNOTs). When this is applied for the implementation of MCX or MCSU2, we also get a reduction of $4$ CNOT gates, and therefore, this swap is practically for free. The reason is simple - the CNOT count of both these gates scale as $4k+O(n)$, and the mentioned swap reduces the value of $k$ by $1$, thus requiring $4$ fewer CNOT gates for the implementation.

Clearly, This can be applied for the bottom qubit as well, noting that our MCX and MCSU2 implementations include MCZ-$\Delta$ gates which are implemented upside-down as well. Finally, for the MCSU2 case, one qubit out  of the top/bottom pairs might be the target qubit. In this case it is still beneficial to apply this procedure, since it will guarantee that $n_++n_-\geq 1$, which reduces the gate count as well.

The following provides the orientation of the CNOT gates for both a control and a target being swapped.

\[
\scalebox{0.7}{
\Qcircuit @C=1.0em @R=0.em @!R { 
	 	\nghost{} & \lstick{} & \ctrl{2} & \qw \\
	 	\nghost{} & \lstick{} & \qw  & \qw\\
	 	\nghost{} & \lstick{} & \ar @{.} [1,0] & \\
        \nghost{} & \lstick{} &  & \\
	 	\nghost{} & \lstick{} & \qw  & \qw\\
	 	\nghost{} & \lstick{} & \gate{\mathrm{R_{\hat{x}}(\lambda)}}\ar @{-} [-2,0] & \qw \\
 }
 \hspace{5mm}\raisebox{-9mm}{=}\hspace{0mm}
 \Qcircuit @C=1.0em @R=0.em @!R {
	 	\nghost{} & \lstick{} & \qswap & \qw & \qswap & \qw \\
	 	\nghost{} & \lstick{} & \qswap \qwx[-1] & \ctrl{1} & \qswap \qwx[-1] & \qw \\
	 	\nghost{} & \lstick{} &  & \ar @{.} [1,0] &  & \\
	 	\nghost{} & \lstick{} &  &  & & \\
	 	\nghost{} & \lstick{} & \qswap & \gate{\mathrm{R_{\hat{x}}(\lambda)}} \ar @{-} [-1,0] & \qswap &  \qw\\
	 	\nghost{} & \lstick{} & \qswap \qwx[-1] & \qw & \qswap \qwx[-1] & \qw \\
 }
 \hspace{5mm}\raisebox{-9mm}{=}\hspace{0mm}
 \Qcircuit @C=1.0em @R=0.em @!R { 
	 	\nghost{} & \lstick{} & \targ & \ctrl{1} & \qw & \ctrl{1} & \targ & \qw\\
	 	\nghost{} & \lstick{} & \ctrl{-1} & \targ & \ctrl{1} & \targ & \ctrl{-1} & \qw \\
	 	\nghost{} & \lstick{} &  &  & \ar @{.} [1,0] &  &  &  \\
	 	\nghost{} & \lstick{} &  &  &  &  &  &  \\
	 	\nghost{} & \lstick{} & \targ & \ctrl{1} & \gate{\mathrm{R_{\hat{x}}(\lambda)}} \ar @{-} [-1,0] & \ctrl{1} & \targ & \qw \\
	 	\nghost{} & \lstick{} & \ctrl{-1} & \targ & \qw & \targ & \ctrl{-1} &  \qw\\
 }}
\]

This process can be repeated until the top two qubits are neighbors, as well as the bottom two qubits. In addition to the gate count reductions, as can be seen, it allows to apply CNOT gates in parallel, thus reducing the depth. Moreover, if this is applied many times, a chain of partial SWAP gates is created, and the depth can be reduced even more, as was shown in \cite{zindorf_efficient_2024}. 

For a specific example -- for a MCX gate with one control (long range CNOT), the procedure will only stop when $k=3$, and the ancilla qubit is between the control and the target. This next-nearest-neighbor CNOT is then applied using $4$ CNOT gates (instead of $5$ if implemented using two more partial SWAPs). In this case we get a decomposition which resembles the long-range CNOT implementation described in \cite{kutin_computation_2007}, with a reduction (achievable for $k\geq 3$) of $1$ for both the CNOT depth and cost.

Finally, we show how to implement any MCSU2 gate using the optimal number of $5$ $R_{\hat{z}}$ gates. This is achieved without any increase to the other costs, in case $n_2=0$, and $n_+=n_-=0$, i.e., the target is at the bottom, and the qubit above it is a dirty ancilla.
The reduction is simply achieved since the $\Delta$ gates of our structure do not apply on the target qubit in this case. While this can be shown using our framework, it can be realized from the known identity provided in \cite{barenco_elementary_1995} which provides a singly-controlled $SU(2)$ gate using two CNOT gates and $5$ rotations about the $\hat{x}/\hat{y}/\hat{z}$ axes (all of which can be converted to $R_{\hat{z}}$ using H or S gates).
The decomposition in this case is as follows, using notation from \cite{barenco_elementary_1995}.

\[
\scalebox{0.7}{
\Qcircuit @C=0.5em @R=0.7em @!R { 
    \nghost{} & \lstick{} & \qwl & \ctrlpd{2}{C_1} & \qw \\
    \nghost{} & \lstick{} & \qw & \qw & \qw \\
    \nghost{} & \lstick{} & \qw & \gate{\mathrm{W}} & \qw\\
}
\hspace{5mm}\raisebox{-7mm}{=}\hspace{0mm}
\Qcircuit @C=0.5em @R=0.7em @!R { 
    \nghost{} & \lstick{} & \qwl & \qw & \ctrlpd{2}{C_1\;\;\;\;\;\;} & \multigate{1} {\mathrm{\Delta_1}} & \qw  & \qw  & \multigate{1} {\mathrm{\Delta_1^{\dagger}}} & \qw & \ctrlpd{2}{C_1}  & \qw  & \qw & \qw & \qw \\ 
    \nghost{} & \lstick{} & \qw & \qw & \qw & \ghost{\mathrm{\Delta_1}} & \qw & \qw    & \ghost{\mathrm{\Delta_1^{\dagger}}} & \qw & \qw   & \qw & \qw   & \qw  & \qw \\
    \nghost{} & \lstick{} & \qw & \gate{\mathrm{A}}  & \targ & \qw & \qw & \gate{\mathrm{B}}    & \qw & \qw & \targ &   \qw & \gate{\mathrm{C}} & \qw & \qw  
    \gategroup{1}{9}{3}{11}{0.3em}{--} 
    \gategroup{1}{5}{3}{6}{0.3em}{--} 
}
}
\qcref{mcsu2_barenco}
\]

The used MCX-$\Delta$ gates in this case are those achieved for $n_+=n_-=0$ in \eq{MCX_d_final_costs}, as an MCZ-$\Delta$ with two H gates. In ATA this structure can always be used, providing one dirty ancilla is available, at the costs reported in \tab{LNN_costs}. 
In LNN, unfortunately, relocating two qubits comes at a cost of $\sim 4k$, however, since one of the qubits is a dirty ancilla, the cost can be reduced to $\sim 2k+2n$ since the dirty ancilla can be freely chosen out of all unused qubits. To show this we can simply mark $d_1$ and $d_2$ as the number of qubits below the target and below the lowest non-control qubit respectively.
The number of swaps needed to locate both at the bottom is therefore $D\approx d_1+d_2$.
Since there are $\approx d_2$ control qubits below the lowest non-control, the number of controls above the highest non-control qubit cannot be larger than $\approx n-d_2$, and thus the number of qubits above that qubit is no larger than $d'_2\approx n-d_2$.
The number of qubits above the target is $d'_1\approx k-d_1$.
Therefore, the number of swaps required to get both qubits to the top is $D'\approx d'_1+d'_2=k+n-D$.
We can always choose the orientation (top/bottom) to minimize the number of swaps to $min(D',D)$. This value is maximized for $D'=D$,
and so, the worst case is $D\approx\frac{k+n}{2}$. The number of required SWAP gates is $2D$, each costs two CNOT gates after cancellations.

\section{Scaling depth reductions}\label{apx:MCZd_reduce_depth}

Given many possible implementations of a circuit, all with the same gate count, one should generally choose to implement the one with the smallest depth.
Therefore, while the primary goal of this paper is to minimize the cost of MC gates, we wish to address the circuit depth as well.  We provide depth reductions which can be applied to our structure without increasing the gate count. We focus on reducing the depth of the MCZ-$\Delta$ gates, as it is used for the construction of all other gates which we discussed, and is the only scaling part of these structures.

First, it is clear that a pair of CNOT gates from \qc{mcz_d_new}, in case $n_-=0$, can be replaced with a pair of SWAP gates, only changing the $\Delta^{j}$ gate to be equal to $\Delta^{j-1}$ as follows.

\[
\scalebox{0.75}{
\Qcircuit @C=0.5em @R=0.2em @!R { 
    \nghost{{Q}^{j-1}_1 :  } & \lstick{{Q}^{j-1}_1 :  } & \qwl & \ctrlp{3}{C^{j-1}_1} & \multigate{2} {\mathrm{\Delta^{j}}} & \qw & \qw\\
    \nghost{{q}_{j-2} :  } & \lstick{{q}_{j-2} :  } & \qw & \qw& \ghost {\mathrm{\Delta^{j}}} & \qw\\
    \nghost{{q_{j-1}} :  } & \lstick{{q_{j-1}} :  } & \qw & \qw & \ghost {\mathrm{\Delta^{j}}} & \qw\\
    \nghost{{q}_{j} :  } & \lstick{{q}_{j} :  } & \qw & \gate{\mathrm{Z}} & \qw & \qw\\
}
\hspace{5mm}\raisebox{-9mm}{=}\hspace{0mm}
\Qcircuit @C=0.5em @R=0.2em @!R { 
    \nghost{{Q}^{j-1}_1 :  } & \lstick{{Q}^{j-1}_1 :  } & \qwl & \qw  & \qw  & \qw  & \ctrlp{2}{C^{j-1}_1\;\;\;\;\;} & \multigate{1} {\mathrm{\Delta^{j-1}}}   & \qw   & \qw & \qw  & \qw \\ 
    \nghost{{q}_{j-2} :  } & \lstick{{q}_{j-2} :  } & \qw & \qw  & \qw & \qw  & \qw & \ghost{\mathrm{\Delta^{j-1}}}   & \qw & \qw  & \qw  & \qw \\
    \nghost{{q_{j-1}} :  } & \lstick{{q_{j-1}} :  } & \qw & \qw  & \qw & \targ   & \gate{\mathrm{Z}} & \qw  & \qw & \targ & \qw & \qw  \\
    \nghost{{q}_{j} :  } & \lstick{{q}_{j} :  } & \qw & \qw  & \qw & \ctrl{-1}   & \qw & \qw & \qw  & \ctrl{-1}   & \qw & \qw 
    \gategroup{2}{6}{4}{6}{1em}{-} 
    \gategroup{2}{10}{4}{10}{1em}{-} 
    \gategroup{1}{7}{3}{8}{0.3em}{--} 
}
  \hspace{5mm}\raisebox{-9mm}{$\Rightarrow$}\hspace{0mm}
  \Qcircuit @C=0.5em @R=0.2em @!R { 
    \nghost{{Q}^{j-1}_1 :  } & \lstick{{Q}^{j-1}_1 :  } & \qwl & \qw  & \qw  & \qw & \qw  & \ctrlp{2}{C^{j-1}_1\;\;\;\;\;} & \multigate{1} {\mathrm{\Delta^{j-1}}}    & \qw & \qw & \qw  & \qw \\ 
    \nghost{{q}_{j-2} :  } & \lstick{{q}_{j-2} :  } & \qw & \qw & \qw  & \qw & \qw  & \qw & \ghost{\mathrm{\Delta^{j-1}}}       & \qw & \qw  & \qw  & \qw \\
    \nghost{{q_{j-1}} :  } & \lstick{{q_{j-1}} :  } & \qw & \qw  & \qw &\qswap & \qw & \gate{\mathrm{Z}} & \qw & \qw & \qswap & \qw & \qw  \\
    \nghost{{q}_{j} :  } & \lstick{{q}_{j} :  } & \qw & \qw  & \qw & \qswap \qwx[-1] & \qw & \qw & \qw & \qw  & \qswap \qwx[-1]   & \qw & \qw 
    \gategroup{2}{6}{4}{6}{1em}{-} 
    \gategroup{2}{11}{4}{11}{1em}{-} 
    \gategroup{1}{8}{3}{9}{0.3em}{--} 
}
 }
 \qcref{add_C_mcz_swap}
\]

The same holds for the pair of CNOT gates used for $\{Z\}^3_{C,Q}$ and $\{Z\}^4_{C,Q}$, as follows.

\[
\scalebox{0.75}{
\Qcircuit @C=0.5em @R=0em @!R { 
	 	\nghost{} & \lstick{} & \gate{\mathrm{H}} & \targ & \gate{\mathrm{H}}& \qw \\
	 	\nghost{} & \lstick{} & \targ & \ctrl{-1} & \targ & \qw \\
	 	\nghost{} & \lstick{}  & \ctrl{-1} & \qw &  \ctrl{-1} & \qw \\
 }
  \hspace{5mm}\raisebox{-5mm}{$\Rightarrow$}\hspace{0mm}
 \Qcircuit @C=0.5em @R=0em @!R { 
	 	\nghost{} & \lstick{} & \gate{\mathrm{H}} & \targ & \gate{\mathrm{H}}& \qw \\
	 	\nghost{} & \lstick{} & \qswap & \ctrl{-1} & \qswap & \qw \\
	 	\nghost{} & \lstick{}  & \qswap \qwx[-1] & \qw &  \qswap \qwx[-1] & \qw \\
 }
 }
 \hspace{5mm}\raisebox{-8mm}{;}\hspace{0mm}
 \scalebox{0.75}{
 \Qcircuit @C=1.0em @R=0.0em @!R { 
	 	\nghost{} & \lstick{} & \qw & \ctrl{1} & \qw & \qw \\
	 	\nghost{} & \lstick{} & \qw & \ctrl{1} & \qw & \qw \\
	 	\nghost{} & \lstick{} & \targ & \gate{\mathrm{iZ}} & \targ & \qw \\
	 	\nghost{} & \lstick{}  & \ctrl{-1} & \qw &  \ctrl{-1} & \qw \\
 }
 \hspace{5mm}\raisebox{-7mm}{$\Rightarrow$}\hspace{0mm}
 \Qcircuit @C=1.0em @R=0.0em @!R { 
	 	\nghost{} & \lstick{} & \qw & \ctrl{1} & \qw & \qw \\
	 	\nghost{} & \lstick{} & \qw & \ctrl{1} & \qw & \qw \\
	 	\nghost{} & \lstick{} & \qswap & \gate{\mathrm{iZ}} & \qswap & \qw \\
	 	\nghost{} & \lstick{}  & \qswap \qwx[-1] & \qw &  \qswap \qwx[-1] & \qw \\
 }
 }
\]
 Since in general, using a SWAP instead of a CNOT increases the gate count, we will only choose to do so in case some gate cancellation can be applied in order to return to the original count. For example, the $m_4$ box in \qc{depth_reduce_circ} requires two CNOT gates, and the same applies if the SWAP gate is replaced by a CNOT.

Morever, the orientation of a pair of relative phase Toffoli gates can be chosen to be implemented as  \qc{rftoff_ccv_cv}.a ("upwards"), or its upside-down version ("downwards"). The following options can be used to adjust \qc{mcz_d_new}, in case $n_-=1$. 

\[
\scalebox{0.75}{
\Qcircuit @C=0.5em @R=0.2em @!R { 
    \nghost{{Q}^{j-1}_1 :  } & \lstick{{Q}^{j-1}_1 :  } & \qwl & \ctrlp{1}{C^{j-1}_1} & \multigate{2} {\mathrm{\Delta^{j}}} & \qw & \qw\\
    \nghost{{q}_{j-2} :  } & \lstick{{q}_{j-2} :  } & \qw & \ctrl{2}& \ghost {\mathrm{\Delta^{j}}} & \qw\\
    \nghost{{q_{j-1}} :  } & \lstick{{q_{j-1}} :  } & \qw & \qw & \ghost {\mathrm{\Delta^{j}}} & \qw\\
    \nghost{{q}_{j} :  } & \lstick{{q}_{j} :  } & \qw & \gate{\mathrm{Z}} & \qw & \qw\\
}
\hspace{5mm}\raisebox{-9mm}{=}\hspace{0mm}
\Qcircuit @C=0.5em @R=0.2em @!R { 
    \nghost{{Q}^{j-1}_1 :  } & \lstick{{Q}^{j-1}_1 :  } & \qwl & \qw  & \qw  & \qw  & \ctrlp{2}{C^{j-1}_1\;\;\;\;\;} & \multigate{1} {\mathrm{\Delta^{j-1}}}   & \qw   & \qw & \qw  & \qw \\ 
    \nghost{{q}_{j-2} :  } & \lstick{{q}_{j-2} :  } & \qw & \qw  & \qw & \ctrlt{1}  & \qw & \ghost{\mathrm{\Delta^{j-1}}}   & \qw & \ctrlt{1} & \qw  & \qw \\
    \nghost{{q_{j-1}} :  } & \lstick{{q_{j-1}} :  } & \qw & \qw  & \qw & \targ   & \gate{\mathrm{Z}} & \qw  & \qw & \targ & \qw & \qw  \\
    \nghost{{q}_{j} :  } & \lstick{{q}_{j} :  } & \qw & \qw  & \qw & \ctrl{-1}   & \qw & \qw & \qw  & \ctrl{-1}   & \qw & \qw 
    \gategroup{2}{6}{4}{6}{0.8em}{-} 
    \gategroup{2}{10}{4}{10}{0.8em}{-} 
    \gategroup{1}{7}{3}{8}{0.3em}{--} 
}
\hspace{5mm}\raisebox{-9mm}{=}\hspace{0mm}
\Qcircuit @C=0.5em @R=0.2em @!R { 
    \nghost{{Q}^{j-1}_1 :  } & \lstick{{Q}^{j-1}_1 :  } & \qwl & \qw  & \qw  & \qw  & \ctrlp{2}{C^{j-1}_1\;\;\;\;\;} & \multigate{1} {\mathrm{\Delta^{j-1}}}   & \qw   & \qw & \qw  & \qw \\ 
    \nghost{{q}_{j-2} :  } & \lstick{{q}_{j-2} :  } & \qw & \qw  & \qw & \ctrl{1}  & \qw & \ghost{\mathrm{\Delta^{j-1}}}   & \qw & \ctrl{1} & \qw  & \qw \\
    \nghost{{q_{j-1}} :  } & \lstick{{q_{j-1}} :  } & \qw & \qw  & \qw & \targ   & \gate{\mathrm{Z}} & \qw  & \qw & \targ & \qw & \qw  \\
    \nghost{{q}_{j} :  } & \lstick{{q}_{j} :  } & \qw & \qw  & \qw & \ctrlt{-1}   & \qw & \qw & \qw  & \ctrlt{-1}   & \qw & \qw 
    \gategroup{2}{6}{4}{6}{0.8em}{-} 
    \gategroup{2}{10}{4}{10}{0.8em}{-} 
    \gategroup{1}{7}{3}{8}{0.3em}{--} 
}
 }
 \qcref{upwards_downwards_toff}
\]

For visual convenience, we use the definition of the relative-phase Toffoli from \qc{ccpi_two_OG}.a with $\theta=\pi/2$ as follows.
\[
\scalebox{0.75}{
\Qcircuit @C=1.0em @R=0.7em @!R { 
	 	\nghost{} & \lstick{} & \ctrlt{1} & \qw \\
	 	\nghost{} & \lstick{} & \targ & \qw \\
	 	\nghost{} & \lstick{} & \ctrl{-1} & \qw \\
 }
\hspace{5mm}\raisebox{-6mm}{=}\hspace{0mm}
\Qcircuit @C=1.0em @R=0.2em @!R { 
	 	\nghost{} & \lstick{} & \qw & \ctrl{1} & \qw & \ctrl{1} & \qw & \qw \\
	 	\nghost{} & \lstick{} & \gate{\mathrm{H}} & \rpigatesub{S} & \targ & \rpigatesub{S} & \gate{\mathrm{H}} & \qw \\
	 	\nghost{} & \lstick{} & \qw & \qw & \ctrl{-1} & \qw & \qw & \qw \\
 }
}
\qcref{rptoff_pis}
\]


 A depth reduction can be achieved when a gate is added to an existing layer of gates of its type, such that all are applied in the same time-slot. We can simply calculate the total depth reduction for a given circuit, and subtract it from the circuit's cost (which is equivalent to the depth in case all gates are applied sequentially) in order to achieve the depth of the circuit. For example, one immediate depth reduction of two H gates is achieved for any MCZ-$\Delta$ gate with $n\geq 2$ controls, if \qc{cz_cz_d} is used at the center of the MCZ-$\Delta$ V-chain. In this case, the used H gates can be applied in parallel with other H gates in the full decomposition. Alternatively, if \qc{cciz_deco} is used, there is an immediate depth reduction of two T gates and one CNOT.
 
 We refer to the gates applied before the central $\{Z\}^{j_0}_{C,Q}$ ($j_0\in\{3,4\}$) gate in the MCZ-$\Delta$ V-chain decomposition, in addition to the leftmost CNOT in $\{Z\}^{j_0}_{C,Q}$ as the "first chain" which can be written as $\MCO{X}{j_0}{j_0-1}\left(\prod_{j=j_0+1}^{k_1+2}\MCO{X}{\{q_{j},c^j_-\}}{q_{j-1}}\right)$ from \lem{VZ_chain}. Due to the symmetry of the V-chain, the total depth reduction will be twice the reduction achieved for the first chain. We can choose the orientation (upwards/downwards) of each Toffoli, and strategically replace pairs of CNOT with SWAPs in order to minimize the depth. 
 
 We focus on four cases which include an upward Toffoli, each case is defined according to an additional gate applied before (to the left) or after (to the right) of the upward Toffoli:
 \begin{itemize} 
 \item {\em case 1}: A SWAP or a CNOT is to the left. 
 \item {\em case 2}: A downward Toffoli is to the left. 
 \item {\em case 3}: A downward Toffoli is to the right.
 \item {\em case 4}: A SWAP to the right.
 \end{itemize}
 In \qc{depth_reduce_circ}, we use the five leftmost gates in \qc{V_chain_Z_examples} to demonstrate all of these cases, while using the freedom to choose the orientations of the Toffoli gates and to replace CNOT gates with SWAPs. We simply decompose the Toffoli gates using \qc{rptoff_pis} and mark the gates that correspond to {\em case $\gamma\in [1,4]$} as $m_\gamma$. We define the vector $M_\gamma=(M_\gamma^{CX},M_\gamma^{T},M_\gamma^{H})$, such that $M_\gamma^{g}$ holds the depth reduction  of the gate $g\in\{'CX','T','H'\}$ achieved in {\em case $\gamma$}.

\[
\scalebox{0.75}{
\Qcircuit @C=1.0em @R=0.73em @!R { 
	 	\nghost{} & \lstick{} & \qw & \qw & \qw & \qw & \ctrl{1} & \qw \\
	 	\nghost{} & \lstick{} & \qw & \qw & \qw & \ctrlt{1} & \targ & \qw \\
	 	\nghost{} & \lstick{} & \qw & \qw & \qw & \targ & \ctrlt{-1} & \qw \\
	 	\nghost{} & \lstick{} & \qw & \ctrlt{1} & \qswap & \ctrl{-1} & \qw & \qw \\
	 	\nghost{} & \lstick{} & \ctrl{1} & \targ & \qswap \qwx[-1] & \qw & \qw & \qw \\
	 	\nghost{} & \lstick{} & \targ & \ctrl{-1} & \qw & \qw & \qw & \qw \\
	 	\nghost{} & \lstick{} & \ctrlt{-1} & \qw & \qw & \qw & \qw & \qw \\
 }
  \hspace{5mm}\raisebox{-18mm}{=}\hspace{0mm}
\Qcircuit @C=0.2em @R=0.2em @!R { 
	 	\nghost{} & \lstick{}  & \mbox{\; \; \; $m_1$} & & \qw  & \qw & \qw & \qw & \qw & \qw & \qw & \qw & \qw & \qw  & \qw & \qw & \mbox{$m_3$} &  & \ctrl{1} & \qw & \qw & \qw \\
	 	\nghost{} & \lstick{} & \qw & \ctrl{1} & \qw & \qw & \qw & \qw & \qw & \qw & \qw & \qw & \qw & \qw  & \qw & \ctrl{1} & \gate{\mathrm{H}} & \rpigatesub{S} & \targ & \rpigatesub{S} & \gate{\mathrm{H}} & \qw \\
	 	\nghost{} & \lstick{} & \gate{\mathrm{H}} & \rpigatesub{S} & \qw & \qw & \qw & \mbox{$m_2$} &  & \qw & \qw & \qw & \qw & \qw  & \targ & \rpigatesub{S} & \gate{\mathrm{H}} & \ctrl{-1} & \qw & \ctrl{-1} & \qw & \qw \\
	 	\nghost{} & \lstick{} & \qw & \qw & \qw & \qw & \qw & \ctrl{1} & \qw & \qw & \ctrl{1} & \qw & \qw & \qswap & \ctrl{-1} & \qw & \qw & \qw & \qw & \qw & \qw & \qw \\
	 	\nghost{} & \lstick{} & \qw & \qw & \ctrl{1} & \gate{\mathrm{H}} & \qw & \rpigatesub{S} & \qw & \targ & \rpigatesub{S} & \qw & \gate{\mathrm{H}} & \qswap \qwx[-1] & \qw & \qw & \qw & \qw & \qw & \qw & \qw & \qw \\
	 	\nghost{} & \lstick{} & \gate{\mathrm{H}} & \rpigatesub{S} & \targ & \qw & \qw & \rpigatesub{S} & \gate{\mathrm{H}} & \ctrl{-1} & \qw & \mbox{$m_4$} &  & \qw & \qw & \qw & \qw & \qw & \qw & \qw & \qw & \qw \\
	 	\nghost{} & \lstick{} & \qw & \ctrl{-1} & \qw & \qw & \qw & \ctrl{-1} & \qw & \qw & \qw & \qw & \qw & \qw & \qw & \qw & \qw & \qw & \qw & \qw & \qw & \qw 
   \gategroup{4}{8}{7}{8}{1.8em}{--} 
   \gategroup{4}{11}{5}{14}{0.6em}{--} 
   \gategroup{2}{2}{3}{4}{0.3em}{--} 
   \gategroup{2}{16}{3}{18}{0.3em}{--} 
 }
}
\qcref{depth_reduce_circ}
\]


In {\em case 1}, no gates are applied on the same qubits prior to the $m_1$ gates, and therefore, these can be applied in parallel with gates which are located to their left. The depth reduction $M_{1}=(1,2,1)$ is therefore achieved for {\em case 1} if it does not include the leftmost Toffoli gate.

As in {\em case 2}, four $T$ gates are applied in depth 2, and two CNOT gates in depth 1 as $m_2$, we simply get 
$M_2=(1,2,0)$.
Using the following identities, we get 
$M_3=(0,2,1)$ and $M_4=(0,1,0)$.

\[
\scalebox{0.75}{
 \Qcircuit @C=0.5em @R=0.2em @!R { 
	 	\nghost{} & \lstick{} & \ctrl{1} & \gate{\mathrm{H}} & \rpigatesub{S} & \qw \\
	 	\nghost{} & \lstick{} & \rpigatesub{S} & \gate{\mathrm{H}} & \ctrl{-1} & \qw \\
 }
 \hspace{5mm}\raisebox{-3mm}{=}\hspace{0mm}
\Qcircuit @C=0.5em @R=0.2em @!R { 
	 	\nghost{} & \lstick{} & \targ & \gate{\mathrm{T}} & \gate{\mathrm{H}} & \gate{\mathrm{T}} & \ctrl{1} & \qw\\
	 	\nghost{} & \lstick{} & \ctrl{-1} & \gate{\mathrm{T^\dagger}} & \gate{\mathrm{H}} & \gate{\mathrm{T^\dagger}} & \targ & \qw \\
 }
  \hspace{7mm}\raisebox{-6mm}{\text{  and }}\hspace{2mm}
\Qcircuit @C=0.5em @R=0.2em @!R { 
	 	\nghost{} & \lstick{} & \ctrl{1} & \qw & \qswap & \qw \\
	 	\nghost{} & \lstick{} & \rpigatesub{S} & \gate{\mathrm{H}} & \qswap \qwx[-1] & \qw \\
 }
 \hspace{5mm}\raisebox{-3mm}{=}\hspace{0mm}
\Qcircuit @C=0.5em @R=0.2em @!R { 
	 	\nghost{} & \lstick{} & \targ & \gate{\mathrm{T}} & \ctrl{1} & \gate{\mathrm{H}} & \qw \\
	 	\nghost{} & \lstick{} & \ctrl{-1} & \gate{\mathrm{T^\dagger}} & \targ & \qw & \qw \\
 }
}
\]


We will now provide a simple algorithm, which determines which CNOT gates should be replaced with a SWAP, and the orientation of each Toffoli gate, in order to maximize the depth reduction.

The first chain consists of $d$ sets of sequential Toffoli gates of various sizes $n_{i\in\{1..d\}}$ such that $\sum_{i=1}^d n_i=n'$ and $d\in [1,n']$, with $n':=n-\abs{C^{j_0}}$ as the total number of Toffoli gates in the first chain. A Toffoli set $i$ and the set $i+1$ to its left are separated by at least one CNOT gate, so by the definition of the first chain, we can guarantee that the rightmost Toffoli of any set will have a CNOT gate to its right, and for $i\not = d$ there will be a CNOT to the left of the leftmost Toffoli.

The following algorithm is designed to maximize the achievable depth reduction, prioritizing {\em case 1}, followed by {\em cases 2}\&{\em 3} and finally {\em case 4}.
We define $\alpha_i\in \{0,1\}$ such that the leftmost Toffoli in set $i$ is upwards if $\alpha_i=1$ and downwards otherwise.

\begin{enumerate}
    \item Choose $\alpha_d$, and set $\alpha_i=1$ for any $i<d$. Define the orientation of the leftmost Toffoli of each set accordingly.
    \item A Toffoli, without a defined orientation, located to the right of a Toffoli with a defined orientation will have the opposite orientation.
    \item Each CNOT gate which is applied directly to the right of an upwards Toffoli is replaced with a SWAP gate. 
\end{enumerate}

We are looking for the depth reduction obtained for any scenario. \\We define $D_i(n_i) = (D^{CX}_i(n_i),D^{T}_i(n_i),D^{H}_i(n_i))$ to hold the depth reduction of each gate type, achieved for a set of $n_i$ sequential Toffoli gates. The overall depth reduction will therefore be:

\[
2\left(D_d(n_d)+\sum_{i=1}^{d-1} D_i(n_i)\right)
\]

In a chain of $n_i$ sequential Toffoli gates, if the leftmost Toffoli is upwards, we get $\ceil{\frac{n_i-1}{2}}$  and $\floor{\frac{n_i-1}{2}}$ {\em case 3} and {\em case 2} reductions, respectively, or {\em case 2} and {\em case 3} respectively if the leftmost Toffoli is downwards. 

In addition to these reductions, {\em case 1} is applied for all $i<d$, noting that for these sets it is guaranteed that there is a SWAP/CNOT gate to the left of an upwards Toffoli, and there is always a $m_1$ gate applied on other qubits beforehand. 

Finally, since any set $i$ of Toffoli gates, in which the rightmost Toffoli is upwards, has one SWAP gate to its right, {\em case 4} occurs if $\alpha_i=\beta_i$ with $\beta_i=(n_i \mod 2)$.

We can therefore write

\[
 D_i(n_i)=
 \begin{cases}
     f(n_d,\alpha_d)M_2+f(n_d,1-\alpha_d)M_3 + g(\alpha_d,\beta_d)M_4 & ,i=d\\     
    f(n_i,1)M_2+f(n_i,0)M_3 + \beta_iM_4 + M_1& ,i<d
 \end{cases}
 \]
with
 \[
 g(\alpha_i,\beta_i)=
 \begin{cases}
     \beta_i & \alpha_i=1\\
     1-\beta_i & \alpha_i=0
 \end{cases}
  \text{ ; }
f(n_i,\alpha_i)
 =
 \begin{cases}
     \ceil{\frac{n_i-1}{2}} & \alpha_i=0\\
\floor{\frac{n_i-1}{2}} & \alpha_i=1     
 \end{cases}
 =
 \frac{n_i-\beta_i-2\alpha_i(1-\beta_i)}{2}.
 \]


We can choose $\alpha_d=\beta_d$, and noting that $f(n_i,\alpha)+f(n_i,1-\alpha)=n_i-1$, we achieve:
\begin{equation}\label{eq:depth_reductions_i}
 D_i(n_i)=
 \begin{cases}
     (\tfrac{1}{2},2,\tfrac{1}{2})n_d+
(-\tfrac{1}{2},0,\tfrac{1}{2})\beta_d-
(0,1,1) & ,i=d\\     
    (\tfrac{1}{2},2,\tfrac{1}{2})n_i
+ (\tfrac{1}{2},1,-\tfrac{1}{2})\beta_i+(0,0,1)& ,i<d
 \end{cases}
\end{equation}
We consider the case in which $d=n'$ as the best case, where $n_i=1$ and therefore $\beta_i=1$ for any $i$. We get $D_d(1)=(0,1,0)$ and $D_{i<d}(1)=(1,3,1)$.
The total depth reduction in the best case is therefore $2((0,1,0)+(n'-1)(1,3,1))=(2n'-2,6n'-4,2n'-2)$.

We define a "useful" dirty quasi-ancilla as a qubit $q_j$ satisfying $q_j\not\in C$ and $q_{j-1}\in C\setminus (C^{j_0}\cup c_n)$, i.e. any non-control qubit located directly below a control qubit, as long as the control qubit is not the one closest to the target of the MCZ gate in $\{ Z \}^{k}_{C,Q}$, and is also not control of the central $\{ Z \}^{j_0}_{C,Q}$ from \lem{VZ_chain}.
By the definition of $d$, and of the first chain of the MCZ-$\Delta$ V-chain, we get that the total number of these useful ancilla is $n_\chi = d-1$. In the best case described above, the maximal value of $n_\chi = n'-1$ is required and therefore the resulting reduction can only be achieved for $k\geq 2n+1-\abs{C^{j_0}}$. 

We take this opportunity to mention that the number of $M_1$ reductions is equivalent to $n_\chi$. This specific reduction is unique, as the corresponding $m_1$ gates can completely commute outside of the V-chain structure, and may allow for cost reductions as well as depth in some cases, as we discuss in \apx{useful_ancill}.

Now, we wish to find the depth reductions in the worst case.
We can write $D_d(n_d)
\geq 
an_d+
b_d$ and $D_{i<d}(n_i)\geq 
an_i
+b$ with $a=(\tfrac{1}{2},2,\tfrac{1}{2})$, $b=(0,0,\tfrac{1}{2})$ and $b_d=-(\tfrac{1}{2},1,1)$, given by choosing the worst values of $\beta_i$ for each gate type in \eq{depth_reductions_i}.
The depth reduction in this case can be rewritten as
\[
2\left(an_d+b_d+\sum_{i=1}^{d-1} (an_i+b)\right)
=
2(an'+db+(b_d-b))
=
(n'-1,4n'-2,n'+d-3).
\]
The worst case is achieved for $d=1$, and is equal to
$(n'-1,4n'-2,n'-2)$. 

\section{Cost reductions in favorable qubit orderings}\label{apx:useful_ancill}

While we report the best known cost upper bound for MCX and MCSU2 in LNN connectivity, without any assumption regarding the choice of the control, target, and dirty ancilla qubits, in practice, when these gates are used as part of a quantum circuit, specific choices are made.
In this case, one should implement these gates in the lowest possible gate count, rather than using the upper bound cost. 
Here we present a method which naturally arises from our structure, and allows to reduce the cost significantly in case the location of the control/target/ancilla are favorable. While this cannot be applied in every case, it allows to provide a lower bound for our method, and in fact provides large reductions for randomly chosen qubit orderings. 

Similarly to the ATA case \cite{maslov_advantages_2016,zindorf_efficient_2024}, dirty ancilla qubits can be used to reduce the cost of MC gates implementations.
We discussed depth reductions which can always be achieved in \apx{MCZd_reduce_depth}, and extra reductions achievable in case 'useful' dirty ancilla are available. 
 The total number of dirty ancilla qubits can be written as $n_\chi=k-n+O(1)$ both for MCSU2 and for MCX.
In LNN connectivity, the cost generally increases with $k$, and therefore it may not be beneficial to add more unused qubits to the circuit in every case.
However, if the dirty ancillas are placed in favorable locations, they allow to provide reductions, which may cancel the added cost.
As mentioned in \apx{MCZd_reduce_depth}, a dirty ancilla qubit becomes useful when placed near a control qubit, on the side closer to the target of the MCZ-$\Delta$ gate, as it increases the number of $m_1$ boxes which can be commuted out of the V-chain structure. 
When used for the construction of the MCX/SU2, due to the structure used, there is also an inversed MCZ-$\Delta$ gate applied on the same qubits, such that if no other gates are applied between these MCZ-$\Delta$ gate, on the same qubits as the $m_1$ boxes, two such $m_1$ boxes simply cancel out.

Moreover, since the target of the gates inside the $m_1$ box is always the useful ancilla qubit, these can commute with the entire MC gate, which provides two additional such cancellations. Therefore, each such ancilla reduces the cost by four $m_1$ boxes, which results in a cost reduction of $(4,8,4,0)$. The maximal number of useful dirty ancilla qubits which can allow such cancellations is bounded by the fact that these must be neighbouring below a controls qubit, and therefore the maximal number of such cancellations is bounded by $n+O(1)$, resulting in a maximal reduction of $(4n,8n,4n,0)+O(1)$, if  $k\geq 2n+O(1)$, and a reduction of $(4(k-n),8(k-n),4(k-n),0) +O(1)$ otherwise.

The best gate count which we can acheve therefore scales as $(4k+4n,8n,4n,0)+O(1)$ if $k\geq 2n+O(1)$, and as $(12n,24n-8k,12n-4k,0)+O(1)$, which is always an improvement as $k\geq n+O(1)$. 

Interestigly, the CNOT gate count for $k\leq 2n+O(1)$ scales the same as the best known implementations in ATA without ancilla qubits \cite{zindorf_efficient_2024,khattar_rise_2024}. For $k\geq 2n+O(1)$, the CNOT gate count scales the same as a single long-range CNOT gate applied on a circuit of size $k+n$. 

In case the dirty ancilla qubit is neighboring a control, albeit on the wrong side, SWAP gates can be applied on the MC gate before starting the decomposition process in order to move the ancilla into place, as long as the cost of applying these SWAPs is not larger than the resulting reduction. In fact, it is clear that no more than four CNOT gates are required to SWAP the dirty ancilla with the control, and therefore, the CNOT count does not increase, and the cost of T gates is reduced. Moreover, we find that in practice, by applying simple identities, two additional CNOT gates can be cancelled in this case, making this qubit swap beneficial for the CNOT count as well in many cases. Since the worst case is given when ancilla reductions are not available, these do not improve the upper bound, however, it increases the chance to achieve such cancellations in most practical cases, and reduces the average gate count over random samples of qubit arrangements.

\section{Small MC$\mathbf{\Pi_{\bar{x}}-\Delta}$}\label{apx:pi_smalls}

We now provide a few lemmas which we refer to in \sec{mcxpidelta}. We have mentioned that two Hadamard gates can be used to apply the  transformation $\bar{z}\rightarrow\bar{x}$ from \lem{MCpi_transform}. This was used to show the correctness of cases \circled{4} and \circled{5} in \tab{ccpix_table}, as well as for the decomposition of case \circled{6} in \qc{ccpi_two_OG}.a.

\begin{lemma}\label{lem:MCpi_transform} 
$\MCO{\omega\Pi^{\theta}_{\bar{x}}}{C}{q_t}=\sqO{H}{q_t}\MCO{\omega\Pi^{\theta}_{\bar{z}}}{C}{q_t}\sqO{H}{q_t}$ with $\omega := e^{i\psi}$
for any angles $\theta,\psi$.
\end{lemma}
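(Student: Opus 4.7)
The plan is to reduce the statement to a direct application of \lem{MC_transform}, with the specific choices $\hat{v}=\hat{v}^\theta_{\bar{x}}$, $\hat{v}'=\hat{v}^\theta_{\bar{z}}$, and the ``middle'' vector $\hat{v}_M=\hat{v}_H$. First I would rewrite each Hermitian $\pi$-rotation in the $SU(2)$ form demanded by that lemma: by the very definition of $\Pi$, $\Pi^\theta_{\bar{x}} = iR_{\hat{v}^\theta_{\bar{x}}}(\pi)$ and $\Pi^\theta_{\bar{z}} = iR_{\hat{v}^\theta_{\bar{z}}}(\pi)$. So $\omega\Pi^\theta_{\bar{x}}=\omega' R_{\hat{v}^\theta_{\bar{x}}}(\pi)$ and $\omega\Pi^\theta_{\bar{z}}=\omega' R_{\hat{v}^\theta_{\bar{z}}}(\pi)$ with the same $\omega':=i\omega=e^{i(\psi+\pi/2)}$; the shared global phase is exactly what \lem{MC_transform} requires on both sides.

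The main geometric step is to verify that $\hat{v}_H$ sits in the middle between $\hat{v}^\theta_{\bar{x}}$ and $\hat{v}^\theta_{\bar{z}}$, i.e.\ that $\hat{R}_{\hat{v}_H}(\pi)\,\hat{v}^\theta_{\bar{z}}=\hat{v}^\theta_{\bar{x}}$. Since $\hat{v}_H=(\hat{x}+\hat{z})/\sqrt{2}$, conjugation by the corresponding $\pi$-rotation acts on the coordinate axes as $\hat{x}\leftrightarrow\hat{z}$ and $\hat{y}\mapsto-\hat{y}$. By definition, $\hat{v}^\theta_{\bar{z}}=\cos\theta\,\hat{x}+\sin\theta\,\hat{y}$ and $\hat{v}^\theta_{\bar{x}}=\cos\theta\,\hat{z}-\sin\theta\,\hat{y}$; applying the above axis swap to the first vector yields precisely the second, so the condition holds.

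With these ingredients in hand, \lem{MC_transform} applied to $\omega'R_{\hat{v}^\theta_{\bar{x}}}(\pi)$ gives
\[
\MCO{\omega'R_{\hat{v}^\theta_{\bar{x}}}(\pi)}{C}{q_t}
=\sqO{\Pi(\hat{v}_H)}{q_t}\,\MCO{\omega'R_{\hat{v}^\theta_{\bar{z}}}(\pi)}{C}{q_t}\,\sqO{\Pi(\hat{v}_H)}{q_t}.
\]
Substituting back $\omega'R_{\hat{v}^\theta_{\bar{x}}}(\pi)=\omega\Pi^\theta_{\bar{x}}$, $\omega'R_{\hat{v}^\theta_{\bar{z}}}(\pi)=\omega\Pi^\theta_{\bar{z}}$, and $\Pi(\hat{v}_H)=H$, yields the claim.

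I do not expect any real obstacle: the lemma is essentially a ``corollary template'' of \lem{MC_transform}, with the only bookkeeping being (i) the $i$ that converts between the $\Pi$ and $R(\pi)$ conventions, which is absorbed into the free phase $\omega$, and (ii) the three-line coordinate calculation verifying that $H$ swaps the $\bar{x}$ and $\bar{z}$ planes correctly. Both are straightforward, and the arbitrary angle $\psi$ enters only as a passive label throughout.
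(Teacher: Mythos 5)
Your proof is correct, but it takes a somewhat different route from the paper's. You apply \lem{MC_transform} once, directly to the $\pi$-rotations $\rv{\pi}{\hat{v}^\theta_{\bar{x}}}$ and $\rv{\pi}{\hat{v}^\theta_{\bar{z}}}$ about the tilted axes, absorbing the factor $i$ that converts between $\rpi{\hat{v}}$ and $\rv{\pi}{\hat{v}}$ into the free global phase $\omega$, and you carry the weight of the argument in the geometric check that $\hat{R}_{\hat{v}_H}(\pi)$ swaps $\hat{x}\leftrightarrow\hat{z}$ and negates $\hat{y}$, hence maps $\hat{v}^\theta_{\bar{z}}$ to $\hat{v}^\theta_{\bar{x}}$ (and since this $\pi$-rotation is an involution, verifying the map in that direction is equivalent to the middle-vector condition of \lem{MC_transform}). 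The paper instead applies \lem{MC_transform} to the fixed-axis rotations $\rv{2\theta}{\hat{x}}$ and $\rv{2\theta}{\hat{z}}$, where the condition $\hat{z}=\hat{R}_{\hat{v}_H}(\pi)\hat{x}$ is already recorded in \sec{notation}, then uses \lem{2_rpi} to rewrite $\rv{2\theta}{\hat{x}}=\Pi^\theta_{\bar{x}}Z$ and $\rv{2\theta}{\hat{z}}=\Pi^\theta_{\bar{z}}X$, and finally cancels the extra $\MCO{Z}{C}{q_t}$ and $\MCO{X}{C}{q_t}$ factors via $\MCO{X}{C}{q_t}\sqO{H}{q_t}=\sqO{H}{q_t}\MCO{Z}{C}{q_t}$. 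Your version is more direct and needs no auxiliary cancellation, at the price of an explicit coordinate computation for the tilted axes; the paper's version avoids any new geometry by staying on the $\hat{x},\hat{z}$ axes and leaning on \lem{2_rpi} plus a standard Clifford identity. Both are valid.
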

\begin{proof}
    From \lem{MC_transform}, 
    $\MCO{\omega\rv{2\theta}{\hat{x}}}{C}{q_t}=\sqO{\rpi{\hat{v}_{H}}}{q_t}\MCO{\omega\rv{2\theta}{\hat{z}}}{C}{q_t}\sqO{\rpi{\hat{v}_{H}}}{q_t}$. As mentioned in \sec{notation}, the vector $\hat{v}_H$ is located in the middle between $\hat{x},\hat{z}$ such that
$\hat{z} = \hat{R}_{\hat{v}_H}(\pi)\hat{x}$, and $H=\rpi{\hat{v}_{H}}$. \lem{2_rpi} can be used to rewrite the equation using only $\Pi$ gates as  
$\MCO{\omega\Pi^{\theta}_{\bar{x}}}{C}{q_t}\MCO{Z}{C}{q_t}=\sqO{H}{q_t}\MCO{\omega\Pi^{\theta}_{\bar{z}}}{C}{q_t}\MCO{X}{C}{q_t}\sqO{H}{q_t}$. We finally apply $\MCO{X}{C}{q_t}\sqO{H}{q_t} = \sqO{H}{q_t}\MCO{Z}{C}{q_t}$.
\end{proof}

To show the correctness of cases \circled{2} and \circled{3}, we have mentioned that a MC$\Pi_{\bar{x}}$-$\Delta$ gate can be implemented using two MCH gates and a single $R_{\hat{z}}$ rotation as \lem{CHH_delta}.

\begin{lemma}\label{lem:CHH_delta}
    $\sqO{\Delta}{\{C,q_t\}}\MCO{\Pi^{\theta}_{\bar{x}}}{C}{q_t} = \MCO{H}{C}{q_t}\sqO{R^\dagger_{\hat{z}}(2\theta)}{q_t}\MCO{H}{C}{q_t}$ with $\sqO{\Delta}{\{C,q_t\}} = \MCO{R_{\hat{z}}(2\theta)}{C}{q_t}\sqO{R^\dagger_{\hat{z}}(2\theta)}{q_t}\MCO{Z}{C}{q_t}$
\end{lemma}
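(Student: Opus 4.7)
The plan is to verify the identity by case analysis on the control register, exploiting the fact that both sides act as block-diagonal operators with respect to the computational basis on $C$. First I will parse the claimed definition of $\Delta$: the three factors $\MCO{R_{\hat{z}}(2\theta)}{C}{q_t}$, $\sqO{R^\dagger_{\hat{z}}(2\theta)}{q_t}$, and $\MCO{Z}{C}{q_t}$ all share the same target $q_t$ and are mutually diagonal, so $\sqO{\Delta}{\{C,q_t\}}$ is genuinely a diagonal operator on $\{C,q_t\}$ and may be checked control-configuration by control-configuration.

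Next I will split into two cases. In the \emph{off} case, when some qubit of $C$ is in state $\ket{0}$, both multi-controlled factors in $\Delta$ reduce to identity, so $\sqO{\Delta}{\{C,q_t\}}$ collapses to $\sqO{R^\dagger_{\hat{z}}(2\theta)}{q_t}$; on the LHS the factor $\MCO{\Pi^{\theta}_{\bar{x}}}{C}{q_t}$ is likewise identity, so the LHS equals $R^\dagger_{\hat{z}}(2\theta)$ on $q_t$. On the RHS, both MCH factors are identity, leaving $R^\dagger_{\hat{z}}(2\theta)$ as well. In the \emph{on} case, when $C$ is in $\ket{1\cdots 1}$, the MCZ and MC$R_{\hat{z}}(2\theta)$ factors in $\Delta$ apply $Z$ and $R_{\hat{z}}(2\theta)$ to $q_t$, respectively, and combine (using $R_{\hat{z}}(2\theta)R^\dagger_{\hat{z}}(2\theta)=I$) to give simply $Z$ on $q_t$. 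Using \lem{2_rpi} in the form already stated in the text, $\Pi^\theta_{\bar{x}}=ZR^\dagger_{\hat{x}}(2\theta)$, so the LHS reduces on $q_t$ to $Z\cdot ZR^\dagger_{\hat{x}}(2\theta)=R^\dagger_{\hat{x}}(2\theta)$. For the RHS in the on case, both MCH gates apply $H$ to $q_t$, giving $HR^\dagger_{\hat{z}}(2\theta)H=R^\dagger_{\hat{x}}(2\theta)$ by the standard Bloch-sphere conjugation (equivalently a direct application of \lem{MC_transform} with $n=0$, $\hat{v}=\hat{x}$, $\hat{v}'=\hat{z}$, and middle vector $\hat{v}_H$). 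Both sides agree, completing the proof.

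The content is elementary; the only real obstacle is notational hygiene, namely carefully confirming that the three multi-controlled diagonal factors defining $\Delta$ do mutually commute and that the MCZ gate indeed contributes the sign $-1$ (i.e.\ $Z$ on $q_t$) precisely in the on case, so that it combines with the $Z$ appearing in the $\Pi^\theta_{\bar{x}}=ZR^\dagger_{\hat{x}}(2\theta)$ decomposition to leave a clean $R^\dagger_{\hat{x}}(2\theta)$. Once this bookkeeping is in place, both cases close in one line each.
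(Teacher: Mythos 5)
Your proof is correct and is essentially the paper's argument: the paper likewise reasons by the on/off behaviour of the control register, noting that the right-hand side applies $HR^\dagger_{\hat{z}}(2\theta)H=R^\dagger_{\hat{x}}(2\theta)$ when all controls are on and $R^\dagger_{\hat{z}}(2\theta)$ otherwise, and then uses the same identity $\Pi^{\theta}_{\bar{x}}=ZR^\dagger_{\hat{x}}(2\theta)$ from \lem{2_rpi} to match the stated $\Delta$. The only difference is presentational -- the paper reassembles the case analysis into a product of controlled gates rather than checking the two blocks of both sides separately, as you do.
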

\begin{proof}
    The circuit $\MCO{H}{C}{q_t}\sqO{R^\dagger_{\hat{z}}(2\theta)}{q_t}\MCO{H}{C}{q_t}$ applies $R^\dagger_{\hat{x}}(2\theta)$ on the target if the control set is in state $\ket{11..1}$, and $R^\dagger_{\hat{z}}(2\theta)$ otherwise. Therefore, it is equivalent to  $(\MCO{R_{\hat{z}}(2\theta)}{C}{q_t}\sqO{R^\dagger_{\hat{z}}(2\theta)}{q_t})\MCO{R^\dagger_{\hat{x}}(2\theta)}{C}{q_t}$, such that the part in brackets applies $R^\dagger_{\hat{z}}(2\theta)$ on the target iff the control set is not in state $\ket{11..1}$. Finally, from \lem{2_rpi}, $\MCO{R^\dagger_{\hat{x}}(2\theta)}{C}{q_t} = \MCO{Z}{C}{q_t}\MCO{\Pi^{\theta}_{\bar{x}}}{C}{q_t}$. 
\end{proof}

We have shown a decomposition of the Hermitian gates which are used in case \circled{3}. These are defined as CCH gates up-to a relative phase and one CNOT gate as \qc{mch_circ_circ_deltas} (the same as \qc{mch_circ_circ}.b, providing the equivalent inverted version of this Hermitian gate).
\[
\scalebox{0.7}{
\Qcircuit @C=0.5em @R=0.6em @!R { 
    \nghost{} & \lstick{}  & \ctrltt{1} & \qw\\
    \nghost{} & \lstick{}  & \gate{\mathrm{H}} & \qw\\
    \nghost{} & \lstick{}  & \ctrl{-1} & \qw\\
}
\hspace{5mm}\raisebox{-7mm}{:=}\hspace{0mm}
\Qcircuit @C=0.5em @R=0.6em @!R { 
    \nghost{} & \lstick{} & \qw & \qw  & \ctrl{1} & \multigate{2} {\mathrm{\Delta''}}  & \qw\\
    \nghost{} & \lstick{} & \qw & \targ  & \gate{\mathrm{H}} & \ghost {\mathrm{\Delta''}}   & \qw\\
    \nghost{} & \lstick{} & \qw & \ctrl{-1} & \ctrl{-1} & \ghost {\mathrm{\Delta''}}  & \qw
}
\hspace{5mm}\raisebox{-7mm}{:=}\hspace{0mm}
\Qcircuit @C=0.5em @R=0.6em @!R { 
    \nghost{} & \lstick{} & \qw  & \multigate{2} {\mathrm{{\Delta''}^\dagger}} & \ctrl{1} & \qw & \qw\\
    \nghost{} & \lstick{} & \qw & \ghost {\mathrm{{\Delta''}^\dagger}}`  & \gate{\mathrm{H}} & \targ  & \qw\\
    \nghost{} & \lstick{} & \qw & \ghost {\mathrm{{\Delta''}^\dagger}} & \ctrl{-1} & \ctrl{-1}  & \qw
}
}
\qcref{mch_circ_circ_deltas}
\]

We provided a decomposition of these gates in \qc{mch_circ_circ_decom}.a using \lem{dhx_vsv}.

\begin{lemma}\label{lem:dhx_vsv}
    $\sqO{\Delta''}{\{c_1,c_2,q_t\}}\MCO{H}{\{c_1,c_2\}}{q_t}\MCO{X}{c_2}{q_t} = \MCO{\Pi_V}{c_1}{q_t}\MCO{\Pi_S}{c_2}{q_t}\MCO{\Pi_V}{c_1}{q_t}$ with\\ $\sqO{\Delta''}{\{c_1,c_2,q_t\}}=\MCO{R^\dagger_{\hat{z}}(\tfrac{\pi}{2})}{\{c_1,c_2\}}{q_t}\MCO{R_{\hat{z}}(\tfrac{\pi}{2})}{c_2}{q_t}\MCO{-Z}{\{c_1,c_2\}}{q_t}$
\end{lemma}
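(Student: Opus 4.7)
The plan is to verify the identity by a case analysis on the joint computational-basis state $(c_1,c_2)\in\{0,1\}^2$ of the two control qubits. Because every operator on both sides is controlled by $c_1$, by $c_2$, or by both, in each case the entire controls-target operator collapses to a single-qubit operator acting on $q_t$; pointwise equality on the four basis states then yields the operator identity.

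First I would dispose of the two cases in which $c_2=0$. For $(c_1,c_2)=(0,0)$ and $(1,0)$, the $\MCO{X}{c_2}{q_t}$ factor is inactive, the $\MCO{H}{\{c_1,c_2\}}{q_t}$ factor is inactive, and every factor of $\Delta''$ is inactive as well (each piece requires $c_2=1$, or both controls), so the LHS acts as $I$ on $q_t$. On the RHS, the $\MCO{\Pi_S}{c_2}{q_t}$ factor is off, while the two $\MCO{\Pi_V}{c_1}{q_t}$ factors are either both off or both on; in the latter case they cancel because $\Pi_V$ is Hermitian, so the RHS also acts as $I$.

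For $(c_1,c_2)=(0,1)$ the MC$H$ is still off and $\Delta''$ collapses to its middle factor $\sqO{R_{\hat z}(\pi/2)}{q_t}$; the LHS therefore acts on $q_t$ as $R_{\hat z}(\pi/2)\cdot X$, which by the identity $R_{\hat z}(\pi/2)=\Pi_S X$ noted in the notation section equals $\Pi_S$. The RHS reduces to $I\cdot\Pi_S\cdot I=\Pi_S$, matching. For $(c_1,c_2)=(1,1)$ all factors fire: the contribution from $\Delta''$ is $R^\dagger_{\hat z}(\pi/2)\,R_{\hat z}(\pi/2)\,(-Z)=-Z$, so the LHS acts as $(-Z)\,H\,X$; since $HX=ZH$, this simplifies to $-H$. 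The RHS is $\Pi_V\,\Pi_S\,\Pi_V$, and \lem{h_vsv_svs} already gives $\Pi_V(-\Pi_S)\Pi_V=H$, i.e.\ $\Pi_V\,\Pi_S\,\Pi_V=-H$, matching as well.

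The only real care-point is the case $(1,1)$: I must keep operator composition consistent with the circuit convention (the rightmost-written operator acts first) and track the overall sign carried by $\MCO{-Z}{\{c_1,c_2\}}{q_t}$. Inside $\Delta''$ itself no ordering ambiguity arises, because its three factors all involve only $R_{\hat z}$ rotations and $Z$ and therefore mutually commute. Once the $(1,1)$ case is checked, the four cases exhaust the identity and the lemma follows.
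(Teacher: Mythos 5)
Your proof is correct, but it takes a genuinely different route from the paper's. You verify the identity directly by a four-way case analysis on the computational-basis states of $(c_1,c_2)$: since every factor on both sides is block-diagonal with respect to the controls, it suffices to match the induced single-qubit operator on $q_t$ sector by sector, which you do using only $\Pi_V^2=I$, the relation $R_{\hat z}(\tfrac{\pi}{2})=\Pi_S X$ from the notation section, $HX=ZH$, and \lem{h_vsv_svs} for the $(1,1)$ sector (where both sides give $-H$); your care about the rightmost-acts-first convention and about the mutual commutation of the diagonal factors of $\Delta''$ is exactly where the sign bookkeeping could otherwise go wrong, and you handle it correctly. The paper instead works algebraically in the multi-controlled $\Pi$-formalism: it rearranges the claim into the statement that $\MCO{\Pi_V}{c_1}{q_t}\MCO{\Pi_S}{c_2}{q_t}\MCO{\Pi_V}{c_1}{q_t}\MCO{X}{c_2}{q_t}\MCO{H}{\{c_1,c_2\}}{q_t}$ is a relative-phase gate, commutes $\MCO{\Pi_V}{c_1}{q_t}$ past $\MCO{X}{c_2}{q_t}$ via \lem{ccrv_reg} (picking up $\MCO{-I}{\{c_1,c_2\}}{q_t}$), merges $\MCO{\Pi_S}{c_2}{q_t}\MCO{X}{c_2}{q_t}$ into $\MCO{R_{\hat z}(\tfrac{\pi}{2})}{c_2}{q_t}$ via \lem{2_rpi}, and only then falls back on a four-case check of one sandwich identity before closing with \lem{2_rpi} again. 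The trade-off: your argument is more elementary and self-contained, but it verifies the stated $\Delta''$ rather than deriving it; the paper's derivation is constructive, producing the explicit form of $\Delta''$ from \lem{MC_transform}, \lem{2_rpi} and \lem{ccrv_reg}, and stays within the operator formalism used throughout the paper.
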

\begin{proof}
We show that the following only applies a relative phase \\$\sqO{\Delta''}{\{c_1,c_2,q_t\}}=\MCO{\Pi_V}{c_1}{q_t}\MCO{\Pi_S}{c_2}{q_t}\MCO{\Pi_V}{c_1}{q_t}\MCO{X}{c_2}{q_t}\MCO{H}{\{c_1,c_2\}}{q_t}$.
From \lem{ccrv_reg}, and noting that $\hat{v}_V\perp\hat{x}$ we get \\$\MCO{\Pi_V}{c_1}{q_t}\MCO{X}{c_2}{q_t}=\MCO{X}{c_2}{q_t}\MCO{\Pi_V}{c_1}{q_t}\MCO{-I}{\{c_1,c_2\}}{q_t}$. Then, from \lem{2_rpi} we get $\MCO{\Pi_S}{c_2}{q_t}\MCO{X}{c_2}{q_t} = \MCO{R_{\hat{z}}(\tfrac{\pi}{2})}{c_2}{q_t}$.
So far we have $\sqO{\Delta''}{\{c_1,c_2,q_t\}}=\MCO{\Pi_V}{c_1}{q_t}\MCO{R_{\hat{z}}(\tfrac{\pi}{2})}{c_2}{q_t}\MCO{\Pi_V}{c_1}{q_t}\MCO{-H}{\{c_1,c_2\}}{q_t}$. The following can be verified by checking each of the four options of $c_1,c_2$ being in state $\ket{0}$ or $\ket{1}$:  $\MCO{\Pi_V}{c_1}{q_t}\MCO{R_{\hat{z}}(\tfrac{\pi}{2})}{c_2}{q_t}\MCO{\Pi_V}{c_1}{q_t} = (\MCO{R^\dagger_{\hat{z}}(\tfrac{\pi}{2})}{\{c_1,c_2\}}{q_t}\MCO{R_{\hat{z}}(\tfrac{\pi}{2})}{c_2}{q_t})\MCO{R^\dagger_{\hat{y}}(\tfrac{\pi}{2})}{\{c_1,c_2\}}{q_t}$, using the definition of $\Pi_V$ and \lem{MC_transform}. Finally, from \lem{2_rpi} we get\\ $\MCO{R^\dagger_{\hat{y}}(\tfrac{\pi}{2})}{\{c_1,c_2\}}{q_t}\MCO{-H}{\{c_1,c_2\}}{q_t}=\MCO{-Z}{\{c_1,c_2\}}{q_t}$.
\end{proof}

From the definition in \qc{mch_circ_circ_deltas}, it is clear that the relative phase gates $\Delta',{\Delta'}^\dagger$ cancel out when used in case \circled{3} as these commute with the $R_{\hat{z}}$. 
Therefore, without the added CNOT gates, this circuit implements $\sqO{\Delta}{\{C,q_t\}}\MCO{\Pi^{\theta}_{\bar{x}}}{C}{q_t}$, with the $\Delta$ gate given by \lem{CHH_delta}. Then we can consider the effect of the pair of CNOTs on this gate. When a CNOT gate is commuted with the $\Delta$ gate, it simply transforms it to another relative-phase gate, and when the CNOT is commuted with the $\MCO{\Pi^{\theta}_{\bar{x}}}{C}{q_t}$ gate, as both are applied on the same target, it can be realized from \lem{ccrv_reg} that a CCSU2 gate is added, and since $\hat{x}\perp\hat{v}^{\theta}_{\bar{x}}$, this added gate applies a $-I$ on the target, which is equivalent to a CZ gate applied on the control qubits (\qc{mcx_to_mcz_to_2pi}.a) - only adjusting the relative phase as well.


Finally, we provided the following gate (the same as \qc{mch_circ_circ}.c, providing the equivalent inverted version of this Hermitian gate, along with its decomposition from \qc{ccipi_neww} using the notation in \qc{rftoff_ccv_cv}) that can be used to replace the boxed CC$\Pi_{\bar{x}}$-$\Delta$ gates in \qc{mcpix_delta_new}. A CNOT gate is simply removed from case \circled{5}, as it can commute with the MCZ-$\Delta$ gate, only changing the relative phase.
\[
\scalebox{0.7}{
{
 \Qcircuit @C=0.2em @R=0.37em @!R { 
	 	 \lstick{}  & \ctrltt{1}  & \qw \\
	 	 \lstick{}  & \gate{\mathrm{\Pi_{\bar{x}}^{\theta}}}  & \qw \\
            \lstick{}  & \ctrl{-1} & \qw \\
   }
 \hspace{2mm}\raisebox{-8mm}{:=}\hspace{2mm}
  \Qcircuit @C=0.2em @R=0.37em @!R { 
	 	 \lstick{}  & \ctrl{1}  & \qw  & \qw  \\
	 	 \lstick{}  & \gate{\mathrm{i\Pi_{\bar{x}}^{\theta}}} & \targ  & \qw \\
            \lstick{}  & \ctrl{-1}  & \ctrl{-1} & \qw \\
   }   
 \hspace{2mm}\raisebox{-8mm}{=}\hspace{2mm}
  \Qcircuit @C=0.2em @R=0.37em @!R { 
	 	 \lstick{} & \qw & \ctrl{1}    & \qw  \\
	 	 \lstick{} & \targ & \gate{\mathrm{-i\Pi_{\bar{x}}^{\theta}}}   & \qw \\
            \lstick{}  & \ctrl{-1}  & \ctrl{-1} & \qw \\
   }   
 \hspace{2mm}\raisebox{-8mm}{=}\hspace{2mm}
   \Qcircuit @C=0.2em @R=0.37em @!R{ 
	 	\lstick{} & \qw & \qw & \ctrlt{1} & \qw & \qw  & \qw \\
	 	\lstick{} & \gate{\mathrm{H}} & \gate{\mathrm{R_z^\dagger(\theta)}} & \targ & \gate{\mathrm{R_z(\theta)}} & \gate{\mathrm{H}}   & \qw \\
        \lstick{} & \qw & \qw & \ctrl{-1} & \qw & \qw & \qw \\
 }
}
}
\]

As can be seen, the additional CNOT and $i$ phase intrduced in this case allow to replace the Toffoli gate, which would require a cost of 8 CNOT and 7 T gates \cite{nemkov_efficient_2023,gwinner_benchmarking_2021,nakanishi_quantum-gate_2021,nakanishi_decompositions_2024,cruz_shallow_2023,zindorf_all_2025}, with its relative phase counterpart which only costs 3 CNOT and 4 T gates.

\end{document}